\newtheorem{theorem}{Theorem}
\newtheorem{lemma}[theorem]{Lemma}
\newtheorem{claim}[theorem]{Claim}
\newtheorem{observation}[theorem]{Observation}
\newtheorem{corollary}[theorem]{Corollary}
\newtheorem{remark}{Remark}
\newtheorem{definition}{Definition}
\newcommand{\eps}{\epsilon}
\newcommand{\ra}{\rightarrow}
\def\Bu{{\mathfrak{B}}_u}
\def\Bp{{\mathfrak{B}}_o}
\def\Bh{{\mathfrak{B}}_{rc}}
\def\S{\mathcal{S}}
\def\G{\mathcal{G}}
\def\E{\mathbb{E}}
\def\B{B}
\def\eps{\varepsilon}
\def\epsilon{\varepsilon}
\newcommand{\bR}{\mathbf{R}}
\def\m{\ensuremath{\mathbf{m}}}
\newcommand{\fptas}{\mathsf{FPTAS}}
\newcommand{\fpras}{\mathsf{FPRAS}}
\newcommand\Tree{\mathbb{T}}
\newcommand{\TreeD}{\mathbb{T}_{\Delta}}
\let\OldL\L
\def\a{\ensuremath{\mathbf{a}}}
\def\u{\ensuremath{\mathbf{u}}}
\def\x{\ensuremath{\mathbf{x}}}
\def\y{\ensuremath{\mathbf{y}}}
\def\z{\ensuremath{\mathbf{z}}}
\def\A{\ensuremath{\mathbf{A}}}
\def\B{\ensuremath{\mathbf{B}}}
\def\E{\ensuremath{\mathbf{E}}}
\def\H{\ensuremath{\mathbf{H}}}
\def\I{\ensuremath{\mathbf{I}}}
\def\L{\ensuremath{\mathbf{L}}}
\def\M{\ensuremath{\mathbf{M}}}
\def\S{\ensuremath{\mathbf{S}}}
\def\Sb{\ensuremath{\mathbf{S}}}
\def\Tb{\ensuremath{\mathbf{T}}}
\def\X{\ensuremath{\mathbf{x}}}
\def\Y{\ensuremath{\mathbf{y}}}
\newcommand{\integers}{\mathbb{Z}}
\def\T{\ensuremath{\intercal}}
\def\Sc{\ensuremath{\mathcal{S}}}
\def\Det{\ensuremath{\mathrm{Det}}}
\def\Cb{\ensuremath{\mathbf{C}}}
\def\Db{\ensuremath{\mathbf{D}}}
\def\Vb{\ensuremath{\mathbf{V}}}
\def\Wb{\ensuremath{\mathbf{W}}}
\def\Eb{\ensuremath{\mathbf{E}}}
\def\Rb{\ensuremath{\mathbf{R}}}
\def\Nb{\ensuremath{\mathbf{N}}}
\def\Zb{\ensuremath{\mathbf{Z}}}
\def\Yc{\ensuremath{\mathcal{Y}}}
\def\alphab{\ensuremath{\boldsymbol{\alpha}}}
\def\betab{\ensuremath{\boldsymbol{\beta}}}
\def\pib{\ensuremath{\boldsymbol{\pi}}}
\def\gammab{\ensuremath{\boldsymbol{\gamma}}}
\def\Gc{\ensuremath{\mathcal{G}}}
\def\Go{\ensuremath{\overline{G}}}
\newcommand{\norm}[1]{\left\|#1\right\|}
\def\Det{\ensuremath{\mathrm{Det}}}
\title{Ferromagnetic Potts Model: Refined \#BIS-hardness and Related Results\thanks{A preliminary version of this paper
appeared in 
{\em Approximation, Randomization, and Combinatorial Optimization. Algorithms and Techniques (APPROX/RANDOM 2014)}, p. 677--691, 2014.
}}
\author{Andreas Galanis\thanks{University of Oxford,
  Wolfson Building, Parks Road, Oxford, OX1~3QD, UK. \texttt{andreas.galanis@cs.ox.ac.uk}.
The research leading to these results has received funding from the European Research Council under
the European Union's Seventh Framework Programme (FP7/2007-2013) ERC grant agreement no. 334828. The paper
reflects only the authors' views and not the views of the ERC or the European Commission. The European Union is not liable for any use that may be made of the information contained therein.}
\and
 Daniel \v{S}tefankovi\v{c}\thanks{
Department of Computer Science, University of Rochester,
Rochester, NY 14627.  \texttt{stefanko@cs.rochester.edu}.
Research supported in part by NSF grant CCF-1318374.}
 \and Eric Vigoda\thanks{School of Computer Science, Georgia
Institute of Technology, Atlanta, GA 30332.
\texttt{vigoda@cc.gatech.edu}.
Research supported in part by NSF grant CCF-1217458.}
 \and
 Linji Yang\thanks{Facebook, Inc. \texttt{ljyang@gatech.edu}.}
 }
\begin{document}

\maketitle

\begin{abstract}
Recent results establish for the hard-core model (and more generally for 2-spin antiferromagnetic systems)
that the computational complexity of approximating the partition function on graphs of maximum degree $\Delta$
undergoes a phase transition that coincides with the uniqueness/non-uniqueness phase transition on
the infinite $\Delta$-regular tree.  For the ferromagnetic Potts model we investigate
whether analogous hardness results hold.
Goldberg and Jerrum showed that approximating the partition function of the ferromagnetic Potts model
is at least as hard as approximating the number of independent sets in bipartite graphs,
so-called \#BIS-hardness.  We improve this hardness result by establishing it
for bipartite graphs of maximum degree $\Delta$.  To this end, we first present a detailed
picture for the phase diagram for the infinite $\Delta$-regular tree, giving a refined
picture of its first-order phase transition and establishing the critical temperature
for the coexistence of the disordered and ordered phases.
We then prove for all temperatures below this critical temperature (corresponding
to the region where the ordered phase ``dominates'')
that it is \#BIS-hard to approximate the partition function on bipartite
graphs of maximum degree $\Delta$. As a simple corollary of this result, we obtain that it is \#BIS-hard to approximate the number of $k$-colorings on bipartite graphs of maximum degree $\Delta$ whenever $k\leq \Delta/(2\ln \Delta)$.

The \#BIS-hardness result for the ferromagnetic Potts model uses random bipartite regular graphs as a gadget in the reduction.
The analysis of these random graphs
relies on recent results establishing connections between the maxima of the expectation of their
partition function,  attractive fixpoints of the associated tree recursions, and
induced matrix norms.  In this paper we extend these connections
to random regular graphs for all ferromagnetic models.  Using these connections,
we establish the Bethe prediction for every ferromagnetic spin system
on random regular graphs, which says roughly that the expectation of the log of the partition
function $Z$ is the same as the log of the expectation of $Z$.
As a further consequence of our results, we prove for the ferromagnetic Potts model
that the Swendsen-Wang algorithm is torpidly mixing (i.e., exponentially
slow convergence to its stationary distribution) on random $\Delta$-regular graphs
at the critical temperature for sufficiently large~$q$.

\end{abstract}

\section{Background}

\subsection{Spin Systems}

We study the ferromagnetic Potts model and present tools which are useful for
any ferromagnetic spin system on random regular graphs.  Hence we begin with
a general definition of a spin system.

A spin system is defined, for an $n$-vertex graph $G=(V,E)$ and integer $q\geq 2$,
on the space $\Omega$ of configurations $\sigma$ which are assignments $\sigma:V\rightarrow [q]$.
The model is characterized by its energy or Hamiltonian $H(\sigma)$ which is a
function of the spin assignments to the vertices.
In the classical examples of the Ising ($q=2$) and Potts ($q\geq 3$) models
without external field, the Hamiltonian $H(\sigma)$ is the number of monochromatic edges in $\sigma$.
Each configuration has a weight $w(\sigma) = \exp(-\beta H(\sigma))$
for a parameter $\beta$ corresponding to the ``inverse temperature'' which controls the strength of edge interactions.

In our general setup, a \textit{specification} of a $q$-state spin model is defined by a symmetric $q\times q$ interaction matrix $\B=\{B_{ij}\}_{i,j\in[q]}$
with non-negative entries. For a graph $G=(V,E)$, the weight of a
configuration $\sigma:V\rightarrow [q]$ is given by:
$$
w_G(\sigma)=\prod_{\{u,v\}\in E} B_{\sigma(u),\sigma(v)}.
$$
We will occasionally drop the subscript $G$ when the graph under consideration is clear from context. The {\em Gibbs distribution} $\mu=\mu_G$ is defined as $\mu(\sigma)= w(\sigma)/Z$ where
$Z=Z_G(\B)=\sum_\sigma w(\sigma)$ is the {\em partition function}.  We remark here that many of our results also apply to models with
arbitrary external fields since we will work with $\Delta$-regular graphs
and in this case the external field can be incorporated into the interaction matrix.

The Ising ($q=2$) and Potts ($q>2$) models have interaction matrices with diagonal entries $B:=\exp(-\beta)$ and
off-diagonal entries $1$.  The models are called ferromagnetic if $B> 1$ since then neighboring spins prefer to align
and antiferromagnetic if $B< 1$.
The hard-core model  is an example
of a 2-spin antiferromagnetic system, its interaction matrix is defined so that
 $\Omega$ is the set of independent sets of $G$ and, for activity (external field) $\lambda>0$, a configuration $\sigma\in\Omega$ has
weight $w(\sigma) = \lambda^{|\sigma|}$ (with $|\sigma|$ denoting the cardinality of the independent set $\sigma$).

\subsection{Ferromagnetic Models}\label{sec:bits}
In this paper, we will focus on ferromagnetic models, and pay special attention to the ferromagnetic Potts model. We are not aware of a general definition of ferromagnetic and antiferromagnetic models.
We use the following notions which 
 generalize the analogous notions for 2-spin and for the Potts model.
 The ferromagnetic definition captures that neighboring spins prefer to align
 (see Observation \ref{obs:ferro} below).

To avoid degenerate cases\footnote{If $\B$ is reducible, by a suitable permutation of the labels of the spins, $\B$ can be put in a block diagonal form 
  where each of the blocks is either irreducible or zero. Such a model 
 can be studied  by considering the induced sub-models of each block corresponding to irreducible  symmetric matrices, since the partition function for the original model is simply the sum of the partition functions of each of these
 sub-models. If $\B$ is periodic, and since $\B$ is symmetric, its period must be two. Such a model is only interesting on bipartite graphs (otherwise the partition function is zero),  and the focus of our general results are for random $\Delta$-regular graphs which are non-bipartite with high probability.}, we assume throughout this paper that the interaction matrix $\B$ is ergodic, that is,
irreducible and aperiodic. 
Hence, by the Perron-Frobenius theorem
 (since $\B$ has non-negative entries) the eigenvalue of $\B$ with the largest magnitude is positive.

\begin{definition}
\label{def:ferro}
A model is called {\bf ferromagnetic}
if $\B$ is positive definite. Equivalently we have that all of its eigenvalues are positive
and also that
\begin{equation*}
\B = \hat{\B}^{\T} \hat{\B},
\end{equation*}
for some $q\times q$ matrix $\hat{\B}$.
\end{definition}

In contrast to the above notion of a ferromagnetic system, in \cite{GSV:colorings}
a model is called {\bf antiferromagnetic}
if all of the eigenvalues of $\B$ are negative except for the largest (which, as noted above, is positive). Note, when the number of spins is greater than 2, there are models which are neither ferromagnetic nor antiferromagnetic.

The most alluring aspect of this definition is that for ferromagnetic models,  neighboring vertices prefer to have the same spin. To see this, the following more general inequality is proved in  \cite{GSV:colorings}, which is a simple application of the Cauchy-Schwarz inequality.
\begin{observation}\label{obs:ferro}
Let $\z_1,\z_2\in \mathbb{R}^q_{\geq0}$ with $\norm{\z_1}_1=\norm{\z_2}_1=1$. For ferromagnetic $\B$, we have
\[(\z_1^{\T} \B \z_1)(\z_2^{\T} \B \z_2)\geq (\z_1^\T \B \z_2)^2.\]
Equality holds iff $\z_1=\z_2$. For antiferromagnetic $\B$, the inequality is reversed.
\end{observation}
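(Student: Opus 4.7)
The plan is to reduce the observation to the Euclidean Cauchy--Schwarz inequality by exploiting the spectral structure encoded in the (anti)ferromagnetic hypothesis. In the ferromagnetic case, Definition~\ref{def:ferro} supplies a factorization $\B = \hat{\B}^T \hat{\B}$ for some $q\times q$ matrix $\hat{\B}$. This is the key algebraic handle: each of the three quadratic forms appearing in the claim becomes an inner product of vectors of the form $\hat{\B}\z_i$, and everything else is bookkeeping.

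Concretely, I will set $\w_i := \hat{\B}\z_i$ for $i=1,2$, so that
\[
\z_i^T \B \z_j \;=\; \z_i^T \hat{\B}^T \hat{\B} \z_j \;=\; \langle \w_i, \w_j\rangle.
\]
The desired inequality $(\z_1^T \B \z_1)(\z_2^T \B \z_2) \ge (\z_1^T \B \z_2)^2$ then reads $\|\w_1\|_2^2\,\|\w_2\|_2^2 \ge \langle \w_1, \w_2\rangle^2$, which is exactly Cauchy--Schwarz. For the equality case, Cauchy--Schwarz holds with equality iff $\w_1$ and $\w_2$ are parallel, i.e., $\hat{\B}\z_1 = c\,\hat{\B}\z_2$. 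Since $\B$ is positive definite (all eigenvalues positive) it is invertible, and in particular $\hat{\B}$ must have trivial kernel, since $\hat{\B}\v = 0$ would force $\B\v = 0$. Hence $\z_1 = c\,\z_2$, and the constraints $\z_i \ge 0$ with $\|\z_i\|_1 = 1$ immediately force $c = 1$ and $\z_1 = \z_2$.

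For the antiferromagnetic case the sign is reversed and the argument is more delicate, since now $\B$ is indefinite. My plan is to split $\B$ along its Perron direction. By the Perron--Frobenius assumption recalled in Section~1, the top eigenvalue $\lambda_1 > 0$ has a nonnegative eigenvector $\v_1$ with $\|\v_1\|_2 = 1$, while by definition of antiferromagnetic all other eigenvalues are negative. Therefore we can write $\B = \lambda_1 \v_1 \v_1^T - \M$ with $\M \succeq 0$, and factor $\M = \hat{\M}^T\hat{\M}$. Writing $t_i := \v_1^T \z_i \ge 0$ and $\u_i := \hat{\M}\z_i$, one has $\z_i^T\B\z_j = \lambda_1 t_i t_j - \langle \u_i, \u_j\rangle$, and a direct expansion shows that $(\z_1^T\B\z_1)(\z_2^T\B\z_2) - (\z_1^T\B\z_2)^2$ rearranges into $-\lambda_1\,\|t_1\u_2 - t_2\u_1\|_2^2 + \bigl(\|\u_1\|_2^2\|\u_2\|_2^2 - \langle \u_1,\u_2\rangle^2\bigr)$. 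The first summand is nonpositive (since $\lambda_1 > 0$) and the second is nonnegative (Cauchy--Schwarz on $\hat{\M}$-images), so the two pieces must be reconciled using the normalization $\|\z_i\|_1 = 1$; the main obstacle will be showing that the Perron contribution dominates to yield the reversed inequality. Once the correct direction is obtained, the equality analysis follows from simultaneous equality in both Cauchy--Schwarz applications together with nonnegativity and $\ell_1$-normalization, exactly as in the ferromagnetic case.
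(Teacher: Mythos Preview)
Your ferromagnetic argument is correct and is exactly the Cauchy--Schwarz application the paper points to: writing $\B=\hat{\B}^T\hat{\B}$ turns the inequality into $\|\w_1\|_2^2\|\w_2\|_2^2\ge\langle\w_1,\w_2\rangle^2$, and the equality analysis via injectivity of $\hat{\B}$ and the $\ell_1$-normalization is clean.

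The antiferromagnetic half, however, has a real gap. Your decomposition
\[
(\z_1^T\B\z_1)(\z_2^T\B\z_2)-(\z_1^T\B\z_2)^2
=-\lambda_1\|t_1\u_2-t_2\u_1\|_2^2+\bigl(\|\u_1\|_2^2\|\u_2\|_2^2-\langle\u_1,\u_2\rangle^2\bigr)
\]
is correct, but the two terms genuinely have opposite signs, and the $\ell_1$-normalization alone does not force the first to dominate. (Try to push it: after substituting $\lambda_1 t_i^2=\z_i^T\B\z_i+\|\u_i\|_2^2$ the inequality you need becomes circular, re-requiring $G_{11}G_{22}\le G_{12}^2$.) What you are not using, and what is essential, is that $\B$ has \emph{nonnegative entries}: together with $\z_i\ge 0$ this gives $\z_i^T\B\z_i\ge 0$. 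Once you have that, a signature argument finishes the job cleanly. Let $A=[\z_1\,|\,\z_2]$ and $G=A^T\B A$; if $\z_1,\z_2$ are independent then $G$ is congruent to the restriction of $\B$ to a $2$-dimensional subspace, and since $\B$ has exactly one positive eigenvalue, Poincar\'e separation forces $\mu_2(G)\le\lambda_2(\B)<0$. As $G_{11}=\z_1^T\B\z_1\ge 0$, the matrix $G$ cannot be negative definite, so $\det G\le 0$, i.e.\ $G_{11}G_{22}\le G_{12}^2$. Equality would mean $G$ singular, impossible since $\B$ is invertible and $A$ has rank $2$; hence equality forces $\z_1,\z_2$ parallel, and the normalization gives $\z_1=\z_2$.
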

Observe that if we plug in the above inequality the vectors with a single 1 in the positions $i$ and $j$ respectively, we obtain that any two spins $i,j$ induce a ferromagnetic two-spin system.

As observed in \cite{GSV:colorings}, an appealing aspect of defining  ferromagnetism in terms of the signature of the interaction matrix is that the definition remains invariant in the presence of external fields. More precisely, for $\Delta$-regular graphs, any external field can be incorporated into the interaction matrix by a congruence transformation of the matrix $\B$. The modified interaction matrix has the same number of positive, zero and negative eigenvalues as the original (this follows from the Sylvester's law of inertia), 
and hence it remains ferromagnetic/antiferromagnetic.

\subsection{Known Connections to Phase Transitions}
Exact computation of the partition function is \#P-complete, even for
very restricted classes of graphs \cite{Greenhill}.  
Hence we focus on whether there is a fully-polynomial
(randomized or deterministic) approximation scheme, a so-called $\fpras$ or
$\fptas$. 


One of our goals in this paper is to refine our understanding of connections between 
 approximating the partition function on graphs of
maximum degree $\Delta$
with phase transitions on the infinite $\Delta$-regular tree $\TreeD$.
A phase transition of particular interest in the infinite tree $\TreeD$ is the
uniqueness/non-uniqueness threshold.  Roughly speaking, in the
uniqueness phase, if one fixes a so-called ``boundary condition'' which is
a configuration $\sigma_\ell$ (for instance, an independent
set in the hard-core model) on the vertices distance $\ell$
from the root, then
in the Gibbs distribution conditioned on this configuration,
is the root ``unbiased''?  Specifically, for all sequences $(\sigma_\ell)$
of boundary conditions, in the limit $\ell\rightarrow\infty$, does the root have the
same marginal distribution?  If so, there is a unique Gibbs measure on the infinite tree
and hence we say the model is in the uniqueness region.  If there
are sequences of boundary conditions which influence the root in the limit then we
say the model is in the non-uniqueness region.

For 2-spin antiferromagnetic spin systems, it was shown that there is an $\fptas$
for estimating the partition function for graphs of maximum degree $\Delta$ when
the infinite tree $\TreeD$ is in the uniqueness region \cite{LLY}.  On the other side,
unless NP=RP, there is no $\fpras$ for the partition function for $\Delta$-regular
graphs when $\TreeD$ is in the non-uniqueness region \cite{SlySun} (see also \cite{GSV:2spin}).
Recently, an analogous NP-hardness result was shown for
approximating the number of $k$-colorings on triangle-free $\Delta$-regular graphs
for even $k$ when $k<\Delta$ \cite{GSV:colorings}.
In contrast to the above inapproximability results for antiferromagnetic systems,
for the ferromagnetic Ising model with or without external field \cite{JS:ising} and
for 2-spin ferromagnetic spin systems without external field \cite{GJP} there is an $\fpras$ for all graphs.
The situation for ferromagnetic multi-spin models, the ferromagnetic Potts
being the most prominent example, is more intricate.

\#BIS refers to the problem of computing the number of independent sets in bipartite graphs.
A series of results has presented evidence that there is unlikely to be a polynomial-time
algorithm for \#BIS, since a number of unsolved counting problems have been shown to be \#BIS-easy
(for example, see \cite{DGGJ,BDGJM,CDGJLMR}).  
The growing anecdotal evidence for \#BIS-hardness suggests that the problem is intractable,
though weaker than NP-hardness.
More recently, 
it was shown in \cite{2spinBIS} that for antiferromagnetic 2-spin models it is \#BIS-hard to approximate the partition function
on \emph{bipartite} graphs of maximum degree $\Delta$ when the parameters of the model lie in the non-uniqueness
region of the infinite $\Delta$-regular tree~$\TreeD$. Also, for ferromagnetic 2-spin models with external field, \cite{LLZ} shows \#BIS-hardness for some region of the parameter space (note, the known regions of the parameter space where an FPRAS exists, see \cite{GJP,LLZ,GL}, do not yet completely complement the \#BIS-hardness result). 

\subsection{Outline of Results}

Our focus in this paper is on understanding the behavior of ferromagnetic spin systems.
Our main tools are bipartite random regular graphs and random regular graphs. 
Whether we use bipartite or general graphs depends on the context, and we use
whichever yields the strongest results in that context.  For instance, we establish
\#BIS-hardness for the ferromagenetic Potts model in Section \ref{sec:results-Potts-BIS};
to obtain hardness results on the class of bipartite graphs we use bipartite
random regular graphs as the core of the gadget.    In Section \ref{sec:results-SW} we
establish results for the Swendsen-Wang algorithm; such results are more interesting
for general graphs and hence we prove this result for random regular graphs.

In \cite{GSV:colorings} we established
concentration of the partition function for general spin systems on bipartite random regular graphs.  At first glance
the picture for random regular graphs is more complicated than for their bipartite counterparts
since the connection to trees is less clear for general models, however for ferromagnetic models an analogous
connection holds as we will establish in Section \ref{sec:conn2tree}.  
For ferromagnetic systems, we establish concentration on random regular graphs
as detailed in Section \ref{sec:general-results}.  As a consequence we establish
the so-called Bethe prediction for random regular graphs as discussed in
Section \ref{sec:general-results}.

\section{Results for the Potts Model}

\subsection{\#BIS-hardness for the Potts model}\label{sec:results-Potts-BIS}

Goldberg and Jerrum \cite{GJ:potts} showed
that approximating the partition function of the ferromagnetic Potts model
is \#BIS-hard, hence it appears likely that the ferromagnetic Potts model is inapproximable
for general graphs.  We refine this \#BIS-hardness result for the ferromagnetic Potts model.
We prove that approximating the partition function for the
ferromagnetic Potts model on {\em bipartite } graphs of maximum degree $\Delta$
is \#BIS-hard for temperatures below the appropriate phase transition point in the infinite tree $\TreeD$.
The appropriate phase transition in the Potts model is not the uniqueness/non-uniqueness threshold,
but rather it is the ordered/disordered phase transition which occurs at $B=\Bp$ as explained in
the next section.

Formally, we study the following problem.

{\noindent \bf Name. } \textsc{\#BipFerroPotts}($q$,$B$,$\Delta$).

{\noindent \bf Instance. } A bipartite graph $G$ with maximum degree $\Delta$.

{\noindent \bf Output. } The partition function for the $q$-state Potts model on $G$.

We use the notion of approximation-preserving reductions, denoted as $\leq_{\mathrm{AP}}$, formally 
defined in \cite{DGGJ} (roughly, for counting problems $\#\Pi_1$ and $\#\Pi_2$, $\#\Pi_1\leq_{\mathrm{AP}}\#\Pi_2$ implies that the existence of an FPRAS for $\#\Pi_2$ implies the existence of an FPRAS for $\#\Pi_1$).  We can now formally state our main result.

\begin{theorem}
\label{thm:BIS-potts}
For all $q\geq3$, all $\Delta\geq 3$,
for the ferromagnetic $q$-state Potts model,
for any $B>\Bp$,
\[
 \textsc{\#BIS} \leq_{\mathrm{AP}} \textsc{\#BipFerroPotts}(q,B,\Delta),
\]
where $\Bp$ is given by~\eqref{eq:zmurko}.
\end{theorem}

Theorem~\ref{thm:BIS-potts} has a simple, yet interesting, consequence for the problem of approximately counting $k$-colorings on \emph{bipartite} graphs of maximum degree $\Delta$. Recall that for general graphs of maximum degree $\Delta$, approximately counting $k$-colorings is NP-hard whenever $k<\Delta$ (and $k$ is even) due to the result of \cite{GSV:colorings}. Theorem~\ref{thm:BIS-potts} yields \#BIS-hardness for bipartite $k$-colorings whenever $k\leq \Delta/(2\log \Delta)$. Formally, we are interested in the following problem.
 
{\noindent \bf Name. } \textsc{\#BipColorings}($k$,$\Delta$).

{\noindent \bf Instance. } A bipartite graph $G$ with maximum degree $\Delta$.

{\noindent \bf Output. } The number of proper $k$-colorings of $G$.

We use a relatively simple reduction from the ferromagnetic Potts model to bipartite colorings, first observed in \cite{DGGJ}, which works even for bounded-degree graphs. Theorem~\ref{thm:BIS-potts} then yields the following corollary (which is proved in Section~\ref{sec:biscolorings}).

\begin{corollary}
\label{cor:colorings}
For all $k,\Delta\geq3$, whenever $k\leq \Delta/(2\ln \Delta)$, it holds that
\[
 \textsc{\#BIS} \leq_{\mathrm{AP}} \textsc{\#BipColorings}(k,\Delta),
\]
\end{corollary}
It would be interesting to extend Corollary~\ref{cor:colorings} to all $k<\Delta$.

\subsection{Potts Model Phase Diagram}

To understand the critical point $\Bp$ we need to delve into the nature of the phase transition in the
ferromagnetic Potts model on the infinite $\Delta$-regular tree $\TreeD$.  We focus on how the phase
transition manifests on a random $\Delta$-regular graph.

For a configuration $\sigma\in\Omega$, denote the set of
vertices assigned spin $i$ by $\sigma^{-1}(i)$.
Let $\triangle_q$ denote the $(q-1)$-simplex, where recall that:
\[
\triangle_{t}=\{(x_1,x_2,\hdots,x_t)\in \mathbb{R}^t\,|\, \mbox{$\sum^t_{i=1}$}\,x_i=1\mbox{ and } x_i\geq 0\mbox{ for } i=1,\hdots,t\}.
\]
We refer to $\alphab\in\triangle_q$ as a {\em phase}.
For a phase $\alphab$, denote the set of configurations with frequencies of colors given by $\alphab$ as\footnote{Technically we need to define
$\Sigma^{\alphab} =
\left\{\sigma:V\rightarrow[q]\,\big|\,  |\sigma^{-1}(i)\cap
V| = \hat{\alpha}_i\right\}$,
where $\{\hat{\alpha}_i\}$  are $\{\alpha_i n\}$ rounded in a
canonical fashion so that their sum is preserved (for example using ``cascade rounding").}:
\[ \Sigma^{\alphab} = \left\{\sigma:V\rightarrow\{1,\hdots,q\}\,\big|\,  |\sigma^{-1}(i)| = \alpha_i n\mbox{ for } i=1,\hdots,q\right\},
\]
and denote the partition function restricted to these configurations by:
\begin{equation*}
Z^{\alphab}_G=\mbox{$\sum_{\sigma\in
\Sigma^{\alphab}}$}\, w_G(\sigma).
\end{equation*}
Let $\Gc$ denote the uniform distribution over $\Delta$-regular graphs with $n$ vertices (for $\Delta n$ even).
Denote the exponent of the first moment as:
\begin{equation}\label{eq:defofpsi1}
\Psi_1(\alphab)  := \Psi_1^{\B} (\alphab)  :=\lim_{n\rightarrow \infty}\frac{1}{n}\log\E_{\Gc}\big[Z^{\alphab}_G\big].
\end{equation}
We derive the expression for $\Psi_1$ in Section~\ref{sec:defer-prelim}. Those $\alphab$ which are global maxima of $\Psi_1$ we refer to as {\em dominant} phases.
We will see in Section \ref{sec:conn2tree} that, for all ferromagnetic models, roughly speaking, the candidates for dominant phases correspond to stable fixpoints of the so-called tree recursions.

For the ferromagnetic Potts model, there will be two types of phases with particular interest; we refer to these two types as the disordered phase and the ordered phases.
The disordered phase is the uniform vector $\alphab = (1/q,\dots,1/q)$.
The ordered phase
refers to a phase with one color dominating in the following sense:
 one coordinate is equal to $a>1/q$ and the other $q-1$ coordinates are equal to $(1-a)/(q-1)$.
 Due to the symmetry of the Potts model, when the ordered phase dominates, in fact, the $q$
 symmetric ordered phases dominate.
These ordered phases have a specific $a=a(q,B,\Delta)$ which corresponds to a fixpoint of the
tree recursions.  The exact definition of this marginal $a$ is not important at this stage, and hence we
defer its definition to a more detailed discussion which takes place in Section \ref{sec:Potts} (see equation \eqref{eq:marginala}).

One of the difficulties for the Potts model
is that the nature of the uniqueness/non-uniqueness phase
transition on $\TreeD$ is inherently different from that of the Ising model.
The ferromagnetic Ising model undergoes a second-order phase transition on $\TreeD$ which
manifests itself on random $\Delta$-regular graphs in the following manner.
In the uniqueness region the disordered phase dominates,
and in the non-uniqueness region the 2 ordered phases dominate.

In contrast, the ferromagnetic Potts model undergoes a first-order phase transition
at the critical activity $\Bu$.    For $B<\Bu$ there is a
unique Gibbs measure on $\TreeD$.  For $B\geq \Bu$ there are multiple
Gibbs measures on $\TreeD$, however there is a second critical activity $\Bp$
corresponding to the disordered/ordered phase transition:
for $B\leq \Bp$ the disordered phase dominates, and for $B\geq \Bp$ the
ordered phases dominate (and at the critical point $\Bp$ all of these $q+1$ phases dominate).

We present a detailed picture of the
phase diagram for the ferromagnetic Potts model.
Previously, H\"{a}ggstr\"{o}m \cite{Haggstrom} established the uniqueness threshold
$\Bu$ by studying percolation in the random cluster
representation.  In addition,
Dembo et al. \cite{DMSS,DMS} studied the ferromagnetic
Potts model (including the case with an external field) and proved that for $B>\Bu$,
either the disordered or the $q$ ordered phases are dominant, but they did not
establish the precise regions where each phase dominates.
For the simpler case of the complete graph (known as the Curie-Weiss model),
\cite{CET} detailed the phase diagram.

H\"{a}ggstr\"{o}m \cite{Haggstrom} established
that the uniqueness/non-uniqueness threshold for the infinite tree $\TreeD$ occurs at $\Bu$ which is
the unique value of $B$ for which the following polynomial has a
double root in $(0,1)$:
\begin{equation}\label{eq:uniqueness}
(q-1)x^{\Delta} + \left(2-B-q\right)x^{\Delta-1} + Bx - 1.
\end{equation}

 The disordered phase is dominant in the uniqueness region and continues to dominate until
 the following activity (which was considered by Peruggi et al. \cite{Peruggi}):
\begin{equation}\label{eq:zmurko}
\Bp := \frac{q-2}{ (q-1)^{(1-2/\Delta)} - 1}.
\end{equation}

Finally, H\"{a}ggstr\"{o}m \cite{Haggstrom} considers the following activity $\Bh$,
which he conjectures is a (second) threshold for uniqueness of the random-cluster
model, defined as:
\[
\Bh := 1 + \frac{q}{\Delta - 2}.
\]
Note, $\Bu < \Bp < \Bh$.

We prove the following picture for the phase diagram for the ferromagnetic Potts model in Section \ref{sec:Potts}. 
Note, to prove that a function has a local maximum at a point, a standard
approach is to show that its Hessian matrix is negative definite.  We often need this stronger
condition in our proofs, hence we use the following definition.
Those dominant phases $\alphab$ with negative definite Hessian are called {\em Hessian dominant} phases. Note that dominant phases always exist but a dominant phase can fail to be Hessian (when some eigenvalue of the underlying Hessian is equal to zero). In Section~\ref{sec:conn2tree}, we give an alternative formulation of the Hessian condition in terms of the local stability of fixpoints of the tree recursions.

\begin{theorem}
\label{thm:Potts-diagram-detailed}
For the ferromagnetic Potts model the following holds at activity $B$:
\begin{description}
\item{{\bf $B<\Bu$:}} There is a unique infinite-volume Gibbs measure on $\TreeD$.
The disordered phase
is Hessian dominant, and there are no other
    local maxima of $\Psi_1$.
\item{$\Bu < B < \Bh$:}  The local maxima of $\Psi_1$ are
the disordered phase $\u$ and the $q$ ordered phases
(the ordered phases are permutations of each other).
All of these $q+1$ phases are Hessian local maxima.
Moreover:
\begin{description}
\item{$\Bu < B < \Bp$:} The disordered phase is Hessian dominant.
\item{$B=\Bp$:} Both the disordered phase and the ordered phases are Hessian dominant.
\item{$\Bp< B < \Bh$:} The ordered phases are Hessian dominant.
\end{description}
\item{$B \geq \Bh$:}
The $q$ ordered phases (which are permutations of each other) are Hessian dominant.
For $B>\Bh$ there are no other local maxima of $\Psi_1$.
\end{description}
\end{theorem}

\subsection{Swendsen-Wang Algorithm}\label{sec:results-SW}

An algorithm of particular interest for the ferromagnetic Potts model is the Swendsen-Wang algorithm.
The Swendsen-Wang algorithm is an ergodic Markov chain whose stationarity distribution is the Gibbs distribution.
It utilizes the random-cluster representation to overcome potential ``bottlenecks'' for rapid mixing that are
expected to arise in the non-uniqueness region.
As a consequence of the above picture for the phase diagram on the infinite tree $\TreeD$ and
our tools for analyzing random regular graphs, we can prove torpid mixing of the Swendsen-Wang algorithm at the disordered/ordered phase transition point $\Bp$.
(Torpid mixing means that the mixing time is exponentially slow.)

The Swendsen-Wang algorithm utilizes the random cluster representation (see \cite{Grimmett}) of
the Potts model to potentially overcome bottlenecks that obstruct the simpler Glauber dynamics.
It is formally defined as follows.  From a configuration $X_t\in \Omega$:
\begin{enumerate}
\item Let $M$ be the set of monochromatic edges in $X_t$.
\item For each edge $e\in M$, delete it with
probability $1/B$.
Let $M'$ denote the set of monochromatic edges that were not deleted.
\item In the graph $(V,M')$, for each connected component, choose a color uniformly at random from $[q]$
and assign all vertices in that component the chosen color.
Let $X_{t+1}$ denote the resulting spin configuration.
\end{enumerate}

There are few results establishing rapid mixing of the Swendsen-Wang
algorithm beyond what is known for the Glauber dynamics, see \cite{Ullrich}
for recent progress showing rapid mixing on the
2-dimensional lattice.  However,
there are several results establishing torpid mixing of the Swendsen-Wang algorithm
at a critical value
for the
$q$-state ferromagnetic Potts model: on the complete graph  ($q\geq 3$) \cite{GoreJerrum},
on Erd\"{o}s-R\'{e}nyi random graphs ($q\geq 3$) \cite{CooperFrieze},
and on the $d$-dimensional integer lattice $\integers^d$ ($q$ sufficiently large) \cite{BCFKVV,BCT}.

Using our detailed picture of the phase diagram of the ferromagnetic Potts model
and our generic second moment analysis for ferromagnetic models on random regular graphs
which we explain in a moment,
we establish torpid mixing on random $\Delta$-regular graphs at
 the phase coexistence point~$\Bp$.


\begin{theorem}\label{thm:SW}
For all $\Delta\geq 3$ and $q\geq 2\Delta/\log\Delta$,  with probability $1-o(1)$ over the choice of a
random $\Delta$-regular graph, for the ferromagnetic Potts model with $B=\Bp$,
the Swendsen-Wang algorithm has mixing time $\exp(\Omega(n))$.
\end{theorem}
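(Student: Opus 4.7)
The approach is a standard conductance bottleneck argument exploiting the $(q{+}1)$-fold phase coexistence at $B=\Bp$. By Theorem~\ref{thm:Potts-diagram}, at $B=\Bp$ the disordered phase $\u=(1/q,\ldots,1/q)$ and the $q$ symmetric ordered phases simultaneously globally maximize $\Psi_1$ and are the only local maxima. Combining this with the paper's second moment / Bethe-type concentration machinery for ferromagnetic spin systems on random regular graphs, each of the $q{+}1$ phases satisfies $Z_G^{\alphab}/\E_\Gc[Z_G^{\alphab}]\to 1$ with probability $1-o(1)$ over $G\sim\Gc$, and configurations whose frequency vector is $\Omega(1)$-far from every dominant phase contribute $\mu_G$-mass at most $e^{-\Omega(n)}$. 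Let $A$ be the set of configurations whose color frequencies lie within $\eps$ of $\u$: with probability $1-o(1)$ over $\Gc$, $\mu_G(A)\to 1/(q+1)$ and $\mu_G(A^c)\to q/(q+1)$, both bounded away from $0$ and $1$.

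By the Jerrum--Sinclair conductance lower bound, it suffices to show that uniformly for $x\in A$ the one-step Swendsen--Wang escape probability satisfies $P_G(x,A^c)\leq e^{-\Omega(n)}$. Fix such an $x$; its color classes $V_1,\ldots,V_q$ have sizes $(1\pm\eps)n/q$, and in a typical $G\sim\Gc$ the induced subgraphs $G[V_i]$ have mean degree $\approx\Delta/q$. The edge-deletion step of SW retains each monochromatic edge with probability $1-1/\Bp$, producing a random subgraph inside each $V_i$ of expected degree $(1-1/\Bp)\Delta/q$. A direct computation with $\Bp=(q-2)/[(q-1)^{1-2/\Delta}-1]$ yields $(1-1/\Bp)\Delta/q \sim 2(\ln q)/q$, which is strictly below $1$ in the regime $q\geq 2\Delta/\log\Delta$. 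A subcritical branching-process / exploration argument then bounds the largest component of the subsampled subgraph inside each $V_i$ by $o(n)$ except with probability $e^{-\Omega(n)}$. Conditioned on this event, recoloring each component independently and uniformly in $[q]$ yields final color frequencies within $\eps$ of $\u$ except with exponentially small probability, giving $P_G(x,A^c)\leq e^{-\Omega(n)}$ as required.

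The main technical obstacle is the uniform percolation control on $G[V_i]$ across all $x\in A$ and a typical $G\sim\Gc$. Since $G[V_i]$ is not itself a random $\Delta$-regular graph, I would use a contiguity/coupling argument between the restriction of a configuration-model $\Delta$-regular graph to a random density-$1/q$ vertex subset and a degree-bounded random graph of matching mean degree, for which subcritical branching-process tail bounds are standard. The crucial quantitative step is verifying that the hypothesis $q\geq 2\Delta/\log\Delta$ forces the effective offspring mean strictly below $1$ at $B=\Bp$ with enough slack to yield exponentially small cluster tails; the remaining concentration estimates and the initial $(q{+}1)$-phase decomposition are supplied by the paper's generic ferromagnetic second moment machinery.
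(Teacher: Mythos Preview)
Your proposal contains a genuine error in the percolation step. You claim that for $x\in A$ (frequencies near $\u$) the induced graphs $G[V_i]$ have mean degree $\approx\Delta/q$. That is true for a \emph{uniformly random} balanced coloring of a typical $G$, but it is false for the colorings $x$ that carry the Gibbs mass of $A$. For $B>1$ the Gibbs distribution conditioned on the disordered phase favors monochromatic edges: the typical number is $E_{\u}n=\tfrac{\Delta}{2}\tfrac{B}{q+B-1}\,n$, so by symmetry each $G[V_i]$ has mean degree $\Delta B/(q+B-1)$, not $\Delta/q$. After retaining each edge with probability $1-1/B$ the mean degree becomes $\Delta(B-1)/(q+B-1)$, and this is \emph{not} below $1$ across the range of the theorem. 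For instance at $\Delta=3$, $q=6$ (which satisfies $q\ge 2\Delta/\log\Delta$) one has $\Bp=4/(5^{1/3}-1)\approx 5.63$ and the retained mean degree is $3(\Bp-1)/(q+\Bp-1)\approx 1.31>1$; the subsampled graphs are supercritical and your small-component conclusion fails. The sentence ``the hypothesis $q\ge 2\Delta/\log\Delta$ forces the effective offspring mean strictly below~$1$'' is therefore incorrect, and the claimed uniform bound $P_G(x,A^c)\le e^{-\Omega(n)}$ is not established.

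The paper takes the opposite bottleneck and avoids percolation altogether. It works with the \emph{ordered} set $M$ (defined by both color frequencies near an ordered phase \emph{and} monochromatic-edge count near $E_{\m}n$). For $\sigma\in M$, after the deletion step the number of retained edges is, by a Chernoff bound, $(1-1/B)(E_{\m}\pm\eps)n$. All retained edges are monochromatic after recoloring, so the new configuration has at least that many monochromatic edges. The key inequality (their Claim~\ref{cl:technicalineq}, which is exactly where $q\ge 2\Delta/\log\Delta$ is used) is $E_{\m}(1-1/B)>E_{\u}$, so the result cannot lie in $U$; escape from $M$ into $T$ is handled by reversibility and the bound $\mu_G(T)\le e^{-Cn}$. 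Thus the paper needs only an edge-count comparison plus a Chernoff bound, not any control on component sizes. If you wish to repair your argument, the cleanest fix is to switch the bottleneck to $M$ and replace the branching-process step by this monochromatic-edge counting; alternatively you must redo the degree calculation with $E_\u$ and then show subcriticality, which, as the example above shows, is not available in the full parameter range.
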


We believe that the lower bound on $q$ in Theorem~\ref{thm:SW} is an artifact of our proof, see Remark~\ref{rem:lowerboundq} in Section~\ref{sec:swendsen-wang} for details.


\section{Results for Ferromagnetic Models}

\subsection{Second Moment and Bethe prediction}\label{sec:general-results}

We analyze the Gibbs distribution on random $\Delta$-regular graphs using second
moment arguments.  The challenging aspect of the second moment is determining the
phase that dominates, as we will describe more precisely momentarily.
In a straightforward analysis of the second moment, this reduces to
an optimization problem over $q^4$ variables for a complicated expression.
Even for $q=2$ tackling this requires significant
effort (see, for example, \cite{MWW} for the hard-core model).

In a recent paper \cite{GSV:colorings}
we analyzed antiferromagnetic systems on
{\em bipartite} random $\Delta$-regular graphs, to use as gadgets for
inapproximability results.
In that work we presented a new approach for simplifying the
analysis of the second moment for antiferromagnetic models using the theory of matrix
norms.
In this paper we extend that approach using the theory of matrix norms to analyze the second moment
for random $\Delta$-regular graphs (non-bipartite) for ferromagnetic systems.
We obtain a short, elegant proof that  the exponential order of the second moment is twice the exponential order of  the first moment.

Denote the leading term of the second moment as
\begin{equation}\label{eq:defofpsi2}
\Psi_2(\alphab) := \Psi_2^{\B}(\alphab)  := \lim_{n\rightarrow \infty}\frac{1}{n}\log\E_{\G}\big[\left(Z^{\alphab}_G\right)^2\big].
\end{equation}
Our main technical result is the analysis of the second moment for ferromagnetic models.
We will relate the maximum of the second moment to the maximum of the first moment.
To analyze the second moment we need to determine the phase $\alphab$ that maximizes $\Psi_2$.
We will first show how to reexpress the critical points of $\Psi_1$ in a form that can be readily
expressed in terms of matrix norms (see Section \ref{sec:reexpress}).
Then, using the Cholesky decomposition of the interaction matrix $\B$ and properties of matrix norms
we will show that the second moment is maximized at a phase which is a tensor product of
the dominant phases of the first moment.   This results in the following theorem, which is proved
in Section \ref{sec:2nd-moment}.

\begin{theorem}\label{thm:copo}
For a ferromagnetic model with interaction matrix $\B$,
\[\max_{\alphab} \Psi_2(\alphab) = 2\max_{\alphab}\Psi_1(\alphab).\]
More specifically,
for dominant $\alphab$, $\Psi_2(\alphab)=2\Psi_1(\alphab)$.
\end{theorem}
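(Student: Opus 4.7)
The plan is to identify $\max_\alphab \Psi_2(\alphab)$ and $\max_\alphab \Psi_1(\alphab)$ as matrix-norm quantities built from the Cholesky factor $\hat{\B}$ (available by Definition~\ref{def:ferro}, since ``ferromagnetic'' means positive definite), and then to invoke multiplicativity of this norm under tensor products.

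First I would reinterpret the second moment as a first moment for an enlarged spin system. A pair of configurations $(\sigma,\tau)$ with $\sigma,\tau\in\Sigma^\alphab$ is a single configuration of the $q^2$-state model on the same graph with interaction matrix $\B\otimes\B$, whose weight at edge $uv$ is $B_{\sigma(u)\sigma(v)}B_{\tau(u)\tau(v)}=(\B\otimes\B)_{(\sigma(u),\tau(u)),(\sigma(v),\tau(v))}$. Grouping such pairs by their joint empirical type $\Gammab$ (a $q\times q$ nonnegative matrix with row- and column-sums both equal to $\alphab$) and then taking the outer maximum over $\alphab$ gives
$$\max_\alphab\Psi_2(\alphab) = \max_{\Gammab\in\triangle_{q^2}:\,\Gammab\oneb=\Gammab^T\oneb}\Psi_1^{\B\otimes\B}(\Gammab) \le \max_{\Gammab\in\triangle_{q^2}}\Psi_1^{\B\otimes\B}(\Gammab),$$
and the tensored model is itself ferromagnetic, since $\B\otimes\B = (\hat{\B}\otimes\hat{\B})^T(\hat{\B}\otimes\hat{\B})$.

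Next I would use the reexpression of critical points of $\Psi_1$ from Section~\ref{sec:reexpress} to cast $\max_\alphab\Psi_1^{\B}(\alphab)$ as a function that depends on $\hat{\B}$ only through an induced matrix norm $\|\hat{\B}\|_\star$ (for instance an $\ell_p\!\to\!\ell_{p'}$ norm dictated by the Lagrangian/stationarity conditions). Because this reexpression is a formal consequence of those equations, applying it verbatim to $\hat{\B}\otimes\hat{\B}$ yields $\max_\Gammab \Psi_1^{\B\otimes\B}(\Gammab)$ in terms of $\|\hat{\B}\otimes\hat{\B}\|_\star$. The identity $\|\hat{\B}\otimes\hat{\B}\|_\star = \|\hat{\B}\|_\star^{2}$ (classical for the spectral norm, and obtainable via Cauchy--Schwarz/H\"older for the relevant induced norms) then delivers $\max\Psi_2 \le 2\max\Psi_1$ after taking logarithms.

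For the reverse inequality and the ``more specifically'' clause, take a dominant $\alphab^*$ and let $\Gammab^*:=\alphab^*(\alphab^*)^T$. Then $H(\Gammab^*)=2H(\alphab^*)$ by independence, and the identity $(\hat{\B}\otimes\hat{\B})\gammab^* = (\hat{\B}\alphab^*)\otimes(\hat{\B}\alphab^*)$ yields $\gammab^{*T}(\B\otimes\B)\gammab^* = ((\alphab^*)^T\B\alphab^*)^2$, so (after checking that the remaining Bethe correction terms are additive on a product distribution) $\Psi_1^{\B\otimes\B}(\Gammab^*) = 2\Psi_1^{\B}(\alphab^*) = 2\max\Psi_1$. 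Since $\Gammab^*$ has equal row- and column-marginals, it contributes to $\Psi_2(\alphab^*)$, giving $\Psi_2(\alphab^*)\ge 2\max\Psi_1$; combined with the upper bound this forces equality throughout. The main obstacle I anticipate is confirming that the reexpression genuinely collapses $\max\Psi_1$ onto a tensor-multiplicative matrix norm of $\hat{\B}$: the entropy part of $\Psi_1$ lives on vertices rather than edges, so one needs a careful Lagrangian/Perron--Frobenius argument to show that at any critical $\alphab^*$ the vertex entropy is controlled by the same norm quantity that governs the edge term, and in a manner compatible with the tensor product.
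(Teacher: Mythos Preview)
Your proposal is correct and follows essentially the same approach as the paper: identify $\max_\alphab\Psi_1(\alphab)$ with (the logarithm of) an induced matrix norm of $\hat{\B}$, view $\Psi_2$ as the first moment of the tensored model $\B\otimes\B$, drop the marginal constraints on $\Gammab$, and invoke tensor multiplicativity of the norm. The paper makes explicit what you left as ``for instance'': the norm is $\|\hat{\B}\|_{\Delta/(\Delta-1)\to 2}$, the multiplicativity $\|\hat{\B}\otimes\hat{\B}\|_{p\to 2}=\|\hat{\B}\|_{p\to 2}^2$ is Bennett's result for $p\le q'$, and the ``obstacle'' you anticipate (absorbing the vertex entropy into the norm expression) is exactly Lemma~\ref{new:zako1}, which shows $\Phi_1(\bR)=\Psi_1(\alphab)$ at corresponding critical points; the paper dispatches the reverse inequality in one line as ``trivial'' (it is just $\E[(Z^\alphab)^2]\ge(\E[Z^\alphab])^2$), rather than unfolding the product structure of $\Gammab^*$ as you do.
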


Combining Theorem~\ref{thm:copo} with an elaborate variance analysis known as the small subgraph conditioning method allows us to obtain a lower bound on $Z_G^{\alphab}$ which matches its expectation up to a polynomial factor (see Lemma~\ref{lem:smallgraph}). In particular, we verify the so-called \textit{Bethe prediction} (see \cite{DMS,DMSS}) for general ferromagnetic models on random $\Delta$-regular graphs, which is captured in our setting by equation \eqref{eq:betheprediction} in the following theorem (the proof is in Section~\ref{sec:proofprediction}). Our interest in the quantity $\E_{\Gc}[\log Z_G]$ stems from the fact that it gives information about the typical configurations on a random regular graph and hence Theorem~\ref{thm:prediction} gives its value in terms of a much simpler quantity, $\log\E_{\Gc}[ Z_G]$, which can be calculated much more easily, see Section~\ref{sec:defer-prelim}. (The equality in \eqref{eq:betheprediction} is closely related to the cavity method, see \cite{Mezard}.)

\begin{theorem}\label{thm:prediction}
Let $\B$ specify a ferromagnetic model. Then, if there exists a Hessian dominant phase, it holds that
\begin{equation}\label{eq:betheprediction}
\lim_{n\rightarrow\infty}\frac{1}{n}\E_{\Gc}[\log Z_G]=\lim_{n\rightarrow\infty}\frac{1}{n}\log\E_{\Gc}[ Z_G].
\end{equation}
\end{theorem}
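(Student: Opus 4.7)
The plan is to sandwich $\tfrac{1}{n}\E[\log Z_G]$ between matching asymptotic upper and lower bounds, both equal to $\Psi_1^{*} := \max_{\alphab}\Psi_1(\alphab)$. For the upper bound I apply Jensen's inequality, $\E[\log Z_G]\le \log \E[Z_G]$. Decomposing $Z_G=\sum_{\alphab} Z^{\alphab}_G$ over the $O(n^{q-1})$ lattice phases $\alphab$ (those whose components are multiples of $1/n$ in the discretized simplex $\triangle_q$), linearity of expectation and the definition \eqref{eq:defofpsi1} give $\tfrac{1}{n}\log\E[Z_G]=\Psi_1^{*}+O(\tfrac{\log n}{n})\to \Psi_1^{*}$.

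For the matching lower bound, fix a Hessian dominant phase $\alphab^{*}$, which exists by hypothesis. Theorem~\ref{thm:copo} gives $\E[(Z_G^{\alphab^{*}})^2] = e^{o(n)}(\E[Z_G^{\alphab^{*}}])^2$, which is exactly the input that the small subgraph conditioning method (Lemma~\ref{lem:smallgraph}) requires. The conclusion is that $Z_G^{\alphab^{*}}/\E[Z_G^{\alphab^{*}}]$ converges in distribution to a strictly positive random variable, so in particular $\tfrac{1}{n}\log Z_G^{\alphab^{*}}\to \Psi_1(\alphab^{*})=\Psi_1^{*}$ in probability. Since $Z_G\ge Z_G^{\alphab^{*}}$, this yields $\tfrac{1}{n}\log Z_G \ge \Psi_1^{*}-o_p(1)$. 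To convert this into a bound on $\tfrac{1}{n}\E[\log Z_G]$ I use the deterministic bound $|\tfrac{1}{n}\log Z_G|\le C(\B,\Delta)$: an upper bound $\log Z_G\le n\log q +\tfrac{\Delta n}{2}\log\max_{i,j}B_{ij}$ is immediate, while a lower bound follows by retaining a single configuration of strictly positive weight, whose existence is guaranteed by the ergodicity of $\B$. Bounded convergence then turns in-probability convergence into the limit of expectations, and combined with the Jensen upper bound yields \eqref{eq:betheprediction}.

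The main obstacle is the invocation of Lemma~\ref{lem:smallgraph}: the exponential-order second moment bound from Theorem~\ref{thm:copo} alone would only yield $Z_G\ge \tfrac{1}{2}\E[Z_G]$ with positive probability via Paley--Zygmund, which is too weak to control $\E[\log Z_G]$ up to $o(n)$ error. The small subgraph conditioning method corrects for this by identifying the constant-order discrepancy in the ratio $\E[Z^2]/\E[Z]^2$ with the contribution of short cycles, and showing that after conditioning on cycle counts the residual variance is negligible. Carrying this out rigorously demands computing the joint asymptotic distribution of $Z_G^{\alphab^{*}}$ and the short-cycle counts in the configuration model, and here the Hessian dominance hypothesis is essential: it guarantees that the local Gaussian integral at $\alphab^{*}$ is non-degenerate, so the Laplace-method prefactor in $\E[Z_G^{\alphab^{*}}]$ is a finite positive constant and the cycle corrections can be extracted cleanly. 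Contributions from other phases $\alphab\ne\alphab^{*}$ are exponentially suppressed since $\Psi_1(\alphab)<\Psi_1^{*}$, so they do not disturb the concentration argument.
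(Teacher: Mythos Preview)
Your proof is correct and follows essentially the same route as the paper: Jensen for the upper bound, Lemma~\ref{lem:smallgraph} applied at a Hessian dominant phase for the in-probability lower bound on $\tfrac{1}{n}\log Z_G$, and a deterministic lower bound on $\tfrac{1}{n}\log Z_G$ to pass to expectations. One small imprecision: the deterministic lower bound does not follow from ergodicity of $\B$ alone (an ergodic $\B$ can have zero diagonal, and then a self-loop in the configuration model forces $Z_G=0$); rather, it is the ferromagnetic hypothesis that makes $\B$ positive definite and hence forces $B_{ii}>0$ for every $i$, so the constant configuration $\sigma\equiv i$ gives $\tfrac{1}{n}\log Z_G \ge \tfrac{\Delta}{2}\log\max_i B_{ii}$, which is exactly the bound the paper uses.
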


Note that for a ferromagnetic model the interaction matrix $\B$ is positive definite and hence the entries on the diagonal are all positive. Thus $Z_G$ is always positive for every graph $G$ (and hence $\log Z_G$ in \eqref{eq:betheprediction} is well-defined).

Theorem~\ref{thm:prediction} holds for all ferromagnetic models at any temperature.  Dembo et al. \cite{DMS} consider general factor models on graph sequences converging locally to trees and verify the Bethe prediction when the underlying tree is in the uniqueness regime. In \cite{DMSS}, the case of the ferromagnetic Potts model (with external field) is considered for graph sequences converging locally to trees and they obtain a general formula for the logarithm of the partition function.

Perhaps the most important conceptual content of Theorem~\ref{thm:prediction} is that it shows that \emph{all} ferromagnetic models, at any temperature, do not exhibit the complex behavior that other spin models, such as colorings or the antiferromagnetic Potts model, exhibit on random (regular) graphs. In particular, when the equality in \eqref{eq:betheprediction} fails, we have the so-called condensation regime, and in that case calculating $\frac{1}{n}\E_{\Gc}[\log Z_G]$ is a far more intricate task (see the recent works \cite{Bapst, SSZ}).

Theorem~\ref{thm:prediction} can be extended to general models (not necessarily ferromagnetic) on random $\Delta$-regular graphs under the stronger assumption that  there is a unique semi-translation invariant Gibbs measure on $\TreeD$. In this setting, one also obtains the analogue of Theorem \ref{thm:copo} and as a consequence concentration for $Z_G^{\alphab}$ for the (unique) dominant phase $\alphab$, which can be used to verify (in complete analogy) the  Bethe prediction, see Section~\ref{sec:bethe-uniqueness} for more details.

\subsection{Connection to Tree Recursions}\label{sec:conn2tree}

As a consequence of Theorem \ref{thm:copo}, to analyze ferromagnetic models on random
regular graphs, one only needs to analyze the first moment.  To simplify the analysis of the
first moment, we establish the following connection to the so-called tree recursions.
An analogous connection was established in \cite{GSV:colorings} for antiferromagnetic models
on random bipartite $\Delta$-regular graphs.

A key concept are the following recursions corresponding to the partition
function on trees, and hence we refer to them as the (depth one) {\em tree recursions}:
\begin{equation}\label{kkrtko}
\widehat{R}_i \propto \Big(\sum_{j=1}^q B_{ij}
R_j\Big)^{\Delta-1}
\end{equation}
The fixpoints of the tree recursions are those $\bR=(R_1,\dots,R_q)$ such that:
$
\widehat{R}_i \propto R_i
\mbox{ for all }i\in [q].  $
We refer to a fixpoint $\bR$ of the tree recursions as {\em Jacobian attractive} if  the Jacobian
at $\bR$ has spectral radius less than $1$.
We prove the following theorem detailing the connections between the tree recursions
and the critical points of the partition function for random regular graphs.

\begin{theorem}\label{thm:connection}
Assume that the model is ferromagnetic. Jacobian attractive fixpoints of the (depth one) tree recursions
are in one-to-one correspondence with the Hessian local maxima of $\Psi_1$.
\end{theorem}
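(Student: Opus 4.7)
The plan is to construct an explicit bijection $\Phi$ between fixpoints of the tree recursion $T_i(\bR)\propto(\B\bR)_i^{\Delta-1}$ (normalized to $\triangle_q$) and critical points of $\Psi_1$ in the interior of $\triangle_q$, and then to show that under $\Phi$ the Jacobian attractive fixpoints correspond exactly to critical points with negative-definite Hessian. The algebraic inputs are the closed-form expression for $\Psi_1$ from Section~\ref{sec:defer-prelim} and the matrix-norm reformulation of its critical-point equations from Section~\ref{sec:reexpress}.

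For the bijection, I would exploit the Bethe-style Lagrangian structure of the first moment. Unpacking $\E_{\Gc}[Z^\alphab_G]$ via the configuration model, $\Psi_1(\alphab)$ equals an entropy term plus an optimum over pair-color distributions $p_{ij}$ subject to marginal constraints $\sum_j p_{ij}=\alpha_i$. Introducing Lagrange multipliers $R_i$ yields an optimizer $p^*_{ij}\propto B_{ij}R_iR_j$ together with the identity $\alpha_i\propto R_i(\B\bR)_i$. Substituting into the first-order condition $\partial_i\Psi_1=\mathrm{const}$ and applying the envelope theorem reduces it, after algebra, to $R_i\propto(\B\bR)_i^{\Delta-1}$---the tree-recursion fixpoint. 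An inverse recovers $\bR$ (uniquely up to positive scaling, killed by normalization on $\triangle_q$) from $\alphab$, completing the bijection.

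For the Hessian--Jacobian correspondence, I would compute both at the corresponding points and use the ferromagnetic Cholesky factorization $\B=\hat{\B}^T\hat{\B}$ to symmetrize. The Jacobian of the normalized recursion at a fixpoint factors as a symmetric ``transport'' piece built from $\B$ and diagonal rescalings by $\bR$, minus a rank-one projection killing the scaling direction; the Hessian of $\Psi_1$ at $\Phi(\bR)$ restricted to $\{\sum_i x_i = 0\}$ admits a parallel factorization in $\alphab$ and $\B$. A diagonal conjugation involving $\hat{\B}$ and a diagonal factor in $\alphab$ renders both matrices genuinely symmetric, with their spectra related by an affine map of the form $\mu = c(\alphab)(1-\lambda)$ where $c(\alphab)>0$. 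Under this identification, ``spectral radius of the Jacobian strictly below $1$'' translates directly to ``$-\nabla^2\Psi_1$ is positive definite,'' i.e., $\nabla^2\Psi_1$ is negative definite.

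The main obstacle is this spectral matching step. Two subtleties stand out. First, one must carefully quotient out the trivial direction on each side---the scaling symmetry of the unnormalized recursion (a trivial eigenvalue $1$ of the unrestricted Jacobian) and the direction normal to the simplex for the Hessian---and verify that the resulting quotients are in natural correspondence. Second, ``Jacobian attractive'' a priori only requires spectral radius below $1$ in the complex sense, which is strictly weaker than the real inequality on eigenvalues needed for negative definiteness; the ferromagnetic assumption is exactly what closes this gap, since positive-definiteness of $\B$ is what lets the diagonal conjugation yield an honest symmetric matrix with real spectrum, and the non-negativity structure coming from the factorization then makes ``$|\lambda|<1$'' and ``$\mu > 0$'' equivalent under the affine identification.
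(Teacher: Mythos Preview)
Your overall strategy matches the paper's: first the bijection fixpoints $\leftrightarrow$ critical points (this is Lemma~\ref{new:zako1}, and your Bethe--Lagrangian derivation is a fine way to obtain it), then a spectral comparison of Jacobian and Hessian in well-chosen coordinates, with ferromagnetism forcing non-negative eigenvalues so that the two sign conditions coincide.

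The spectral step, however, contains a concrete error: the relationship between Jacobian and Hessian eigenvalues is \emph{not} affine. In the paper's coordinates one perturbs $R_i\mapsto R_i+\eps\, r_i R_i/\sqrt{\alpha_i}$ subject to $\sum_i\sqrt{\alpha_i}\,r_i=0$, and both objects are then expressed through the single symmetric matrix $L$ with $L_{ij}=B_{ij}R_iR_j/\sqrt{\alpha_i\alpha_j}$: the linearized tree map is $(\Delta-1)L$, while the Hessian of $\Psi_1$ factors as $(I+L)\bigl((\Delta-1)L-I\bigr)$. Hence if $x$ is an eigenvalue of $L$ on the constraint subspace, the Jacobian eigenvalue is $(\Delta-1)x$ and the Hessian eigenvalue is $(1+x)\bigl((\Delta-1)x-1\bigr)$, which is quadratic in $x$. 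The two conditions then read
\[
(\Delta-1)|x|<1\qquad\text{versus}\qquad -1<x<\tfrac{1}{\Delta-1},
\]
and ferromagnetism enters exactly as you anticipate, but more simply than via Cholesky: $L=D\B D$ with $D=\mathrm{diag}(R_i/\sqrt{\alpha_i})$ is congruent to $\B$, hence positive semidefinite, so $x\geq 0$ and the two intervals agree. Your affine ansatz $\mu=c(\alphab)(1-\lambda)$ will not survive the computation; once you replace it by this quadratic factorization the rest of your outline goes through.
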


The above connection fails for antiferromagnetic models, e.g., for the antiferromagnetic
Potts model the uniform distribution is a global maximum but it is not a stable fixpoint of the tree recursions for small enough temperature.
(In fact, for antiferromagnetic models every solution of the tree recursions is
a local maximum, see Remark~\ref{rema11}.)

Using the above connection we establish the detailed picture for the dominant phases
of the ferromagnetic Potts model as stated in Theorem \ref{thm:Potts-diagram-detailed}.

\section{Expressions for $\Psi_1$ and $\Psi_2$}
\label{sec:defer-prelim}

In this section, we derive expressions for the first and second moments of $Z_G^{\alphab}$, which will allow us to derive explicit expressions for the functions $\Psi_1(\alphab)$ and $\Psi_2(\alphab)$. Similar expressions have appeared in \cite[Section 2.1]{DMSS} in a slightly different form. Our exposition here is such that it provides a straightforward alignment with the analogous expressions in \cite{GSV:colorings}. The minor differences are due to the model of $\Delta$-regular  random graphs, which in this paper is the pairing model $\Gc(n,\Delta)$. We first specify the model of $\Delta$-regular  random graphs. 

The distribution $\Gc(n,\Delta)$ on $\Delta$-regular multigraphs is generated by the following random process. For $\Delta n$ even, consider the set $[\Delta n]$. Elements of $[\Delta n]$ will be called points. First, a random perfect matching of the $\Delta n$ points is sampled. Then, for $i=1,\hdots,n$, we identify the points $\Delta (i-1)+1,\hdots, \Delta i$ as a single vertex of a graph $G$. The edges of  $G$ are naturally induced by the edges of the random matching and hence every vertex has degree $\Delta$. Note that $G$ may contain parallel edges or self-loops. It is well known that any property which holds asymptotically almost surely  for the pairing model (i.e., with probability $1-o(1)$ as $n\rightarrow\infty$) holds asymptotically almost surely for the uniform distribution on $\Delta$-regular (simple) graphs as well, see for example \cite{JLR}. This is going to be the case for our results.

Recall, $\triangle_t$ denotes the simplex
\begin{equation}
\label{simplex}
\triangle_{t}=\{(x_1,x_2,\hdots,x_t)\in \mathbb{R}^t\,|\, \mbox{$\sum^t_{i=1}$}\,x_i=1\mbox{ and } x_i\geq 0\mbox{ for } i=1,\hdots,t\}.
\end{equation}
Let $G\sim\G(n,\Delta)$ and denote by $V$ the vertex set of $G$. For a configuration $\sigma:V\rightarrow\{1,\hdots,q\}$, we denote the set of
vertices assigned color $i$ by $\sigma^{-1}(i)$. For $\alphab\in \triangle_q$ and $n\alphab\in \mathbb{Z}^q$, let
\[ \Sigma^{\alphab} = \{\sigma:V\rightarrow\{1,\hdots,q\}\,\big|\,  |\sigma^{-1}(i)| = \alpha_i n,\, \mbox{ for } i=1,\hdots,q\},
\]
that is, $\Sigma^{\alphab}$ is the set of configurations $\sigma$ which assign $\alpha_i n$ vertices of $V$ the color $i$, for each $i\in [q]$. We are interested in the total weight
$Z^{\alphab}_G$ of configurations in
$\Sigma^{\alphab}$, namely
\begin{equation*}
Z^{\alphab}_G=\mbox{$\sum_{\sigma\in
\Sigma^{\alphab}}$}\, w_G(\sigma).
\end{equation*}
Note that $Z^{\alphab}_G$ is a r.v., and as indicated earlier, we will look at its moments $\E_\G[Z^{\alphab}_G]$ and
$\E_{\G}[(Z^{\alphab}_G)^2]$.

We begin with the first moment. For $\sigma\in\Sigma^{\alphab}$ and $i,j\in[q]$, 
let $e_{ij}n$ denote the number of edges between vertices
in $\sigma^{-1}(i)$ and $\sigma^{-1}(j)$. Clearly, $e_{ij}=e_{ji}$. It will be notationally convenient to reparameterize the variables $e_{ij}$ as follows:
for $i\neq j$ we set $e_{ij}=\Delta x_{ij}$ and for $i=j$ we
set $e_{ii}=\Delta x_{ii}/2$. For future use, when $G\sim \Gc(n,\Delta)$, we denote by $\x_G(\sigma)$ the random vector $(x_{11},\hdots,x_{qq})$.

The number of perfect matchings between $2n$ vertices is $(2n-1)!!=(2n)!/(n! 2^n)$.  Under
the convention that $0^0\equiv 1$, we then have
\begin{multline}\label{eq:firstmoment}
\E_{\G}[Z^{\alphab}_G]=\binom{n}{\alpha_{1}n,\hdots,\alpha_{q}n}\sum_{\x}\left\{\prod_{i}\binom{\Delta\alpha_{i}n}{\Delta x_{i1}n,\hdots,\Delta x_{iq}n}\right.\\ \left.\times\frac{\big[\prod_{i\neq j}(\Delta x_{ij}n)!\big]^{1/2}\prod_{i}(\Delta x_{ii}n-1)!!}{(\Delta n-1)!!}\prod_{i,j}B^{\Delta x_{ij}n/2}_{ij}\right\},
\end{multline}
where the sum ranges over all the possible values of the random vector $\x_G(\sigma)$. In particular,  $\x=(x_{11},\hdots,x_{qq})$
satisfying:
\begin{equation}\label{eq:constraintfirst}
\begin{gathered}
\begin{aligned}
\mbox{$\sum_{j}$}\,x_{ij}&=\alpha_i& & \big(\forall i\in[q]\big),\\
\end{aligned}\\
x_{ij}=x_{ji}\geq 0\ \ \big(\forall i,j\in[q]\big).
\end{gathered}
\end{equation}
The first line in \eqref{eq:firstmoment} accounts for the
cardinality of $\Sigma^{\alphab}$, while the second line is
$\E_\G[w_G(\sigma)]$ for a fixed
$\sigma\in\Sigma^{\alphab}$, since by symmetry we may focus on any fixed $\sigma$. The first product is the number of ways to choose a partition of the points which is consistent with the values prescribed by $\x$, the fraction is the probability that the random matching connects the points as prescribed, and the last product is the weight of the configuration $\sigma$ conditioned on $\x$.

We next consider the second moment of $Z^{\alphab}_G$. The desired expression may be derived  analogously to \eqref{eq:firstmoment}.
For $(\sigma_1,\sigma_2)\in\Sigma^{\alphab}\times
\Sigma^{\alphab}$, we need to compute the quantity
$\E_\G[w_G(\sigma_1)w_G(\sigma_2)]$. To do this, for $i,k\in[q]$, let $\gamma_{ik}n=
|\sigma_1^{-1}(i)\cap \sigma_2^{-1}(k)|$. The
vector $\gammab$ captures the overlap of the
configurations $\sigma_1$, $\sigma_2$. Denote by $e_{ikjl}n$ the number
of edges matching vertices in $\sigma_1^{-1}(i)\cap \sigma_2^{-1}(k)$ and
$\sigma_1^{-1}(j)\cap \sigma_2^{-1}(l)$. 
We reparameterize as follows: for $(i,k)\neq (j,l)$ we set $e_{ikjl}=\Delta y_{ikjl}$ and for $(i,k)= (j,l)$ we
set $e_{ikjl}=\Delta y_{ikjl}/2$.
\begin{multline}\label{eq:secondmoment}
\E_{\G}[(Z^{\alphab}_G)^2] = \sum_{\gammab}\binom{n}{\gamma_{11}n,\hdots,\gamma_{qq}n}\sum_{\y}\left\{\prod_{i,k}\binom{\Delta\gamma_{ik}n}{\Delta y_{ik11}n,\hdots,\Delta y_{ikqq}n}\right.\\
\left.\times\frac{\big[\prod_{(i, k)\neq (j,l)}(\Delta y_{ikjl}n)!\big]^{1/2}\prod_{i,k}(\Delta y_{ikik}n-1)!!}{(\Delta n-1)!!}\prod_{i,j,k,l}\big(B_{ij}B_{kl}\big)^{\Delta y_{ikjl}n/2}\right\},
\end{multline}
where the sums range over
$\gammab=(\gamma_{11},\hdots,\gamma_{qq})$,
$\y=(y_{1111},\hdots,y_{qqqq})$ satisfying
\begin{equation}\label{eq:constraintsecond}
\begin{gathered}
\begin{aligned}
\mbox{$\sum_{k}$}\,\gamma_{ik}&=\alpha_i   &  &\big(\forall i\in [q]\big), \\
\mbox{$\sum_{i}$}\,\gamma_{ik}&=\alpha_k   &  &\big(\forall k\in [q]\big),\\
\mbox{$\sum_{j,l}$}\,y_{ikjl}&=\gamma_{ik} & &\big(\forall(i,k)\in [q]^2\big)\\
\end{aligned}\\
\begin{aligned}
\gamma_{ik}&\geq 0 & &\big(\forall(i,k)\in[q]^2\big),& y_{ikjl} = y_{jlik}&\geq 0 &
&\big(\forall (i,k,j,l)\in[q]^4\big).
\end{aligned}
\end{gathered}
\end{equation}

The sums in \eqref{eq:firstmoment} and  \eqref{eq:secondmoment}
are typically exponential in $n$. The most critical component of
our arguments is to find the quantitative structure of
configurations which determine the exponential order of the
moments. Formally, we study the limits of
$\frac{1}{n}\log\E_{\G}\big[Z^{\alphab}_G\big]$ and
$\frac{1}{n}\log\E_{\G}\big[(Z^{\alphab}_G)^2\big]$ as
$n\rightarrow \infty$.  These limits can be derived from \eqref{eq:firstmoment} and \eqref{eq:secondmoment} using Stirling's
approximation formula. In particular, we shall use that for a constant $c>0$ with $cn$ even,
we have
\begin{equation}\label{eq:simpleasymp}
\frac{1}{n}\ln\big[(cn)!\big]\sim c\ln n+c\ln c-c\mbox{\ \  and\ \ } \frac{1}{n}\ln\big[(cn-1)!!\big]\sim \frac{c}{2}\ln n+\frac{c}{2}\ln c-\frac{c}{2}.
\end{equation}
Under the usual conventions that $\ln
0\equiv -\infty$ and $0\ln 0\equiv 0$, the above formulas are correct even in the degenerate case $c=0$.

We now derive asymptotics for the first moment $\E_{\G}\big[Z^{\alphab}_G\big]$ in order to obtain the function $\Psi_1(\alphab)$, see equation~\eqref{eq:defofpsi1}. Applying \eqref{eq:simpleasymp} yields:
\begin{align}
\Psi_1(\alphab):=\lim_{n\rightarrow\infty}\frac{1}{n}\log\E_{\G}\big[Z^{\alphab}_G\big]&=\max_{\x}\Upsilon_1(\alphab,\x),\label{eq:limitfirst}\\
\mbox{ where } \ \ \
\Upsilon_1(\alphab,\x)&:=(\Delta-1)f_1(\alphab)+ \Delta g_1(\x),\notag\\
f_1(\alphab)&:=\mbox{$\sum_i$}\, \alpha_i\ln\alpha_i,\notag\\
g_1(\x)&:=\mbox{$\frac{1}{2}\sum_{i,j}$}\,x_{ij}\ln B_{ij}-\mbox{$\frac{1}{2}\sum_{i,j}$}\, x_{ij}\ln x_{ij},\notag
\end{align}
defined on the region \eqref{eq:constraintfirst}.

Completely analogously, for the second moment we obtain:
\begin{align}
\Psi_2(\alphab):=\lim_{n\rightarrow \infty} \frac{1}{n}\log\E_{\G}\big[(Z^{\alphab}_G)^2\big]&=\max_{\gammab}\max_{\y}\Upsilon_{2}(\gammab,\y), \label{eq:limitsecond}\\
\mbox{ where } \ \ \
\Upsilon_{2}(\gammab,\y)&:=(\Delta-1)f_2(\gammab)+\Delta g_2(\y),\notag\\[0.15cm]
f_{2}(\gammab)&:=\mbox{$\sum_{i,k}$}\,\gamma_{ik}\ln\gamma_{ik},\notag\\[0.15cm]
g_2(\y)&:=\mbox{$\frac{1}{2}\sum_{i,k,j,l}$}\, y_{ikj\ell}\ln (B_{ij} B_{kl})-\mbox{$\frac{1}{2}\sum_{i,k,j,l}$}\, y_{ikjl}\ln y_{ikjl},\notag
\end{align}
defined on the region \eqref{eq:constraintsecond}.

\begin{remark}\label{rem:pair-spin}
It is useful to think of the second moment as the first moment of
a paired-spin model with interaction matrix $\B\otimes \B$. Indeed, from \eqref{eq:limitsecond},  we can interpret $B_{ij} B_{kl}$ as the activity between the paired spins $(i,k)$ and $(j,l)$, thus giving the desired  alignment.
\end{remark}

\section{Second Moment Analysis Using Induced Matrix Norms}
\label{sec:second-moment}


\subsection{Critical Points and Matrix Norms}
\label{sec:reexpress}

It will be useful to reformulate function $\Psi_1$ into the following version which will
preserve the critical points, and readily yield a formulation in terms of matrix norms. Let
\begin{equation}\label{mampo}
\Phi_1(\bR) = \frac{\Delta}{2}\ln\Big(\sum_{i=1}^q\sum_{j=1}^q B_{ij} R_i
R_j\Big) - (\Delta-1)\ln\Big(\sum_{i=1}^q
R_i^{\Delta/(\Delta-1)}\Big),
\end{equation}
where  $\bR = (R_1,\dots,R_q)^{\T}\geq 0$, i.e., $\bR$ has non-negative entries.  Let $p:=\Delta/(\Delta-1)$. Note that~\eqref{mampo}
has the following appealing form
\begin{equation}\label{zzzok}
\exp(2\Phi_1(\bR)/\Delta)=\frac{ \bR^{\T} \B \bR}{\|\bR\|_{p}^2},
\end{equation}
where $\|\bR\|_p = (\sum_{i=1}^n R_i^p)^{1/p}$. This will allow us to use
the techniques from the area of matrix norms in our arguments, more specifically, results
on induced matrix norms. The induced matrix norms will be denoted
$\|\cdot\|_{p\rightarrow q'}$:
\begin{equation}\label{hako111}
\| \B \|_{p\rightarrow q'} := \max_{\|\z\|_p = 1} \|\B \z\|_{q'}.
\end{equation}
Since we assume that $\B$ is ferromagnetic we have $\B = \hat{\B}^{\T} \hat{\B}$ and hence we can write
\begin{equation}\label{zzzok2}
\exp(\Phi_1(\bR)/\Delta)=\frac{ \|\hat{\B} \bR\|_2}{\|\bR\|_{p}}.
\end{equation}
The next lemma describes the connection between $\Phi_1$ and $\Psi_1$. We note that
 $\Phi_1$ is not a reparameterization of $\Psi_1$, however they do agree
at the critical points. This is sufficient for our purpose: to understand the maxima
of $\Psi_1$ it is enough to understand the maxima of $\Phi_1$. The maximization
\begin{equation}\label{zzzok3}
\max_{\bR\geq 0} \frac{ \|\hat{\B} \bR\|_2}{\|\bR\|_{p}} = \max_{\bR} \frac{ \|\hat{\B} \bR\|_2}{\|\bR\|_{p}} = \|\hat{\B}\|_{p\ra 2},
\end{equation}
is the induced $p\ra 2$ matrix norm of $\hat{\B}$. The first equality in~\eqref{zzzok3} follows
from the fact that the maximum on the right-hand-side of~\eqref{zzzok} is achieved for non-negative $\bR$ (this follows from the fact that $\B$
has non-negative entries).

\begin{lemma}\label{new:zako1}
There is a one-to-one correspondence between the fixpoints of the tree
recursions and the critical points of $\Phi_1$ (both considered for $R_i\geq 0$ in the projective space, that is, up to scaling
by a constant). The following transformation $\bR\mapsto\alphab$ given by:
\begin{equation}\label{jqwe}
\alpha_i =   R_i^{\Delta/(\Delta-1)} /    \mbox{$\sum_i$} R_i^{\Delta/(\Delta-1)}
\end{equation}
yields a one-to-one-to-one correspondence between the critical points of $\Phi_1$ and the critical points of $\Psi_1$ (in the region defined by
$\alpha_i\geq 0$ and $\sum_i \alpha_i = 1$).
Moreover, for the corresponding critical points $\bR$ and $\alphab$ one has
\begin{equation}\label{pwwww1}
\Phi_1(\bR) = \Psi_1(\alphab).
\end{equation}
Finally, the local maxima of $\Phi_1$ and $\Psi_1$ happen at the critical points (that is, there are no local
maxima on the boundary).
\end{lemma}

We omit the proof of Lemma \ref{new:zako1} since it follows the proof of Theorem 4.1 in \cite[Section 4]{GSV:colorings}. In that paper we consider random $\Delta$-regular bipartite graphs and in analogy to $\Psi_1(\alphab)$ and $\Phi_1(\Rb)$ we define $\Psi_1(\alphab,\betab)$ and $\Phi_1(\Rb,\Cb)$, respectively, where $\alphab,\Rb$ now correspond to the left-side of the bipartition and $\betab,\Cb$ to the right-side of the bipartition. The expressions in our setting (random $\Delta$-regular graphs) are identical to those in \cite{GSV:colorings} (random $\Delta$-regular bipartite graphs) after identifying $\alphab$ with $\betab$ and $\Rb$ with $\Cb$.  In fact, the proof of Theorem 4.1 in \cite{GSV:colorings} works almost verbatim in our case after this identification.

\subsection{Second Moment Analysis}
\label{sec:2nd-moment}

For ferromagnetic models, Lemma~\ref{new:zako1} allows us to reexpress the optimization problem associated with the first moment in terms of matrix norms.

\begin{lemma}\label{leko1}
Let $\B = \hat{\B}^{\T} \hat{\B}$ be the interaction matrix of a ferromagnetic spin system. We
have
\begin{equation*}
\max_{\alphab}\Psi_1(\alphab) = \Delta \ln \|\hat{\B}\|_{\frac{\Delta}{\Delta-1}\ra 2}.
\end{equation*}
\end{lemma}
\begin{proof}
Using Lemma~\ref{new:zako1} and equations \eqref{zzzok2} and \eqref{zzzok3}, we obtain
\begin{equation*}\label{hako222}
\max_{\alphab} \exp(\Psi_1(\alphab)/\Delta) = \max_{\bR} \exp( \Phi_1(\bR)/\Delta)
= \|\hat{\B}\|_{p\ra 2}.
\end{equation*}
\end{proof}
Recall, the definition of $\Psi_2$ (see \eqref{eq:defofpsi2}) corresponding to
the leading term of the second moment.
A key fact is that $\Psi_2$ is given by a constrained first moment calculation
on a ``paired-spin'' model where the interaction matrix in this model
is the tensor product of the original interaction matrix with itself (see Remark \ref{rem:pair-spin}
in Section~\ref{sec:defer-prelim}).
The second moment considers a pair of configurations, say $\sigma$ and $\sigma'$,
which are constrained to have a given phase $\alphab$.  We capture
this constraint using a vector
$\gammab$ corresponding to the overlap between $\sigma$ and $\sigma'$, in particular,
$\gamma_{ij}$ is the number of vertices with spin $i$ in $\sigma$ and
spin $j$ in $\sigma'$.

 Recall, $\Psi^{\B}_1$ indicates the dependence of the function $\Psi_1$ on the interaction matrix $\B$; to simplify the
notation we will drop the exponent if it is $\B$.
 More precisely,
\begin{equation}\label{dupp}
\Psi_2(\alphab) = \max_{\gammab} \Psi_1^{\B\otimes\B}(\gammab),
\end{equation}
where the optimization in~\eqref{dupp} is constrained to $\gammab$ such that
\begin{equation}\label{dupp2}
\mbox{$\sum_i$}\, \gamma_{ik} = \alpha_k \quad\mbox{and}\quad \mbox{$\sum_k$}\, \gamma_{ik} = \alpha_i.
\end{equation}
Ignoring the two constraints in~\eqref{dupp2} can only increase the value of~\eqref{dupp} and hence
\begin{equation}\label{huko3}
\max_{\alphab} \exp(\Psi_2(\alphab)/\Delta) \leq  \max_{\gammab} \exp(\Psi^{\B\otimes \B}_1(\gammab)/\Delta) = \|\hat{\B}\otimes
\hat{\B}\|_{\frac{\Delta}{\Delta-1}\ra 2}^2.
\end{equation}
For induced norms $\|\cdot\|_{p\rightarrow q'}$ with $p\leq q'$ it is known (Proposition 10.1 in \cite{MR0493490}) that
\begin{equation}\label{mm13}
\|\hat{\B}\otimes\hat{\B}\|_{p\rightarrow q'} = \|\hat{\B}\|_{p\rightarrow q'} \|\hat{\B}\|_{p\rightarrow q'}.
\end{equation}

Now we are ready to prove Theorem~\ref{thm:copo}.

\begin{proof}[Proof of Theorem~\ref{thm:copo}]
Combining Lemma~\ref{leko1} and equations~\eqref{huko3},\eqref{mm13} we obtain:
\[
\exp(\Psi_2(\alphab)/\Delta)=\max_{\gammab} \exp(\Psi_1^{\B\otimes \B}(\gammab)/\Delta) \leq
\|\hat{\B}\|_{\frac{\Delta}{\Delta-1}\ra 2}^2 =
\max_{\alphab} \exp(2\Psi_1(\alphab)/\Delta).
 \]
This proves that if $\alphab$ maximizes $\Psi_1$, we have $\Psi_2(\alphab)\leq 2\Psi_1(\alphab)$. The reverse inequality is trivial,  yielding Theorem \ref{thm:copo}.
\end{proof}


\begin{remark}\label{rem:anti-fails}
We will illustrate the necessity of the ferromagnetism assumption in Theorem~\ref{thm:copo} by
giving an example of an antiferromagnetic model for which the second moment method fails (i.e., the second moment is larger than the square of the first moment by an exponential factor and therefore does not provide any useful concentration). Consider
proper $3$-colorings of random $10$-regular graphs. As the size of the graph goes to infinity
the probability of it being $3$ colorable goes to zero. The intuitive effect of this is that
to achieve a large value in the ``paired-spin'' model it is better to correlate the coordinates
to agree. In terms of $\Psi_1$ and $\Psi_2$ we have that the maximum in the first moment is achieved for
$\alpha_1=\alpha_2=\alpha_3=1/3$ with $\Psi_1=5\ln 2 - 4\ln 3 < 0$.  To obtain a lower bound on the maximum in the second moment we take $\gamma_{11}=\gamma_{22}=\gamma_{33}=1/3$, which yields
$\Psi_2 = \Psi_1 > 2\Psi_1$. The argument actually applies whenever $\Psi_1<0$ (for models
whose interaction matrices have $0$'s and $1$'s). By continuity (taking small $B$ in the antiferromagnetic
Potts model) one can obtain an example of a model without hard constraints for which the second moment fails.
\end{remark}

\section{Connections}
\label{sec:connection}

In this section we prove Theorem \ref{thm:connection} which describes the connection between the stable fixpoints of the tree recursions and the local maxima of $\Psi_1$. Theorem \ref{thm:connection} will then be a key tool in our proof of Theorem~\ref{thm:Potts-diagram-detailed}. The technical core of the technique relies on the arguments in \cite{GSV:colorings}, where an analogous  connection has been established for random bipartite regular graphs. The arguments here are a minor modification of this approach, suitably modified to account for random regular graphs.

Our starting point is the one-to-one correspondence between fixpoints of the tree recursions and the critical points of $\Psi_1$
(see,~\cite{MWW}, and also \cite{GSV:colorings}). We show, roughly, that the stability of a fixpoint is equivalent to the local maximality of the
corresponding critical point. This will be done by relating the Jacobian of the tree recursions at a fixpoint with the Hessian of
$\Psi_1$ at the corresponding critical point. More precisely, we show that the Jacobian has spectral radius less than $1$ (a
sufficient condition for stability) if and only in the Hessian is negative definite (a sufficient condition for local
maximality). Both constraints on the matrices are independent of the choice of local coordinates (that is, they are invariant
under similarity transformations), however to make the connection between the Jacobian and the Hessian apparent we will have to
choose the local coordinates very carefully. A further technical complication is that the tree recursions are in the projective
space and that the optimization of $\Psi_1$ is constrained.

We give a high level overview of the Jacobian; the proofs for the $\Delta$-regular case follow the same reasoning as for the bipartite $\Delta$-regular case, see \cite[Section 4.2.2]{GSV:colorings}, after simply changing $C_j$'s to $R_j$'s and $\beta_j$'s to $\alpha_j$'s. Assume
that $R_1,\dots,R_q$ is a fixpoint of the tree recursions. Now we consider an infinitesimal perturbation of the fixpoint
$R_1+\eps R'_1,\dots,R_q+\eps R'_q$ and see how it is mapped by the tree recursions. Let $\alpha_i := \sum_{j} B_{ij} R_i R_j$.
The right parametrization (choice of local coordinates) is to take $R_i' = r_i R_i/\sqrt{\alpha_i}$, where $r_1,\dots,r_q$
determines the perturbation. Note that $R_i/\sqrt{\alpha_i}$ depends on the fixpoint.  The tree recursions map (in the projective
space) the perturbation as follows:
\begin{equation}\label{eq:perturb}
\left( R_1 + \eps r_1 \frac{R_1}{\sqrt{\alpha_1}},\dots, R_q + \eps r_q \frac{R_q}{\sqrt{\alpha_q}}\right)\mapsto \left( R_1 +
\eps \hat{r}_1 \frac{R_1}{\sqrt{\alpha_1}},\dots, R_q + \eps \hat{r}_1 \frac{R_q}{\sqrt{\alpha_q}}\right)+O(\eps^2),
\end{equation}
where $\hat{r}_i$'s are given by the following linear transformation
\begin{equation}\label{qa3}
\hat{r}_i = (\Delta-1)\sum_{j=1}^q \frac{B_{ij}R_iR_j}{\sqrt{\alpha_i\alpha_j}} r_j,
\end{equation}
and where the $r_i$'s are required to satisfy
\begin{equation}\label{pako2}
\sum_{i=1}^q\sqrt{\alpha_i} r_i = 0.
\end{equation}
The condition~\eqref{pako2} is invariant under the map~\eqref{qa3} and corresponds to choosing the representative of
$R_1,\dots,R_q$ with $\sum_i\sum_j B_{ij} R_i R_j=1$.

Next we give a high level description of the Hessian; again, this is almost identical to the one in \cite[Section 4.2.1]{GSV:colorings} after identifying $C_j$'s with $R_j$'s and $\beta_j$'s with $\alpha_j$'s. Recall that $\Psi_1$ is a function of $\alpha_1,\dots,\alpha_q$. There is an alternative parameterization of $\Psi_1$: instead of  $\alpha_1,\dots,\alpha_q$ (restricted to $\sum\alpha_i = 1$) we use $R_1,\dots,R_q$ (restricted to $\sum_i\sum_j B_{ij} R_i R_j =1$) and use the following
\begin{equation}\label{huako}
\alpha_i = \sum_j B_{ij} R_i R_j\quad\mbox{for all}\ i\in [q].
\end{equation}
Every $\alphab$ can be achieved using parameterization by ${\mathbf R}$.   Let
$\alpha_1,\dots,\alpha_q$ be a critical point of $\Psi_1$ and let $R_1,\dots,R_q$ satisfy~\eqref{huako}. We are going to evaluate
$\Psi_1$ in a small neighborhood around $\alpha_1,\dots,\alpha_q$. It is equivalent (and easier to understand) to perturb the
$R_1,\dots,R_q$ to $R_1+\eps R'_1,\dots,R_q+\eps R'_q$ and evaluate at the point given by~\eqref{huako}. Again, the correct
parameterization is to take $R_i' = r_i R_i/\sqrt{\alpha_i}$. This yields the following expression for the value of $\Psi_1$ at the perturbed point
\begin{multline}\label{eq:psi1hessian}
\Psi_1(\alpha_1,\dots,\alpha_q) \\+ \eps^2 \sum_{i=1}^q \left(r_i + \sum_{j=1}^q \frac{B_{ij}R_i R_j}{\sqrt{\alpha_i\alpha_j}}
r_j\right)\left( \sum_{j=1}^q (\Delta-1)\frac{B_{ij}R_i R_j}{\sqrt{\alpha_i\alpha_j}} r_j - r_i\right) + O(\eps^3).
\end{multline}
Note that there is no linear term, since we are at a critical point. Recall that the $\alpha_i$ have to satisfy $\sum_i \alpha_i = 1$ which corresponds to the restriction~\eqref{pako2}.

Now we are ready to prove Theorem~\ref{thm:connection}. Let $L$ be a linear map such that the Jacobian of the map $r\mapsto \hat{r}$ represented by~\eqref{qa3} is $(\Delta-1)L$. The Hessian of $\Psi_1$ is then $(I + L)( (\Delta-1) L - I)$.
Finally let $S$ be the linear subspace defined by~\eqref{pako2}.

\begin{proof}[Proof of Theorem \ref{thm:connection}]
We will use the correspondence between fixpoints of the tree recursions and critical points of $\Psi_1$ given by Lemma~\ref{new:zako1}. The constraint for the fixpoint to be Jacobian attractive is that $(\Delta-1)L$ on $S$ has spectral radius less than $1$, see equation \eqref{eq:perturb}. The constraint for the critical point to be Hessian maximum is that the eigenvalues of $(I + L)( (\Delta-1) L - I)$ on $S$ are
negative, see  equation \eqref{eq:psi1hessian}.

Note that $L$ is symmetric and if $B$ is positive semidefinite then
$L$ is positive semidefinite (since $L$ is congruent to $B$; $L$ is obtained by multiplying $B$
by a diagonal matrix on the left and on the right). Hence $L$ has non-negative real spectrum.
Note that $S$ is invariant under $L$ and hence the spectrum of $L$ on $S$ is a subset of the
spectrum of $L$ (it is still non-negative real; the restriction wiped
out the eigenvalue $1$).

The constraint for the fixpoint to be Jacobian attractive, in
terms of eigenvalues, is: for each eigenvalue $x$ of $L$ on $S$
\begin{equation}\label{ooo1}
(\Delta-1) |x| < 1.
\end{equation}
The constraint for the critical point to be Hessian maximum, in
terms of eigenvalues, is: for each eigenvalue $x$ of $L$ on $S$
\begin{equation}\label{ooo2}
(1+x)((\Delta-1)x-1) < 0.
\end{equation}
Note that conditions~\eqref{ooo1} and~\eqref{ooo2} are equivalent (since
$x\geq 0$).
\end{proof}

\begin{remark}\label{rema11}
For antiferromagnetic models every critical point of $\Psi_1$ is a local maximum. Indeed, we need only to prove that equation~\eqref{ooo2} is satisfied for every critical point. The matrix $L$ has non-negative entries hence
$1$ is the largest eigenvalue and all the other eigenvalues have magnitude less than $1$ (since $\B$
is ergodic). Moreover the matrix $L$ has the same signature as $\B$ (since they are congruent) and hence
the eigenvalues other than $1$ are negative. These $2$ facts imply~\eqref{ooo2}.
\end{remark}

\begin{remark}\label{rema33}
Note that one direction of the implication in Theorem \ref{thm:connection}, namely,
that a Jacobian attractive fixpoint is Hessian local maximum, holds for every model
(without the ferromagnetism assumption), since \eqref{ooo1} always implies \eqref{ooo2}.
However, for the reverse implication, the ferromagnetic assumption is essential.
For example, in an antiferromagnetic model, by Remark~\ref{rema11}, every critical point is a local maximum. For the antiferromagnetic Potts model, the only critical point is the uniform vector and hence it is always a local maximum for every value of $B$. On the other hand, it is straightforward to check that for the antiferromagnetic Potts model the uniform fixpoint is Jacobian unstable when $B<\frac{\Delta-q}{\Delta}$.
\end{remark}

\section{Bethe Prediction for Ferromagnetic Models}
\label{sec:Bethe-prediction}

\subsection{Small Subgraph Conditioning Method}
\label{sec:small-graph-overview}


By Theorem~\ref{thm:copo}, we have that for the random variable $Z_G^{\alphab}$, when $\alphab$ is a global maximizer of $\Psi_1$, the exponential order of its second moment is twice the exponential order of  its first moment. This is not sufficient however to obtain high probability results, since it turns out that, in the limit $n\rightarrow \infty$, the ratio of the second moment to the square of the first moment  converges to a constant greater than 1. Hence, the second moment method fails to give statements that hold with high probability over a uniform random $\Delta$-regular graph. More specifically, to obtain our results we need sharp lower bounds on the partition function which hold for almost all $\Delta$-regular graphs. In the setting we described, the second moment method only implies the existence of a graph which satisfies the desired bounds and even there in a not sufficiently strong form.

For  random $\Delta$-regular graph ensembles, the standard way to circumvent this failure is to use the small subgraph conditioning method of Robinson and Wormald \cite{RW}. While the method is quite technical, its application is relatively streamlined when employed in the right  framework. The method was first used for the analysis of spin systems in the work of \cite{MWW} for the hard-core model and subsequently in \cite{Sly10}, \cite{GSV:2spin}. In \cite{GSV:colorings}, we extended the approach to $q$-spin models for all $q\geq 2$, where the major technical obstacle was the computation of certain determinants which arise in the computation of the moments' asymptotics. While the arguments there  are for random \textit{bipartite} $\Delta$-regular graphs, the approach extends in a straightforward manner
to random $\Delta$-regular graphs.

We defer the details of the application of the method in the present setting to Section \ref{sec:small-graph}.
We state here the following lemma which is the final outcome of the method.

\begin{lemma}\label{lem:smallgraph}
For every ferromagnetic model $\B$, if $\alphab$ is a Hessian dominant phase (cf. Section~\ref{sec:conn2tree})
with probability $1-o(1)$ over the choice of the graph $G\sim\Gc(n,\Delta)$, it holds that $Z_G^{\alphab}\geq\frac{1}{n}\E_{\Gc}\big[Z_G^{\alphab}\big]$.
\end{lemma}

\subsection{Proof of Theorem \ref{thm:prediction}}\label{sec:proofprediction}

Using Lemma~\ref{lem:smallgraph}, the proof of Theorem~\ref{thm:prediction} is straightforward.

\begin{proof}[Proof of Theorem~\ref{thm:prediction}]
Let $\alphab$ be a Hessian dominant phase, whose existence is guaranteed by the assumptions. By Lemma~\ref{lem:smallgraph}, with probability $1-o(1)$ over the choice of the graph, we have $Z^{\alphab}_G\geq \frac{1}{n}\E_{\Gc}\big[Z_G^{\alphab}\big]$, which implies $\frac{1}{n}\log Z_G\geq \Psi_1(\alphab)+o(1)$.

Moreover, since the model is ferromagnetic,  for $\Delta$-regular graphs $G$ with $n$ vertices, $\frac{1}{n}\log Z_G\geq C$ for some constant $C>-\infty$ (explicitly, one can take $C:=\frac{\Delta}{2}\log \max_{i\in[q]} {B_{ii}}$, see the remarks after Theorem~\ref{thm:prediction}). We thus obtain
\[\liminf_{n\rightarrow\infty}\frac{1}{n}\E_\Gc[\log Z_G]\geq  \liminf_{n\rightarrow\infty}\big[(1-o(1))\Psi_1(\alphab)+o(1) C\big]=\Psi_1(\alphab).\]
By Jensen's inequality, we also have
\[\limsup_{n\rightarrow\infty}\frac{1}{n}\E_\Gc[\log Z_G]\leq \lim_{n\rightarrow\infty}\frac{1}{n}\log\E_{\Gc}[ Z_G].\]
All that remains to show is that $\frac{1}{n}\log\E_\Gc[ Z_G]=\Psi_1(\alphab)+o(1)$. This is straightforward; if we decompose $Z_G$ as $Z_G=\sum_{\alphab'}Z_G^{\alphab'}$, we obtain $\exp(o(n))   \E_\Gc[ Z_G^{\alphab}]\geq \E_\Gc[ Z_G]\geq  \E_\Gc[ Z_G^{\alphab}]$. Note the factor $\exp(o(n))$, which is there to account for dominant  phases which are not Hessian.

This concludes the proof.
\end{proof}

\section{Phase Diagram for the Ferromagnetic Potts model}
\label{sec:Potts}

In this section we prove Theorem \ref{thm:Potts-diagram-detailed} detailing the phase diagram
for the Potts model.  

To prove Theorem \ref{thm:Potts-diagram-detailed},
we will use Theorem~\ref{thm:connection} and the results of Section~\ref{sec:connection}. We briefly overview the approach. In order to determine the local maxima for the Potts model (or, more generally, for any ferromagnetic model),
we need to compute the spectral radius of the map $L:(r_1,\hdots,r_q)\mapsto(\widehat{r}_1,\hdots,\widehat{r}_q)$,
given by\footnote{Note that, relative to \eqref{qa3}, there is a factor of $(\Delta-1)$ ``missing" in the rhs of \eqref{lmp}. This factor will be accounted shortly later by demanding that the eigenvalues of the map $L$ in \eqref{lmp} are less (in absolute value) than $1/(\Delta-1)$ (instead of 1).}
\begin{equation}
\widehat{r}_i=\sum_{j=1}^{q}\frac{B_{ij}R_iR_j}{\sqrt{\alpha_i\alpha_j}}r_j\label{lmp}
\end{equation}
in the subspace
\begin{equation}
\sum^{q}_{i=1}\sqrt{\alpha_i} r_i=0,\label{eq:sub}
\end{equation}
where the $R_i$'s specify a fixpoint of the tree recursions \eqref{kkrtko} and the $\alpha_i$'s are given by
\[\alpha_i=R_i\sum_{j=1}^{q}B_{ij}R_j\mbox{ for } i=1,\hdots, q.\]
Our goal is to determine the local maxima by verifying when the spectral radius of this map (in the
subspace \eqref{eq:sub}) is less than $1/(\Delta-1)$. Let
$$\M=\left\{\frac{B_{ij}R_iR_j}{\sqrt{\alpha_i\alpha_j}}\right\}_{i,j=1}^q$$
be the matrix of the linear map $L$. Note that $\M$ is symmetric and has an eigenvalue equal to $1$ with
eigenvector $e=\left[\sqrt{\alpha_1},\hdots,\sqrt{\alpha_q}\right]^{\T}$. It follows that the eigenvalues of $L$ in the subspace \eqref{eq:sub} are precisely the eigenvalues different from 1 of the matrix $\M$.

To proceed, we need to restrict our attention to the ferromagnetic Potts model. First we argue that the fixpoints of the tree recursions \eqref{kkrtko} in the case of the ferromagnetic Potts model are simple---they are supported on only two values.
\begin{lemma}\label{hufff1}
Let $(R_1,\dots,R_q)$ be a fixpoint of the tree recursions \eqref{kkrtko} of the ferromagnetic Potts model. Then the $R_i$'s have at most two
distinct values.
\end{lemma}

\begin{proof} W.l.o.g. we may assume that the implicit constant in \eqref{kkrtko} is 1. Let $r_i = R_i^{1/d}$ and $r=\sum_{i=1}^q r_i^d$, where $d:=\Delta-1$. We have
$$
r_i = r + (B-1) r_i^d.
$$
The polynomial $f(x) = (B-1)x^d - x + r$ has at most $2$ positive roots (counted with their multiplicities; by the Descartes'
rule of signs) and hence there are at most $2$ different values of the $r_i$'s. \end{proof}

\begin{lemma}\label{lem:fixpoints}
The fixpoints of the tree recursions, assuming $R_1\geq R_2\geq\hdots R_q$, satisfy $R_1=R_2=\hdots=R_t$ and $R_{t+1}=\hdots R_q$ for some $1\leq t\leq q$. It follows that $\alpha_1=\alpha_2=\hdots=\alpha_t$ and $\alpha_{t+1}=\hdots=\alpha_q$.
\end{lemma}
\begin{proof}
This follows from Lemma~\ref{hufff1}.
\end{proof}
\begin{remark}\label{rem:majorityordered}
Two settings for $t$ in the setting of Lemma~\ref{lem:fixpoints} will be of particular interest, namely $t=1$ and $t=q$. We shall refer to the latter as the uniform fixpoint, and this corresponds to the disordered phase. We shall refer to fixpoints with $t=1$ as the  ``majority" fixpoints. This class includes either one or two (depending on the value of $B$, c.f. Lemma~\ref{lem:numbermajority}) fixpoints where color 1 dominates and the remaining appear with equal probability. The ordered phases correspond to the majority fixpoint for which the ratio $R_1/R_q$ is maximum (cf. the upcoming Lemma~\ref{lem:exactone}).
\end{remark} 

\begin{remark}
We note here that fixpoints with $t\neq q$ exist iff $B\geq \Bu$. In Lemma~\ref{lem:numbermajority}, we only show one side of this equivalence. In particular, we show that $B\geq \Bu$ implies the existence of the majority fixpoints, and this turns out to be the only ``existential" fact that is needed for the proof of Theorem~\ref{thm:Potts-diagram-detailed}. More precisely, in Lemmas~\ref{lem:eigenvaluesMM} and~\ref{lem:fixstablemaj}, we show that fixpoints with $t\neq 1,q$ are not attractive which in turn implies that the corresponding phases are not local maxima of $\Psi_1$ and, thus, are not dominant as well. 
\end{remark}

Lemma~\ref{lem:fixpoints} implies that in the case of the ferromagnetic Potts model, $\M$ has a very simple structure. The following simple lemma describes the eigenvalues of $\M$.
\begin{lemma}\label{lem:eigenvaluesMM}
In the setting of Lemma~\ref{lem:fixpoints}, $\M$ has the following eigenvalues for $1\leq t<q$:
\begin{itemize}
\item $1$ with multiplicity $1$,
\item  $(B-1)R^2_1/\alpha_1$ with multiplicity $t-1$ (assuming $t>1$),
\item $(B-1)R^2_q/\alpha_q$ with multiplicity $q-t-1$ (assuming $t<q-1$), and
\item  $(B+t-1)R^2_1/\alpha_1+(B+q-t-1) R^2_q/\alpha_q-1$ with multiplicity $1$.
\end{itemize}
For $t=q$ the eigenvalues of $\M$ are
\begin{itemize}
\item $1$ with multiplicity $1$,
\item  $(B-1)R^2_1/\alpha_1$ with multiplicity $q-1$.
\end{itemize}
\end{lemma}

\begin{proof}
We already described the eigenvector for eigenvalue $1$. For every $i$ such that $2\leq i\leq t$, a vector with $1$ at position $1$ and $-1$ at
a position $i$ (and zeros elsewhere) yields eigenvalue $(B-1)R^2_1/\alpha_1$.
Similarly, for every $i$ such that $t+1\leq i<q$, a vector with $1$ at position $q$ and $-1$ at  position $i$ (and zeros elsewhere) yields eigenvalue $(B-1)R^2_q/\alpha_q$. Note that in the case $t=q$ this
accounts for all the eigenvalues. In the case $t<q$ we deduce the remaining eigenvalue
by considering the trace of $\M$:
\[t \frac{ B R_{1}^2}{\alpha_{1}} + (q-t)  \frac{B R_{q}^2}{\alpha_{q}} - (t-1)\frac{(B-1)R_1^2}{\alpha_1} - (q-t-1)\frac{(B-1)R_q^2}{\alpha_q} - 1.\]
\end{proof}

\begin{lemma}\label{lem:uniform}
The uniform fixpoint is Jacobian attractive if $(\Delta-2)(B-1)<q$. The uniform fixpoint is not attractive
if $(\Delta-2)(B-1)>q$.
\end{lemma}
\begin{proof}
The uniform fixpoint of the tree recursions corresponds to $R_1=\dots=R_q$ and hence $\alpha_1=\dots=\alpha_q=(B+q-1) R_1^2$. By Lemma~\ref{lem:eigenvaluesMM}, the only relevant eigenvalue  is $(B-1)/(B+q-1)$
(with multiplicity $q-1$), which we compare with $1/(\Delta-1)$ to obtain the lemma.
\end{proof}

Lemma \ref{lem:uniform} allows us to restrict our focus on $q-1\geq t\geq 1$. In this setting, the tree recursions, with $x:=y^d:=\frac{R_1}{R_q}$ and $d:=\Delta-1$, yield:
\begin{equation}
x=\left(\frac{(B+t-1)x+(q-t)}{tx+(B+q-t-1)}\right)^d \mbox{ or } B-1=\frac{(y-1) \left(t y^d+q-t\right)}{y^d-y}.\label{eq:bprecursion}
\end{equation}
The following lemma implies that all fixpoints with $q-1\geq t\geq 2$ are unstable in the whole non-uniqueness regime, since the respective matrices $\M$ have an eigenvalue greater than $1/(\Delta-1)$.
\begin{lemma}\label{lem:fixstablemaj}
When $\frac{R_1}{R_q}>1$, it holds that $(B-1)\frac{R^2_1}{\alpha_1}> \frac{1}{\Delta-1}$.
\end{lemma}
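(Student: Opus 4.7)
}
The plan is to substitute the explicit form of $\alpha_1$ for a majority fixpoint into the inequality, use the recursion~\eqref{eq:bprecursion} to eliminate $B-1$, and reduce the claim to an elementary polynomial inequality in $y:=(R_1/R_q)^{1/d}$ with $d=\Delta-1$. First, since we are in the setting of Lemma~\ref{lem:fixpoints} with $R_1=\cdots=R_t$ and $R_{t+1}=\cdots=R_q$, the definition of $\alpha_i$ for the Potts model gives
\[
\alpha_1 \;=\; R_1\bigl((B+t-1)R_1+(q-t)R_q\bigr).
\]
Setting $x=R_1/R_q$, the inequality $(B-1)R_1^2/\alpha_1 > 1/(\Delta-1)$ is therefore equivalent to
\[
(d-1)(B-1)x \;>\; tx+(q-t),
\]
after subtracting $(B-1)x$ and $tx+(q-t)$ from $d(B-1)x$ on both sides.

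Next I would plug in the expression $B-1=\frac{(y-1)(ty^d+q-t)}{y^d-y}$ provided by~\eqref{eq:bprecursion} (using $y^d=x$). Since $ty^d+q-t=tx+(q-t)>0$ and $x>y$ (as $y>1$ when $R_1>R_q$), the inequality simplifies, after cancellation of the common positive factor $tx+(q-t)$, to
\[
(d-1)(y-1)\,x \;>\; x-y.
\]
Using $x=y^d$, the right hand side factors as $y(y^{d-1}-1)=y(y-1)(1+y+\cdots+y^{d-2})$, so dividing through by the positive quantity $y-1$ reduces the goal to
\[
(d-1)\,y^{d-1} \;>\; 1+y+\cdots+y^{d-2}.
\]

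Finally, this last inequality is obvious: since $y>1$, each of the $d-1$ summands on the right is strictly smaller than $y^{d-1}$, so their sum is strictly less than $(d-1)y^{d-1}$. I do not foresee a major obstacle here; the main care needed is to verify the signs when dividing (specifically, that $y>1$ and $x>y$ both follow from $R_1/R_q>1$, and that $tx+(q-t)>0$), which is immediate from $1\le t\le q-1$ and $x>0$. The argument covers every majority fixpoint uniformly in $t$, and combined with the eigenvalue list for $\M$ it yields instability whenever $R_1>R_q$.
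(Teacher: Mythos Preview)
Your proof is correct and follows essentially the same route as the paper: express $\alpha_1$, reduce the claimed inequality to one in $x=R_1/R_q$, and use the recursion~\eqref{eq:bprecursion} with $x=y^d$ to eliminate $B-1$. The paper carries out the identical substitutions and arrives at the equivalent polynomial inequality $(d-1)y^{d+1}-dy^d+y>0$ for $y>1$, which it then proves via Descartes' rule of signs (showing $y=1$ is a double root of a polynomial with at most two positive roots). Your finishing step---cancelling the common positive factor $ty^d+q-t$ early and rewriting the resulting inequality as $(d-1)y^{d-1}>1+y+\cdots+y^{d-2}$---is a cleaner and more elementary way to close the argument, but the two inequalities are the same after multiplying yours by $y(y-1)$. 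One small terminological slip: in the paper ``majority fixpoint'' refers specifically to $t=1$, whereas your argument (correctly, as needed) handles all $1\le t\le q-1$.
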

\begin{proof}
The desired inequality is equivalent to
\[(\Delta-1)(B-1)R_1> (B+t-1)R_1+(q-t)R_q,\]
which after simple manipulations reduces into
\[\big((\Delta-2)(B-1)-t\big)\frac{R_1}{R_q}> q-t.\]
Substituting $\frac{R_1}{R_q}=y^d$ and $B-1$ from equation \eqref{eq:bprecursion}, the inequality becomes
\[\left(\frac{(d-1)(y-1) \left(t y^d+q-t\right)}{y^d-y}-t\right)y^d-(q-t)>0.\]
Doing the necessary simplifications, we obtain the following equivalent inequality
\[\frac{\left((d-1) y^{1+d}-d y^d+y\right) \left(q+t \left(y^d-1\right)\right)}{y^d-y} > 0.\]
Since $y>1$, the only non-trivial factor to prove positivity is $p(y):=(d-1) y^{d+1}-d y^d+y$. By the  Descartes' rule of signs, $p(y)$ can have at most two positive roots. It holds that $p(1)=p'(1)=0$, so that $p(y)$ is always positive for $y>1$.
\end{proof}

In light of Lemmas~\ref{lem:eigenvaluesMM} and~\ref{lem:fixstablemaj}, it remains to classify fixpoints with $t=1$, i.e., the majority fixpoints. The following lemma gives the number of the  majority fixpoints in the regimes of interest.
\begin{lemma}\label{lem:numbermajority}
When $\Bu< B< \Bh$, there are exactly two distinct majority fixpoints. When $B\geq \Bh$, there is exactly one majority fixpoint.
\end{lemma}
\begin{remark}
At $B=\Bu$, it follows from the proof of Lemma~\ref{lem:numbermajority} that there is a  unique majority fixpoint which ``bifurcates" in the regime $\Bu< B< \Bh$. 
\end{remark}
\begin{proof}[Proof of Lemma~\ref{lem:numbermajority}]
We need to look at \eqref{eq:bprecursion} for $t=1$ and check how many values of $y>1$ satisfy the equation in the two regimes $\Bu< B< \Bh$ and $B\geq \Bh$. For $t=1$, equation \eqref{eq:bprecursion} reads as
\begin{equation}\label{eq:t1reads}
B-1=f(y):=\frac{(y-1)(y^d+q-1)}{y^d-y},\mbox{ so that } f'(y)=\frac{p(y)}{(y^d-y)^2},
\end{equation}
where $p(y)$ is the polynomial
\begin{equation}\label{eq:py}
p(y):=y^{2 d}-d y^{d+1}-(d-1) (q-2) y^{d}+d (q-1) y^{d-1}-(q-1).
\end{equation}
Employing the Descartes' rule of signs we see that $p(y)$ has one or three positive roots counted by multiplicities. It is easy to check that $p(1)=p'(1)=0$, so that $p$ has in fact 3 positive roots counted by multiplicities (since $1$ is a double root), let $\rho$ denote the other positive root. We next prove that $\rho>1$ so that $p(y)\leq 0$ if $1\leq y\leq \rho$ and $p(y)\geq 0$ if $y\geq \rho$.  It follows that for positive $y$ we have $p(y)>0$ iff $y>\rho$.

To prove that $\rho>1$, for the sake of contradiction assume that $0< \rho\leq 1$.  If $\rho=1$, then $1$ is a root with multiplicity 3 of the polynomial $p(y)$ and hence $p''(1)=0$. By straightforward calculations we see that $p''(1)=(q-2)(d-d^2)$ which is clearly non-zero for $q\geq 3$ and $d\geq 2$. Thus, we may assume that $0<\rho<1$. Since $p(1)=p(\rho)=0$, by Rolle's theorem there is a root $\rho'\in(\rho,1)$ of the polynomial $p'(y)=dy^{d-2}g(y)$ where
\[g(y):= 2y^{d+1}-(d+1)y^2-(d-1)(q-2)y+(d-1)(q-1).\] Since $g(1)=g(\rho')=0$, by the same token there is a root $\rho''\in (\rho',1)$ of \[g'(y)=2(d+1)y^d-2(d+1)y-(d-1)(q-2).\]
We thus obtain the desired contradiction since, for $q\geq 3$ and $d\geq 2$,  $g'(y)< 0$ for all $y\in [0,1]$.

From the above, it follows that $f(\rho)=\min_{y\geq 1}\{f(y)\}$. Observe also that $f(y)\rightarrow \infty$ as $y\rightarrow\infty$, while $f(y)\rightarrow \frac{q}{d-1}$ as $y\downarrow 1$. Thus, when $y\downarrow 1$, we have that $B\uparrow \Bh$ (viewing $B$ as a function of $y$, see \eqref{eq:t1reads}). 

To obtain the lemma, it thus suffices  to show that $\Bu=f(\rho)+1$. Recall, that $\Bu$ is the unique value of $B$ for which the polynomial $(q-1)z^{d+1} + (2-B-q)z^{d} + Bz - 1$ has a double root in $(0,1)$. We reparameterize $z\rightarrow 1/z$, so that $\Bu$ is the unique value of $B$ for which the following polynomial has a
double root in $(1,\infty)$:
\begin{equation*}
r(z)=z^{d+1}- Bz^d- (2-B-q)z-(q-1).
\end{equation*}
Let $z_c$ be the double root of this polynomial when $B=\Bu$. Solving each of $r(z_c)=0$ and $r'(z_c)=0$ with respect to $B$ and equating the expressions, we obtain that $p(z_c)=0$. It follows that for $B=\Bu$ the double root of the polynomial $r(z)$ is equal to $\rho$. Now, solving $r(\rho)=0$ with respect to $B$ gives us that $\Bu=f(\rho)+1$, as wanted.
\end{proof}

We can now classify the stability of fixpoints with $t=1$.
\begin{lemma}\label{lem:exactone}
For $B>\Bu$, exactly one majority fixpoint is Jacobian attractive. More precisely, the only Jacobian attractive fixpoint with $t=1$ is the one maximizing $x=R_1/R_q$ (among the solutions of \eqref{eq:bprecursion} for $t=1$).
\end{lemma}
We can now explicitly specify the marginal $a$ in the definition of the ordered phase stated in the Introduction. With $x$ as in Lemma~\ref{lem:exactone}, we apply \eqref{jqwe}, which yields:
\begin{equation}\label{eq:marginala}
a=\frac{x^{\Delta/(\Delta-1)}}{x^{\Delta/(\Delta-1)}+q-1}.
\end{equation}
Next, we give the proof of Lemma~\ref{lem:exactone}.

\begin{proof}[Proof of Lemma~\ref{lem:exactone}]
In the setting of Lemma~\ref{lem:fixpoints} we have $t=1$ and thus the interesting eigenvalues of $\M$ are  $\lambda_1:=(B-1)R^2_q/\alpha_q$ and $\lambda_2:=BR^2_1/\alpha_1+(q-2+B)R^2_q/\alpha_q-1$. 
Using that $\alpha_1=R_1(BR_1+(q-1)R_q)$ and $\alpha_q=R_q(R_1+(q-2+B)R_q)$, we obtain the following equivalent expressions for $\lambda_1,\lambda_2$:
\begin{equation}\label{eq:lambda1lambda2}
\lambda_1=\frac{(B-1)R_q}{R_1+(q-2+B)R_q},\quad \lambda_2=\frac{ BR_1}{BR_1+(q-1)R_q}-\frac{R_1}{R_1+(q-2+B)R_q}.
\end{equation}
We will show that
\begin{enumerate}
\item \label{it:edcedc} $\lambda_1<\lambda_2$,
\item \label{it:edcedc2} If $R_1,R_q$ correspond to the fixpoint which maximizes $x$, it holds that $\lambda_2<1/(\Delta-1)$. Otherwise (i.e., if $R_1,R_q$ correspond to the other fixpoint with $t=1$), it holds that  $\lambda_2>1/(\Delta-1)$.
\end{enumerate}
From these two Items, the lemma follows.

For Item~\ref{it:edcedc}, let $W:=\lambda_2-\lambda_1$. Then, expanding everything out, we obtain that 
\begin{align*}
W&=\frac{(B-1)(q-1)R_q( R_1- R_q)}{\big(BR_1+(q-1)R_q\big)\big(R_1+(q-2+B)R_q)}
\end{align*}
and thus $W>0$ (since $B>1$ and $R_1>R_q$).

For Item~\ref{it:edcedc2}, let $Q:=\lambda_2-\frac{1}{\Delta-1}$. Expanding everything out, we have
\begin{align*}
Q&=\frac{R_1R_q\big((\Delta-2)B(q-2+B)-\Delta(q-1)\big)-BR^2_1-(q-1)(q-2+B)R^2_q}{(\Delta-1)(BR_1+(q-1)R_q)(R_1+(q-2+B)R_q)}.
\end{align*}
Thus, to check whether $Q<0$, it is equivalent to check, with $x=\frac{R_1}{R_q}$, whether
\begin{equation}
\big((\Delta-2)B(q-2+B)-\Delta(q-1)\big)x< Bx^2+(q-1)(q-2+B).\label{eq:mainobst}
\end{equation}
Substituting $x=y^d$ and $B-1$ from \eqref{eq:bprecursion}, we obtain the equivalent inequality
\[0<\frac{y\left(y^d-1\right) \left(y^d+q-1\right)p(y)}{\left(y^d-y\right)^2}.\]
where $p(y)$ is the polynomial defined in \eqref{eq:py}. By the proof of Lemma~\ref{lem:numbermajority}, $p(y)>0$ iff $y>\rho$. The proof of Lemma~\ref{lem:numbermajority} further yields that the latter  inequality, throughout the regime $B> \Bu$, is only satisfied by the majority fixpoint with $x$ maximum, thus yielding Item~\ref{it:edcedc2}.

This concludes the proof.
\end{proof}

Having classified the fixpoints which are Jacobian attractive, we now need to see when these are dominant. This entails comparing the values of $\Psi_1$ for the respective phases. Rather than doing this directly, we use Lemma~\ref{new:zako1}. In particular, it is equivalent to compare the values of $\Phi_1$ at the fixpoints. Moreover, note that the expression \eqref{mampo} is invariant upon scaling $R_i$'s by the same factor and hence we only need to compare $\Phi_1(x,1,\hdots,1)$ and $\Phi_1(1,\hdots,1)$, where $x$ is a solution of \eqref{eq:bprecursion} for $t=1$.
\begin{lemma}\label{pppwq}
Let $t=1$ and  $x$ be the solution of \eqref{eq:bprecursion}  with $x$ maximum. Then $\Phi_1(x,1,\hdots,1)\geq \Phi_1(1,1,\hdots,1)$ iff $B\geq \Bp$. Equality holds iff $B=\Bp$.
\end{lemma}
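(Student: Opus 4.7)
The plan is to define $F(B) := \Phi_1(x(B), 1, \ldots, 1) - \Phi_1(1, \ldots, 1)$, where $x(B)$ is the maximum solution of \eqref{eq:bprecursion} with $t = 1$, and to show two things: (i) $F$ is strictly increasing in $B$ along the ordered branch $B \geq \Bu$, and (ii) $F(\Bp) = 0$. Together these imply $F(B) \geq 0$ if and only if $B \geq \Bp$, with equality only at $B = \Bp$, which is exactly the statement of the lemma.

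For (i) I would exploit the scale invariance of $\Phi_1$: by Euler's identity, at any critical point the full gradient $\nabla \Phi_1$ vanishes (not merely its projection modulo scaling), so the envelope theorem applies along the curve $B \mapsto (x(B), 1, \ldots, 1)$ and gives
\[
 F'(B) = \partial_B \Phi_1\big|_{(x(B), 1, \ldots, 1)} - \partial_B \Phi_1\big|_{(1, \ldots, 1)}.
\]
Differentiating \eqref{mampo} for the Potts interaction matrix yields $\partial_B \Phi_1 = \tfrac{\Delta}{2} \sum_i R_i^2 / \sum_{i,j} B_{ij} R_i R_j$. Plugging in both configurations and combining over a common denominator, the numerator collapses to the perfect square $(q-1)(x-1)^2$, giving
\[
 F'(B) = \frac{\Delta(q-1)(x-1)^2}{2\, N(x, B)\,(B+q-1)}, \quad \text{where } N(x,B) := Bx^2 + 2(q-1)x + (q-1)(B+q-2),
\]
which is strictly positive whenever $x > 1$, as holds throughout the ordered branch.

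For (ii) I would set $r := (q-1)^{1/\Delta}$ and $y_\star := r^2$, so that $x_\star := y_\star^{\Delta-1} = r^{2(\Delta-1)}$. Substituting $y = y_\star$ into \eqref{eq:bprecursion} with $t=1$ and using $r^\Delta = q-1$, a short calculation (the key simplifications coming from $y_\star^{\Delta-1} = (q-1)^2/r^2$ and $y_\star^{\Delta-1} - y_\star = ((q-1) - r^2)((q-1) + r^2)/r^2$) shows the resulting $B$ equals exactly $\Bp$ from \eqref{eq:zmurko}. The analogous check at $y = r$ shows $r$ is also a fixpoint at $B = \Bp$, so by Lemma~\ref{lem:numbermajority} the pair $\{r, r^2\}$ exhausts the majority fixpoints, and since $r > 1$ gives $r^2 > r$, $y_\star = r^2$ is indeed the maximum. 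To evaluate $F(\Bp)$, note $x_\star^{\Delta/(\Delta-1)} = y_\star^\Delta = (q-1)^2$, hence $x_\star^{\Delta/(\Delta-1)} + q - 1 = q(q-1)$ and the second logarithmic term contributes $-(\Delta-1)\ln(q-1)$. For the first term, expressing $N(x_\star, \Bp)$ and $q(\Bp + q - 1)$ as rational functions of $r$ and using the factorization $(q-1)^2 - r^4 = ((q-1) - r^2)((q-1) + r^2)$ triggers the clean cancellation
\[
 \frac{N(x_\star, \Bp)}{q(\Bp + q - 1)} = \frac{(q-1)^2}{r^2} = (q-1)^{2(\Delta-1)/\Delta},
\]
so the first term contributes $+(\Delta-1)\ln(q-1)$, yielding $F(\Bp) = 0$.

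The main obstacle is the algebra in step (ii): correctly identifying the maximum fixpoint as $r^2$ (and not the superficially appealing $r^{\Delta-1}$), and then recognizing the two difference-of-squares factorizations that produce the tidy ratio $(q-1)^2/r^2$ after several cancellations between numerator and denominator. Once these are in hand, combining the strict monotonicity from step (i) with the single identified zero from step (ii) proves the lemma.
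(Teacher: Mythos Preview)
Your proof is correct and takes a genuinely different route from the paper's for the monotonicity step. The paper first eliminates $B$ via the recursion \eqref{eq:bprecursion} to express $DIF = F$ as an explicit function of $y$ alone, and then shows separately that $\partial y/\partial B > 0$ and $\partial DIF/\partial y > 0$; both derivatives factor through the polynomial $p(y)$ from \eqref{eq:py}, whose positivity on $(\rho,\infty)$ was established in Lemma~\ref{lem:numbermajority}. Your envelope-theorem argument bypasses $p(y)$ entirely: the scale invariance of $\Phi_1$ forces the full $\bR$-gradient to vanish at fixpoints (so the envelope theorem applies cleanly), and the resulting expression for $F'(B)$ collapses to a perfect square $(q-1)(x-1)^2$ in the numerator, giving strict positivity immediately. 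This is more conceptual and shorter. The paper's approach, in exchange, yields a closed one-variable formula for $DIF(y)$ that makes the zero at $y=(q-1)^{2/\Delta}$ verifiable by inspection and may be useful elsewhere (e.g., in the Swendsen--Wang analysis).

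For step~(ii) the two approaches are essentially the same computation organized differently. Your identification of both fixpoints $y\in\{r,r^2\}$ at $B=\Bp$ and the cancellation giving $N(x_\star,\Bp)/\bigl(q(\Bp+q-1)\bigr)=(q-1)^2/r^2$ are correct. One small technical point worth making explicit: the envelope argument needs $x(B)$ to be $C^1$ along the maximal branch, which follows from the implicit function theorem once you know $f'(y)\neq 0$ there (equivalently $p(y)\neq 0$ for $y>\rho$); this is already available from Lemma~\ref{lem:numbermajority}, but strictly speaking you are still invoking that lemma at this one spot.
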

\begin{proof}
By a direct calculation
\begin{align*}
\Phi_1(x,1,\hdots,1)&=\frac{\Delta}{2}\ln\big((x+q-1)^2+(B-1)(x^2+q-1)\big)\\
&-(\Delta-1)\ln\big(x^{\Delta/(\Delta-1)}+q-1\big),\\
\Phi_1(1,1,\hdots,1)&=\frac{\Delta}{2}\ln\big(q^2+(B-1)q\big)-(\Delta-1)\ln(q).
\end{align*}
Using the substitutions $d=\Delta-1$, $x=y^d$ and the second equation in \eqref{eq:bprecursion}, after careful manipulations we obtain
\begin{equation*}
DIF:=\Phi_1(x,1,\hdots,1)-\Phi_1(1,1,\hdots,1)=\frac{1}{2}\ln\bigg(\frac{q^{d-1}\left(y^d+q-1\right)^{d+1}}{(q+y-1)^{d+1}\left(y^{d+1}+q-1\right)^{d-1}}\bigg).
\end{equation*}
It is straightforward to check that for $y=(q-1)^{2/(d+1)}$, $DIF=0$. To find the respective value of $B$ for this value of $y$, we just need to plug the value  $y=(q-1)^{2/(d+1)}$ in  the second equation in \eqref{eq:bprecursion} for $t=1$. In particular, from \eqref{eq:bprecursion} we have that
\[B=1+\frac{(y - 1) (y^d + q - 1)}{(y^d - y)}=\frac{y^{d+1}+(q-2)y-(q-1)}{y^d-y}.\]
For $y=(q-1)^{2/(d+1)}$, we have 
\begin{equation*}
y^{d+1}+(q-2)y-(q-1)=(q-2)\big(q-1+(q-1)^{2/(d+1)}\big),
\end{equation*}
and
\begin{align*}
y^d-y&=(q-1)^{2d/(d+1)}-(q-1)^{2/(d+1)}\\
&=\big((q-1)^{(d-1)/(d+1)} - 1\big)\big(q-1+(q-1)^{2/(d+1)}\big).
\end{align*}
It follows that
\[B=\frac{q-2}{ (q-1)^{(d-1)/(d+1)} - 1}=\frac{q-2}{ (q-1)^{1-2/\Delta} - 1}=\Bp.\]
To prove the lemma, it thus remains to show that $y$ is an increasing function of  $B$ and $DIF$ increases as $y$ increases. This is indeed true. Using \eqref{eq:bprecursion}, one calculates (see the relevant \eqref{eq:t1reads})
\begin{equation*}
\frac{\partial y}{\partial B}\cdot \frac{p(y)}{(y^d-y)^2}=1,
\end{equation*}
and
\begin{equation*}
\begin{aligned}
\frac{\partial DIF}{\partial y}&=\frac{1}{2}\Big(\frac{d(d+1)y^{d-1}}{y^d+q-1}-\frac{d+1}{y+q-1}-\frac{(d-1)(d+1)y^d}{y^{d+1}+q-1}\Big)\\
&=\frac{(d+1)(q-1)p(y)}{2(y+q-1)(y^d+q-1)(y^{d+1}+q-1)},
\end{aligned}
\end{equation*}
where $p(y)$ is the polynomial defined in \eqref{eq:py}, whose positivity has already been established for all $y>\rho$, see the proof of Lemma~\ref{lem:numbermajority}. The claim follows.
\end{proof}

\begin{proof}[Proof of Theorem~\ref{thm:Potts-diagram-detailed}]
We first argue about the local maxima. By Theorem~\ref{thm:connection}, we just need to check the stability of the corresponding fixpoints. By Lemmas~\ref{lem:fixstablemaj} and~\ref{lem:exactone}, only the uniform and the $q$ majority fixpoints can be Jacobian attractive. The uniform fixpoint, by Lemma~\ref{lem:uniform}, is Jacobian attractive when $1<B< \Bh$. The $q$ majority fixpoints, by Lemma~\ref{lem:exactone}, are Jacobian attractive when $B> \Bu$. This proves the assertions in Theorem~\ref{thm:Potts-diagram-detailed} about the local maxima of the function $\Psi_1$.

To argue about the Hessian dominant phases,  it only remains to find the regimes where the disordered/ordered phases are dominant. From the first part of Lemma~\ref{pppwq}, the disordered phase is dominant iff $B\leq\Bp$, whereas the $q$ ordered phases are dominant iff $B\geq \Bp$. From the second part of Lemma~\ref{pppwq}, all of the $q+1$ phases are dominant at $B=\Bp$. This completes the proof of  Theorem~\ref{thm:Potts-diagram-detailed}. 
\end{proof}

\section{\#BIS-hardness for Potts}\label{sec:BISPotts}

We first give a rough description of our reduction.  We will construct a gadget $G$ which is
a balanced, bipartite graph on $(2+o(1))n$ vertices.  There will be $n'=O(n^{1/8})$ vertices on each side
of $G$ which will have degree $\Delta-1$, the remainder have degree $\Delta$.
The key is that $G$ behaves similarly to a random bipartite $\Delta$-regular graph.  Hence, for the ferromagnetic Potts model, when $B>\Bp$, the $q$ ordered phases will dominate (ferromagnetic models on random bipartite $\Delta$-regular graphs have the same dominant phases as on random $\Delta$-regular graphs, see footnote~\ref{foot:bipartite}).
We will take an instance $H$ for  \textsc{\#FerroPotts}($q,B,\Delta$) where $H$ has $n'$ vertices.
We then replace each vertex in $H$ by a gadget $G$.  Then we will use the degree $\Delta-1$ vertices
in these gadgets to encode the edges of $H$, while preserving bipartiteness. The resulting 
graph $H^G$ will have bounded degree $\Delta$ and the Potts model on $H^G$ will ``simulate'' the 
Potts model on $H$.

The gadget $G$ is identical to the one used by Sly~\cite{Sly10}. Before giving the detailed description of the gadget, it would be instructive to explain at an intuitive level the basic construction of the gadget and the properties we are trying to ensure. Roughly, the gadget is a random bipartite graph $\overline{G}$ with $(\Delta-1)$-ary trees of depth $\Theta(\log n)$ attached on a small set of vertices on each side of $\overline{G}$. We use the random bipartite  graph to ensure that the final gadget has $q$ ordered phases when $B>\Bp$, which will be used to encode the $q$ spins of the Potts model in the graph $H$. The trees will ensure, roughly, that the relative error we introduce by approximating the partition function in $H$ with the partition function in $H^{G}$ is polynomially small. To explain this further, the reader should keep in mind that the analysis of the Potts model on the random graph $\overline{G}$ will only give estimates of its partition function within a multiplicative factor $(1\pm \epsilon)$ for some small constant $\epsilon>0$. In turn, using  this bound to analyze the Potts model on $H^{G}$ would result in estimating the partition function of $H$ within a multiplicative factor $(1\pm \epsilon)^{|H|}$, which is way bigger than the polynomial accuracy we seek in approximation-preserving reductions. This obstacle is precisely the reason why the trees are attached to $\overline{G}$: the trees will boost the constant $\epsilon$ to a much smaller quantity of order $n^{-O(1)}$; and then, the final approximation can be made polynomially small as desired.\footnote{Let us remark that sometimes even the $(1\pm \epsilon)$ factor estimates are sufficient to get strong inapproximability results and the attachment of the trees is not needed, see for example the reductions in \cite{SlySun,GSV:colorings}. The difference in those settings is that the corresponding counting problems, i.e., counting independent sets or counting colorings, can be connected to NP-hard problems (such as \textsc{Max-Cut}), yielding that it is NP-hard to approximate the corresponding partition function even within an exponential factor. In contrast, for problems like \#BIS or $\#\textsc{FerroPotts}$ (which correspond to easy decision problems, e.g., finding the max independent set on a bipartite graph or the maximum weight configuration in the ferromagnetic Potts model), such strong inapproximability results are not known; in fact, we only have evidence that an FPRAS is unlikely to exist. This necessitates the study of approximation-preserving reductions in our setting (as in, e.g., \cite{DGGJ,2spinBIS,LLZ}) and thus our quest for the polynomial precision.} As in \cite{Sly10,GSV:2spin,2spinBIS}, this boosting is possible  due to the fact that the ordered phases correspond to non-reconstructible Gibbs measures on the infinite $\Delta$-regular tree (we will expand later on this in the proof of Lemma~\ref{lem:gadgetferropotts} Item~\ref{it:approxind}).

Next, we give the description of the gadget $G$. The gadget $G$ is defined by two parameters $\theta,\psi$ where $0<\theta,\psi<1/8$. The construction of the gadget $G$ has two parts.
First construct the following bipartite graph $\Go$ with vertex set $V^+\cup V^-$.
For $s\in\{+,-\}$,  $|V^s| = n+m'$ where  $m'$
will be defined precisely later.
Take $\Delta$ random perfect matchings between $V^+$ and $V^-$.  Then remove a matching of size $m'$ from one of the $\Delta$ matchings.  Call this graph $\Go$. For later use, let $U:=U^+\cup U^-$ denote the vertices of degree $\Delta$ in $\Go$ and $W:=W^+\cup W^-$ denote the vertices of degree $\Delta-1$ in $\Go$.

In the second stage, for each side of $\Go$, partition the degree $\Delta-1$ vertices into
$n^\theta$ equal sized sets and attach to each set a $(\Delta-1)$-ary tree of depth $\ell$
where $\ell =  \lfloor \psi \log_{\Delta-1}n\rfloor$.  (Use the vertices of $\Go$ as the leaves of these trees.)
Hence each side contains $n^\theta$ trees of size $O(n^\psi)$.
(More precisely, $(\Delta-1)^{\lfloor \theta \log_{\Delta-1}n\rfloor}$ trees, each having  $(\Delta-1)^{\lfloor \psi \log_{\Delta-1}n\rfloor}$ leaves.)
This defines the gadget $G$.
For $s\in\{+,-\}$, let $R^s$ denote the roots of the trees on side $s$ and $R:=R^+\cup R^-$.
Note that each vertex in $R$ has degree $\Delta-1$ and these will be used to encode the
edges of $H$ (we give the reduction explicitly just after stating the relevant gadget lemma, Lemma~\ref{lem:gadgetferropotts}).
Note that $m'=(\Delta-1)^{\lfloor \theta \log_{\Delta-1}n\rfloor+\lfloor \psi \log_{\Delta-1}n\rfloor}$ and $m'=o(n^{1/4})$.

Denote by $G=(V,E)$ the final graph.
Recall, for a configuration $\sigma\in\Omega$, the set of
vertices assigned spin $i$ is denoted by $\sigma^{-1}(i)$.
The phase of a configuration $\sigma: V\rightarrow[q]$ is defined as
the dominant spin among vertices in $U=U^+\cup U^-$ (the vertices of degree $\Delta$ in $\Go$):
\[Y(\sigma):=\arg \max_{i\in [q]}|\sigma^{-1}(i)\cap U|,\]
where ties are broken with an arbitrary deterministic criterion (e.g., the lowest index).

The gadget $G$ behaves like a random bipartite $\Delta$-regular graph because $m'\ll n$,
as we will detail in the upcoming Lemma \ref{lem:gadgetferropotts}.  Hence, since $B>\Bp$,
Theorem~\ref{thm:Potts-diagram-detailed} implies that the $q$ ordered phases are dominant.  Therefore, we will get that
for a sample $\sigma$ from the Gibbs distribution, the phase of $\sigma$ will be (close to) uniformly distributed
over these $q$ ordered phases.   Let phase $i$ refer to
the ordered phase where spin $i$ is the majority.  Once we condition
on the phase for the vertices in $U$, say it is phase $i$,
then each of the roots of the trees appended to $G$, roughly independently, will have spin $i$ with probability $\approx p$ and spin $j\neq i$
with probability $\approx (1-p)/(q-1)$ where
 $p$ is the probability that the root of the infinite $(\Delta-1)$-ary tree has spin $i$ in the Gibbs measure corresponding to the ordered phase $i$.\footnote{The ordered phase $\alphab=(a,(1-a)/(q-1),\dots,(1-a)/(q-1))$ specifies the marginal probabilities for
the root of the infinite $\Delta$-regular tree.  To account for the root having degree $\Delta-1$ one obtains that:
\[ p = \frac{a^{(\Delta-1)/\Delta}}{
(a/(1-a))^{(\Delta-1)/\Delta} + (q-1)^{1/\Delta}}.
\]
Alternatively, $p=x/(x+q-1)$, for the same $x$ as in \eqref{eq:marginala}.
}
Hence, for each of the $q$ possible phases,
we define the following product distribution on the configurations $\sigma_R: R\rightarrow [q]$. For $i\in[q]$, let
\begin{equation}\label{def:proddistribution}
Q^{i}_R(\sigma_R)=p^{|\sigma^{-1}_R(i)|} \Big(\frac{1-p}{q-1}\Big)^{|R\backslash \sigma^{-1}_R(i)|}.
\end{equation}
For future use, one can define completely analogously the product measure $Q^{i}_W(\cdot)$ on configurations  $\sigma_W: W\rightarrow[q]$ (recall that $W$ is the set of vertices with degree $\Delta-1$ in $\Go$).

The following lemma is proved using methods in \cite{Sly10} and its proof is given in Section~\ref{sec:gadgets}. Roughly, the first item in the lemma follows from the symmetries of the Potts model. For the second item, the rough idea is that when the phase  is $i$, the marginal spin distribution of vertices in $W$ in the graph $\overline{G}$ is close to $Q^{i}_W$. The purpose of the trees is to boost this effect; more precisely, make the distance between the marginal spin distribution of vertices in $R$ and $Q^i_R$ an inverse polynomial factor (see Item~\ref{it:approxind} in Lemma~\ref{lem:gadgetferropotts}). In turn, the reason that the trees can accomodate the ``boosting" is that the marginal distribution on $W$ corresponds to an extremal Gibbs measure on the tree, which results in the spins of the roots of the trees being strongly concentrated. 
\begin{lemma}\label{lem:gadgetferropotts}
For every $q,\Delta\geq 3$ and $B>\Bp$, there exist constants $\theta, \psi>0$ such that the graph $G$ satisfies the following with probability $1-o(1)$ over the choice of the graph:
\begin{enumerate}
\item The phases occur with roughly equal probability, so that for every phase $i\in [q]$, we have \label{it:phaseprob}
\[\Big|\mu_G\big(Y(\sigma)=i\big)-\frac{1}{q}\Big|\leq n^{-2\theta}.\]
\item Conditioned on the phase $i$, the spins of vertices in $R$ are approximately independent, that is,   \label{it:approxind}
\[\max_{\sigma_R}\Big|\frac{\mu_G\big(\sigma_R\, |\, Y=i\big)}{Q_R^{i}(\sigma_R)}-1\Big|\leq n^{-2\theta}.\]
\end{enumerate}
\end{lemma}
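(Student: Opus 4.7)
The plan is to adapt Sly's framework \cite{Sly10} (originally developed for the hard-core gadget) to the ferromagnetic Potts setting, substituting the Potts tree recursions \eqref{kkrtko} and the phase diagram of Theorem \ref{thm:Potts-diagram} for their hard-core analogues. The key inputs are: (i) for $B > \Bp$, the $q$ ordered phases are the unique Hessian dominant phases of $\Psi_1$, with strictly Jacobian-attractive ordered fixpoints of the tree recursion, by Theorem \ref{thm:Potts-diagram} combined with Theorem \ref{thm:connection}; (ii) the second-moment identity (Theorem \ref{thm:copo}) together with the small subgraph conditioning captured in Lemma \ref{lem:smallgraph} provides concentration of $Z_G^{\alphab}$ for each Hessian dominant $\alphab$.

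First I would analyze the bipartite random-regular core $\Go$. Refining the phase to a pair $(\alphab^+, \alphab^-)$ of per-side frequency vectors, a first-moment calculation analogous to \eqref{eq:defofpsi1} shows that the maximum over $(\alphab^+,\alphab^-)$ is attained on the diagonal $\alphab^+ = \alphab^-$, with the maximizers being exactly the $q$ ordered phases inherited from Theorem \ref{thm:Potts-diagram}. A bipartite version of the ferromagnetic matrix-norm argument of Theorem \ref{thm:copo} gives matching second moments, and small subgraph conditioning yields that $Z_{\Go}^{\alphab_i}$ concentrates around its expectation for each ordered phase $i \in [q]$. Permutation symmetry of the Potts Hamiltonian equates the $q$ expectations, while the strict gap $\Psi_1(\u) < \Psi_1(\alphab_i)$ for $B > \Bp$ forces non-dominant phases (including the disordered phase and any off-diagonal bipartite phases) to contribute an exponentially smaller amount. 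Together this yields $\mu_{\Go}(Y(\sigma) = i) = 1/q \pm n^{-\Omega(1)}$. Attaching the trees only re-weights the marginals at the $W$-vertices by a permutation-symmetric factor computed from tree partition functions, so conclusion \eqref{it:phaseprob} is preserved.

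For \eqref{it:approxind} I would pass to the Gibbs measure conditioned on phase $Y = i$. Inside the ordered phase the correlations on $\Go$ decay exponentially: this is a consequence of Jacobian attractivity of the ordered fixpoint together with the fact that random bipartite $\Delta$-regular graphs have girth $\Omega(\log n)$, so standard short-cycle estimates give that any collection of $O(n^\theta)$ vertices of $\Go$ has joint distribution factorizing up to an error $n^{-\Omega(1)}$. Applied to the $W$-vertices (which are the leaves of the attached trees, partitioned into disjoint groups of size $n^\psi$), this gives approximate independence across the different trees. On each individual tree I would iterate \eqref{kkrtko} upward from the leaves: Jacobian attractivity with spectral radius $1-\delta$ for some $\delta>0$ implies that after $\ell = \lfloor \psi \log_{\Delta-1} n \rfloor$ levels the root marginal is $(p, \tfrac{1-p}{q-1}, \ldots, \tfrac{1-p}{q-1})$ up to multiplicative error $(1-\delta)^{\ell} = n^{-\Omega(\psi)}$. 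Combining the independence on $\Go$ with the tree-recursion computation on each tree yields the product form $Q_R^i$ for the joint distribution of the roots, up to the desired error $n^{-2\theta}$.

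The main obstacle is the quantitative bookkeeping: the target error $n^{-2\theta}$ must simultaneously absorb the small-subgraph-conditioning fluctuations underlying \eqref{it:phaseprob}, the cycle-induced dependence of the $W$-leaves in $\Go$, and the exponential convergence error along the trees. The freedom to take $\theta$ and $\psi$ arbitrarily small within $(0,1/8)$, together with the strict gap $\Psi_1(\alphab_i) > \Psi_1(\alphab)$ for every non-dominant $\alphab$ provided by Theorem \ref{thm:Potts-diagram}, gives enough room in the exponents to match all bounds; the argument then parallels the bookkeeping in \cite{Sly10}.
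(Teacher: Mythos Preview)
Your overall plan---adapt Sly's framework using the Potts phase diagram and the concentration machinery of Theorem~\ref{thm:copo} and Lemma~\ref{lem:smallgraph}---is correct, and your treatment of Item~\ref{it:phaseprob} via permutation symmetry plus small-subgraph conditioning matches the paper. The gap is in your mechanism for Item~\ref{it:approxind}.

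You write that ``inside the ordered phase the correlations on $\Go$ decay exponentially'' as a consequence of Jacobian attractivity together with ``the fact that random bipartite $\Delta$-regular graphs have girth $\Omega(\log n)$''. Both ingredients fail. Random bipartite $\Delta$-regular graphs do \emph{not} have girth $\Omega(\log n)$: the number of cycles of each fixed even length converges to a positive Poisson constant (this is exactly why small subgraph conditioning is needed at all), so the girth is $O(1)$ with probability bounded away from zero. More importantly, Jacobian attractivity of the ordered fixpoint is a statement about the \emph{dynamics} of the tree recursion near that fixpoint; it does not yield spatial correlation decay in the Gibbs measure on $\Go$. Indeed, for $B>\Bp>\Bu$ the model is in tree non-uniqueness, so far-away boundaries \emph{do} influence a vertex, and no Weitz-type correlation-decay argument is available even on trees, let alone on $\Go$. (There is also a quantitative mismatch: you claim factorization for $O(n^\theta)$ vertices but then apply it to $|W|=m'\approx n^{\theta+\psi}$ leaves.)

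The paper's (and Sly's) route bypasses correlation decay on $\Go$ entirely. The key step, captured in Lemma~\ref{lem:gadgetproof}, is to prove that for \emph{every fixed} boundary configuration $\eta:W\to[q]$ one has the asymptotic $\E_{\Gc}[Z^i_{\Go}(\eta)]=(1+o(1))\,Q^i_W(\eta)\,\E_{\Gc}[Z^i_{\Go}]$ together with concentration of $Z^i_{\Go}(\eta)$ uniformly over $\eta$ (via small subgraph conditioning). Dividing gives $\mu_{\Go}(\sigma_W=\eta\mid Y=i)\approx Q^i_W(\eta)$ directly, i.e., the leaves are approximately i.i.d.\ under $Q^i_W$, without any appeal to decay of correlations. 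Only \emph{after} this is established does one propagate from $W$ to the roots $R$ along the attached trees, and it is there---on genuine trees with leaf law close to the ordered marginal---that contraction of the recursion (the reconstruction-type bounds of \cite{MST} for ferromagnetic Potts with constant boundary) is invoked. Your tree-propagation paragraph is along the right lines; what needs to be replaced is the justification for the approximate product law at $W$.
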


With Lemma~\ref{lem:gadgetferropotts} at hand, we can now formally state the reduction that we sketched earlier. Let $B>\Bp$. Let $H$ be a graph on $n'$ vertices, where $n'\leq n^{\theta/4}$ and $\theta$ is as in Lemma~\ref{lem:gadgetferropotts}. Assuming an FPRAS for the ferromagnetic Potts model on max-degree $\Delta$ graphs and parameter $B$, we will show that we can approximate $Z_H(B^{*})$, the partition  function of  $H$ in the ferromagnetic Potts model with parameter $B^{*}$, where $B^{*}$ will be determined shortly.

To do this, we first construct a graph $H^G$. First, take $|H|$ disconnected copies of the gadget $G$ in Lemma~\ref{lem:gadgetferropotts} and identify each copy with a vertex $v\in H$. Denote by $\hat{H}^G$ the resulting graph, $G_v$ the copy of the gadget associated to the vertex $v$ in $H$ and by $R^{+}_v,R^{-}_v, R_v$ the images of $R^+,R^-,R$ in the gadget $G_v$, respectively. We next add the edges of $H$ in $\hat{H}^G$. To do this, fix an arbitrary orientation of the edges of $H$.  For each oriented edge $(u,v)$ of $H$, we  add an edge  between one vertex in $R^{+}_u$ and one vertex in $R^{-}_v$, using mutually distinct vertices for distinct edges of $H$.   The resulting graph will be denoted by  $H^G$.  Note that $H^G$ is bipartite and has maximum degree $\Delta$.

For a graph $H$ and activity $B\geq 1$, recall that $Z_H(B)$ is the partition function for the ferromagnetic Potts model at activity $B$ on the graph $H$.
We have the following connection:
\begin{lemma}\label{lem:functionconnectpotts}
Let $\Delta,q\geq 3$ and $B>\Bp$. There exists $B^*>1$ (depending only on $q,\Delta,B$) such that the following holds for every graph $H$ with $n'$ vertices:
\[\big(1-O(n^{-\theta})\big)\frac{q^{n'}Z_{H^G}(B)}{C_H\big(Z_G(B)\big)^{n'}}\leq Z_{H}(B^*)\leq \big(1+O(n^{-\theta})\big)\frac{q^{n'}Z_{H^G}(B)}{C_H\big(Z_G(B)\big )^{n'}},\]
where $C_H=D^{|E(H)|}$ and $D=1+(B-1)\Big(\frac{2p(1-p)}{(q-1)^2}+(q-2)\frac{(1-p)^2}{(q-1)^2}\Big)$.
\end{lemma}

Using Lemma \ref{lem:functionconnectpotts} we can now prove that for all $\Delta\geq 3$, all $B>\Bp$, it is
\#BIS-hard to approximate the partition function of the ferromagnetic Potts model on bipartite graphs of maximum degree~$\Delta$.
\begin{proof}[Proof of Theorem~\ref{thm:BIS-potts}]
Goldberg and Jerrum \cite{GJ:potts} showed that for every $B>1$ it is \#BIS-hard to approximate the partition function
of the ferromagnetic Potts model on all graphs.
Fix $\Delta,q\geq 3$ and $B>\Bp$ for which we intend to prove Theorem \ref{thm:BIS-potts}, and let $B^*=B^*(q,\Delta,B)>1$ be specified as in Lemma \ref{lem:functionconnectpotts}.
We first show that an FPRAS for  approximating the partition function with activity $B$ on graphs with maximum degree $\Delta$
implies an FPRAS for approximating the partition function with activity $B^*$ on all graphs.  It will then be clear that our reduction
is in fact approximation-preserving and hence the theorem will be proven.

Suppose that there exists an FPRAS for approximating the partition function with activity $B$ on bipartite graphs with maximum degree $\Delta$.
 Take an input instance $H$ for which we would like to estimate the partition function of the Potts model at activity $B^*$.
 First generate a random gadget $G$ using the construction defined earlier.  This graph $G$ satisfies the properties in Lemma~\ref{lem:gadgetferropotts} with probability $1-o(1)$.   Approximate the partition function of $G$ at activity $B$ within a multiplicative factor $1\pm\epsilon/10n'$
 using our presumed FPRAS (where, recall, $n'$ is the number of vertices in $H$).
Also, using the presumed FPRAS approximate the partition function of $H^G$ at activity $B$ within a multiplicative factor $1\pm\epsilon/5$. The bounds for $Z_{H}(B^*)$ in Lemma~\ref{lem:functionconnectpotts} are then within a factor $1\pm \epsilon$ for sufficiently large $n$, implying an FPRAS for approximating the partition function at activity $B^*$. This, together with the result of \cite{GJ:potts}, implies an FPRAS for counting independent sets in bipartite graphs.
\end{proof}

\begin{proof}[Proof of Lemma~\ref{lem:functionconnectpotts}]
Recall that $\hat{H}^G$ are the disconnected copies of the gadgets, as defined in the construction of $H^G$.
Note, $Z_{\hat{H}^{G}}(B)=\big(Z_G(B)\big)^{n'}$.
Hence to prove the lemma it suffices to analyze $\frac{Z_{{H}^{G}}(B)}{Z_{\hat{H}^{G}}(B)}$.

For a configuration $\sigma$ on $H^G$, for each $v\in H$, let $Y_v(\sigma)$ denote the phase of $\sigma$ on $G_v$.
Denote the vector of these phases by $\Yc(\sigma)=(Y_v(\sigma))_{v\in H}\in [q]^{H}$, we refer to $\Yc(\sigma)$ as the phase vector for~$\sigma$.

\newcommand\U{\mathcal{U}}

 For $\U\in [q]^{H}$, let $\Omega_\U$ denote the set of configurations $\sigma$ on $H^G$ where
 $\Yc(\sigma) = \U$.  Let $Z_{H^G}(\U)$ be the partition function of $H^G$ restricted to configurations $\sigma\in\Omega_\U$, that is,
 \[Z_{H^{G}}(\U)=\sum_{\sigma\in\Omega_{\U}} B^{m(\sigma)},  \]
where for a configuration $\sigma$, $m(\sigma)$ is the number of monochromatic edges under $\sigma$. We may view $\U$
as an assignment $V(H)\rightarrow [q]$ where $V(H)$ are the vertices in the graph $H$.
Hence, we can consider the number of monochromatic edges in the graph $H$ under the assignment $\U$, which
we denote by $m(\U)$.
Recall the goal is to analyze $\frac{Z_{{H}^{G}}(B)}{Z_{\hat{H}^{G}}(B)}$.  To this end we will analyze
$\frac{Z_{H^{G}}(\U)}{Z_{\hat{H}^{G}}(\U)}$ for every $\U$ and then we will use that every $\U$ is (close to) equally likely in
$\hat{H}^G$ which will follow from Property \ref{it:phaseprob} in Lemma \ref{lem:gadgetferropotts}.

Denote by $R_H$ the set of vertices $\cup_v R_v$, i.e., the union of all the vertices of degree $\Delta-1$ in  $\hat{H}^{G}$. Notice that once we fix an assignment to all of the vertices in $R_H$, by the Markov property of the model, we have that
\begin{align*}
\frac{Z_{H^{G}}(\U)}{Z_{\hat{H}^{G}}(\U)}
&=\sum_{\sigma_{R_{H}}}\mu_{\hat{H}^{G}}(\sigma_{R_{H}} \, |\, \Yc(\sigma)=\U)\prod_{(u,v)\in E(H^G)\backslash E(\hat{H}^G)}B^{\mathbf{1}\{\sigma_{R_H}(u)=\sigma_{R_H}(v)\}}.
\end{align*}
Note that $\mu_{\hat{H}^{G}}(\sigma_{R_{H}} \, |\, \Yc(\sigma)=\U)=\big(1+O(n^{-\theta})\big)\prod_{v\in V(H)}Q^{\U_v}_{R_v}(\sigma_{R_v})$ since $\hat{H}^{G}$ is a union of disconnected copies of $G$ and in each copy of $G$ we have Property~\ref{it:approxind}
of Lemma \ref{lem:gadgetferropotts}. It follows that
\begin{align*}
\frac{Z_{H^{G}}(\U)}{Z_{\hat{H}^{G}}(\U)}&=\big(1+O(n^{-\theta})\big)\sum_{\sigma_{R_H}}\prod_{v\in V(H)}Q^{\U_v}_{R_v}(\sigma_{R_v})\prod_{(u,v)\in E(H^G)\backslash E(\hat{H}^G)}B^{\mathbf{1}\{\sigma_{R_H}(u)=\sigma_{R_H}(v)\}}\\
&=\big(1+O(n^{-\theta})\big)A^{m(\U)}D^{|E(H)|-m(\U)},
\end{align*}
where $A$ (resp. $D$)  is the expected weight of an edge connecting two gadgets which have the same (resp. different) phases. Simple calculations show that
\[A=1+(B-1)\Big(p^2+\frac{(1-p)^2}{q-1}\Big),\quad D=1+(B-1)\Big(\frac{2p(1-p)}{q-1}+(q-2)\frac{(1-p)^2}{(q-1)^2}\Big).\]
Letting $B^{*}=A/D$ and $C_H=D^{|E(H)|}$, we obtain
\begin{equation}\label{eq:phasecontribution}
\frac{Z_{H^{G}}(\U)}{Z_{\hat{H}^{G}}(\U)}=\big(1+O(n^{-\theta})\big)(B^{*})^{m(\U)}C_H.
\end{equation}

Property~\ref{it:phaseprob} in Lemma~\ref{lem:gadgetferropotts} gives that for every $\U$ it holds that
\begin{equation}\label{eq:phasecontribution2}
\big(1-O(n^{-\theta})\big)q^{-n'}\leq \Big(\frac{1}{q}-n^{-2\theta}\Big)^{n'}\leq \frac{Z_{\hat{H}^{G}}(\U)}{Z_{\hat{H}^{G}}}\leq \Big(\frac{1}{q}+n^{-2\theta}\Big)^{n'}\leq \big(1+O(n^{-\theta})\big)q^{-n'}.
\end{equation}
We also have
\begin{equation}\label{eq:totalpart}
Z_{H^G}(B)=\sum_{\U}Z_{H^{G}}(\U)=\sum_{\U}\frac{Z_{H^G}(\U)}{Z_{\hat{H}^{G}}(\U)}Z_{\hat{H}^{G}}(\U)=Z_{\hat{H}^{G}}\sum_{\U}\frac{Z_{H^{G}}(\U)}{Z_{\hat{H}^{G}}(\U)}\frac{Z_{\hat{H}^{G}}(\U)}{Z_{\hat{H}^{G}}}.
\end{equation}

Using the estimates \eqref{eq:phasecontribution}, \eqref{eq:phasecontribution2} in \eqref{eq:totalpart}, we obtain
\[\big(1-O(n^{-\theta})\big)q^{-n'}C_H Z_H(B^{*})\leq \frac{Z_{H^{G}}(B)}{Z_{\hat{H}^{G}}(B)}\leq \big(1+O(n^{-\theta})\big)q^{-n'}C_H Z_H(B^{*}).\]
The result follows after observing that $Z_{\hat{H}^{G}}(B)=\big(Z_G(B)\big)^{n'}$ and rearranging the inequality.
\end{proof}

\subsection{Proving the properties of the gadget}\label{sec:gadgets}
In this section, we prove the properties of the gadget we use, as stated in Lemma~\ref{lem:gadgetferropotts}. We outline the proof and introduce the relevant notation. The proof follows the same approach as in \cite[Theorem 2.1]{Sly10} and uses non-reconstruction results in \cite{MST}. We argue however more thoroughly for Item~\ref{it:phaseprob} in Lemma~\ref{lem:gadgetferropotts}, since in \cite{Sly10} a cruder bound for the probability that a phase appears was sufficient. In our case, the more delicate bound will follow from the symmetries of the Potts model. We first illustrate how symmetry comes into play.

Let $\Sigma^i_G$ be the set of configurations on $G$ which have phase $i$, i.e., $\Sigma^i_G:=\{\sigma:\, V\rightarrow [q]\, | \, Y(\sigma)=i\}$. Moreover, let $\Sigma^{o}_G$ be the set of configurations $\sigma$ which satisfy $\big|\arg \max_{i\in [q]}|\sigma^{-1}(i)\cap U|\big|\geq 2$, that is, $\Sigma^{o}_G$ consists of these configurations whose phase was determined by breaking a tie. We first show that Item~\ref{it:phaseprob} in Lemma~\ref{lem:gadgetferropotts} will follow from showing that  $\Sigma^{o}_G$ has exponentially smaller contribution to the partition function of $G$ than $\Sigma^i_G$ for every $i\in [q]$.

To capture this, for a subset $\Sigma\subseteq \Omega_G$ of the configuration space, denote by $Z_G(\Sigma)$  the partition function restricted to configurations in $\Sigma$, that is,
\begin{equation*}
Z_G(\Sigma)=\sum_{\sigma\in\Sigma} w_G(\sigma).
\end{equation*}

Let $\pi$ be a permutation of the colors $[q]$ which maps color $i$ to color $j$. For a configuration $\sigma$, we denote by $\pi(\sigma)$ the configuration $\pi \circ \sigma$. Clearly, for every configuration $\sigma \in \Sigma^{i}_G\backslash \Sigma^{o}_G$ we have $\pi(\sigma)\in \Sigma^{j}_G\backslash \Sigma^{o}_G$. It follows that for every two colors $i,j$ we have $Z_G(\Sigma^{i}_G\backslash  \Sigma^{o}_G)=Z_G(\Sigma^{j}_G\backslash \Sigma^{o}_G)$. Since
\[Z_G=Z_G(\Sigma^{o}_G)+\sum_{i} Z_G(\Sigma^{i}_G\backslash \Sigma^{o}_G),\]
to get the inequality in Item~\ref{it:phaseprob} of Lemma~\ref{lem:gadgetferropotts} it suffices to show that $Z_G(\Sigma^{o}_G)$ is smaller than $Z_G$ by a sufficiently large polynomial factor. 

We briefly outline the argument for proving that $Z_G(\Sigma^{o}_G)$ is smaller than $Z_G$ by a sufficiently large polynomial factor, introducing at the same time some relevant notation. First, note that the definition of the phase of a configuration makes sense for configurations on $\Go$ as well. For convenience, we will henceforth  use  $Z_G^o, Z_G^i$ as shorthands for $Z_G(\Sigma^{o}_G), Z_G(\Sigma^{i}_G)$ and $Z_{\Go}^o, Z_{\Go}^i$ for their analogues in $\Go$. Roughly, we will first show that $Z^o_{\overline{G}}$ is exponentially smaller than $Z^i_{\overline{G}}$ with probability $1-o(1)$ (over the choice of $\Go$). This part follows from the fact that the $q$ ordered phases are dominant and the fact that $Z^i_{\overline{G}}$ matches its expectation (up to a polynomial factor) with probability $1-o(1)$ (this is the analogue of Lemma~\ref{lem:smallgraph} for the graph distribution induced by $\Go$).   We will then show that $Z^o_{G}$ is smaller  than $Z^i_{G}$ by a factor of  $\exp(n^{1/4})$ by crudely accounting for the contribution of the trees attached in the second step of the construction of $G$. Summing over $i\in[q]$ then yields the desired bound and thus completes the ``symmetry" argument.

To formalize the outline in the previous paragraph, we will have to capture how the partition functions $Z_{\Go}$ and $Z_G$ interplay. Due to the Markov property, this happens only through vertices in $W$ (recall, this is the set of vertices of degree $\Delta-1$ in the graph $\Go$ on which the trees are attached). Thus, we will partition the sets $\Sigma^o_{\Go},\Sigma^i_{\Go}$ according to the configuration $\eta$ on $W$. In particular, $\Sigma^o_{\Go}(\eta)$ will be those configurations $\sigma$ in $\Sigma^o_{\Go}$ such that $\sigma_W=\eta$ and $Z^o_{\Go}(\eta)$ will be the contribution to the partition function of $\Go$ from configurations in $\Sigma^o_{\Go}(\eta)$. Define similarly $\Sigma^i_{\Go}(\eta)$ and $Z_{\Go}(\eta)$. 

We need a final piece of notation. Let $J$ be the union of the trees appended in the second step of the construction of the gadget $G$. Note that the only vertices of $\Go$ included in $J$ are vertices in $W$. Let $Z_J(\eta)$ be the contribution to the partition function of $J$ from configurations $\sigma$ (on $J$) such that $\sigma_W=\eta$. We are now able to put these definitions into work. In particular, we have that
\[Z^i_G=\sum_{\eta:W\rightarrow [q]} Z^i_{\Go}(\eta)Z_J(\eta)\mbox{ and } Z^o_G=\sum_{\eta:W\rightarrow [q]} Z^o_{\Go}(\eta)Z_J(\eta).\]
We will need the following lemma, which is proved combining techniques from \cite{Sly10}, \cite[Appendices A \& B]{GSV:colorings} and the phase diagram for the Potts model (note that for ferromagnetic models, the dominant phases have the same quantitative structure in bipartite graphs\footnote{\label{foot:bipartite}Dominant phases on random bipartite $\Delta$-regular graphs correspond to the global maximizers of $\max_{\bR,\Cb}\frac{ \bR^{\T} \B \Cb}{\|\bR\|_{p}\|\Cb\|_{p}}$ where $p=\Delta/(\Delta-1)$, see \cite[Theorem 4.1]{GSV:colorings}. As a consequence of Observation~\ref{obs:ferro}, for a ferromagnetic model, any such maximum must satisfy $\bR=\Cb$ (up to a scaling factor). This yields that the maximizers are in one-to-one correspondence with the maximizers of the r.h.s of \eqref{zzzok}.}).

\begin{lemma}\label{lem:gadgetproof}
Let $\Gc:=\Gc_n$ denote the distribution of the random bipartite graph $\Go$. For $B>\Bp$, it holds that 
\begin{enumerate}[label=\emph{(\roman*)}, ref=(\roman*)]
\item  \label{it:moments} There exist constants $C_1,C_2,\Psi$ depending only on $q,B,\Delta$, such that for every $i\in[q]$ and  $\eta:W\rightarrow[q]$,
\begin{equation}\label{eq:asympmoments}
\begin{aligned}
\E_{\Gc}\left[Z^i_{\Go}\right]&=\big(1+O(n^{-1/2})\big)C_1C^{2m'}_2\exp(n \Psi),\\ \E_{\Gc}[Z^i_{\Go}(\eta)]&=\big(1+O(n^{-1/2})\big)Q^i_{W}(\eta)\E_{\Gc}\left[Z^i_{\Go}\right].
\end{aligned}
\end{equation}
\item \label{it:null} For all sufficiently small $\epsilon>0$ and sufficiently large $n$, for $i\in [q]$, 
\begin{equation}\label{eq:zo}
\E_{\Gc}\left[Z^{o}_{\Go}\right]\leq \exp(-\epsilon n)\E_{\Gc}\left[Z^i_{\Go}\right].
\end{equation}
\item \label{it:smallgraph}$\displaystyle\max_{i\in[q],\, \eta:\, W\rightarrow [q]}\mathrm{Pr}_{\Gc}\left(Z^i_{\Go}(\eta)<\frac{1}{n}\E_{\Gc}[Z^i_{\Go}(\eta)]\right)\rightarrow 0\mbox{ as } n\rightarrow\infty$.
\end{enumerate}
\end{lemma}
\begin{proof}[Proof of Lemma~\ref{lem:gadgetproof}]
The first equality in \eqref{eq:asympmoments}  is proved in \cite[Lemma B.3]{GSV:colorings}. The second equality in \eqref{eq:asympmoments} is proved in \cite[Lemma 6.11]{GSV:colorings}: the lemma is stated in the case where $m'$ is a constant, but the proof also holds when $m'=o(n^{1/2})$ whenever the dominant phases are Hessian as was first illustrated in \cite{Sly10}. The explicit error factors $O(n^{-1/2})$ in \eqref{eq:asympmoments} are a consequence of Stirling's approximation (see \cite[Lemma 3.1]{Sly10} for the explicit derivation in the hard-core model which is straightforward to adapt to the present setting as well). Combining the above yields Item~\ref{it:moments}.

Item~\ref{it:null} is a consequence of the fact that for some small $\epsilon'>0$, for each ordered phase $i$ there exists an $\epsilon'$-ball around it consisting solely of configurations which are contained in $\Sigma^i_{\Go}\backslash \Sigma^o_{\Go}$. Since the ordered phase $i$ is dominant, a standard compactness argument (see for example the upcoming proof of Lemma~\ref{lem:pain})  yields that $\frac{1}{n}\log\E_{\Gc}\big[Z^{o}_{\Go}\big]$ is strictly less than $\frac{1}{n}\log\E_{\Gc}\big[Z^i_{\Go}\big]$.

Finally, Item~\ref{it:smallgraph} follows from the small subgraph conditioning method in \cite[Appendix A]{GSV:colorings} (see also Theorem~\ref{thm:smallgraphmethod}).
\end{proof}

We conclude this section by giving the proof of Lemma~\ref{lem:gadgetferropotts}. 
\begin{proof}[Proof of Lemma~\ref{lem:gadgetferropotts}]
To get Item~\ref{it:phaseprob}, by the symmetry argument described in the beginning of the section, it suffices to show that for every $i\in[q]$ it holds that $Z^o_G\leq \exp(-n^{1/4})Z^i_G$ with probability $1-o(1)$ over the choice of the graph $G$. We use Lemma~\ref{lem:gadgetproof}. In particular, Markov's inequality yields
\begin{equation}\label{eq:nullsmall}
\mathrm{\Pr}_{\Gc}\left(\sum_{\eta:W\rightarrow [q]} Z^o_{\Go}(\eta) Z_{J}(\eta)>n\sum_{\eta:W\rightarrow [q]} Z_{J}(\eta)\E_{\Gc}[Z^o_{\Go}(\eta)]\right)\rightarrow 0\mbox{ as } n\rightarrow \infty.
\end{equation}
Item~\ref{it:smallgraph} of Lemma~\ref{lem:gadgetproof} yields for every $i\in[q]$
\begin{equation}\label{eq:phaseslarge}
\mathrm{\Pr}_{\Gc}\left(\sum_{\eta:W\rightarrow [q]} Z^i_{\Go}(\eta) Z_{J}(\eta)<\frac{1}{2n}\sum_{\eta:W\rightarrow [q]} Z_{J}(\eta)\E_{\Gc}[Z^i_{\Go}(\eta)]\right)\rightarrow 0\mbox{ as } n\rightarrow \infty.
\end{equation}
 From \eqref{eq:asympmoments} and $\sum_{\eta}Q^i_W(\eta)=1$, it follows that  
\begin{equation}\label{eq:oneoneone}
\begin{aligned}
\sum_{\eta:W\rightarrow [q]} Z_{J}(\eta)\E_{\Gc}[Z^i_{\Go}(\eta)]&= (1+o(1))\E_{\Gc}[Z^i_{\Go}]\sum_{\eta:W\rightarrow [q]} Z_{J}(\eta)Q^i_W(\eta)\\
&\geq (1+o(1))\E_{\Gc}[Z^i_{\Go}]\min_{\eta:W\rightarrow [q]} Z_{J}(\eta).
\end{aligned}
\end{equation}
From the crude bound $\max_{\eta}Z_{J}(\eta)\leq \exp(o(n^{1/4}))\min_{\eta}Z_{J}(\eta)$ and \eqref{eq:zo}, it follows that  
\begin{equation}\label{eq:twotwotwo}
\begin{aligned}
\E_{\Gc}[Z^i_{\Go}]\min_{\eta:W\rightarrow [q]} Z_{J}(\eta)&\geq \exp(n^{1/2})\E_{\Gc}[Z^o_{\Go}]\max_{\eta:W\rightarrow [q]} Z_{J}(\eta)\\
&\geq \exp(n^{1/2})\sum_{\eta:W\rightarrow [q]} Z_{J}(\eta)\E_{\Gc}[Z^o_{\Go}(\eta)].
\end{aligned}
\end{equation}
Combining \eqref{eq:oneoneone} and \eqref{eq:twotwotwo}, yields 
\begin{equation}\label{eq:aaabbb}
\sum_{\eta:W\rightarrow [q]} Z_{J}(\eta)\E_{\Gc}[Z^i_{\Go}(\eta)]\geq \exp(n^{1/2})\sum_{\eta:W\rightarrow [q]} Z_{J}(\eta)\E_{\Gc}[Z^o_{\Go}(\eta)],
\end{equation}
Combining  \eqref{eq:nullsmall}, \eqref{eq:phaseslarge} and \eqref{eq:aaabbb} yields that $Z^o_G\leq n\exp(-n^{1/2})Z^i_G\leq \exp(-n^{1/4})Z^i_G$ with probability $1-o(1)$ over the choice of the graph $G$, as wanted.  This proves the first item of the lemma.

Item~\ref{it:approxind} of the lemma follows exactly the approach in \cite{Sly10}. The required non-reconstruction results to push the approach in \cite{Sly10} are given in \cite[Proof of Theorem 1.4]{MST} (ferromagnetic Potts model on the tree with constant boundary condition). Together with Lemma~\ref{lem:gadgetproof}, the proof of \cite[Theorem 2.1]{Sly10} extends almost verbatim to our case as well. We briefly outline the main ideas of the proof as carried out in \cite{Sly10}.

Recall that the goal is to show that, conditioned on the phase $i$, the distribution of the spins in vertices in $R$ is close to $Q^i_R(\cdot)$. For $i\in [q]$, let (analogously to \cite{Sly10})
\[\mathcal{B}_i:=\Big\{\eta:W\rightarrow[q]\mid \max_{\tau:R\rightarrow[q]}\Big|\mu_G\big(\sigma_R=\tau \mid \sigma_W=\eta\big)-Q_R^{i}(\sigma_R=\tau)\Big|> n^{-3\theta}\Big\},\]
i.e., $\mathcal{B}_i$ is the set of ``bad" configurations on $W$ which exert large influence on vertices in $R$. Note that, while we defined $\mathcal{B}_i$ using  the Gibbs distribution of the graph $G$, we could have used instead the Gibbs distribution of $J$, since, by the Markov property, conditioned on the spins of vertices in $W$, the spins of the vertices in $J$ are conditionally independent from the rest of the vertices in the graph $G$. It follows that
\begin{equation}\label{eq:realbi}
\mathcal{B}_i=\Big\{\eta:W\rightarrow[q]\mid \max_{\tau:R\rightarrow[q]}\big|\mu_J\big(\sigma_R=\tau \mid \sigma_W=\eta\big)-Q_R^{i}(\sigma_R=\tau)\big|> n^{-3\theta}\Big\}.
\end{equation}
Back to the proof, the result will follow from $\mu_G(\sigma_W\in \mathcal{B}_i\mid Y(\sigma)=i)\leq \exp(-n^{2\theta})$, for the technical details see \cite[Proof of Theorem 2.1]{Sly10}.

Note that 
\begin{equation}\label{eq:Hproductmeasure}
\mu_G\big(\sigma_W=\eta\, |\, Y(\sigma)=i\big)=\frac{Z^{i}_{\Go}(\eta)Z_J(\eta)}{Z_G^{i}}=\frac{Z^{i}_{\Go}(\eta)Z_J(\eta)}{\sum_{\eta':W\rightarrow [q]} Z^i_{\Go}(\eta') Z_{J}(\eta')}.
\end{equation}
Using analogous inequalities to those we used to prove Item~\ref{it:phaseprob}, it can be proved that
\[\mu_G\big(\sigma_W\in \mathcal{B}_i\, |\, Y(\sigma)=i\big)\leq poly(n) \nu^i(\sigma_W\in \mathcal{B}_i),\]
where the measure $\nu^i$ is defined on the space of all configurations $\eta:W\rightarrow [q]$ given by
\begin{equation}\label{eq:nuidef}
\nu^{i}(\eta):=\frac{Z_{J}(\eta)Q^{i}_W(\eta)}{\sum_{\eta':W\rightarrow[q]}Z_J(\eta')Q^{i}_W(\eta')}\propto \mu_{J}(\sigma_W=\eta)Q^{i}_W(\eta).
\end{equation}
Our goal is thus to show that 
\begin{equation}\label{eq:MSTtr}
\nu^i(\sigma_W\in\mathcal{B}_i)\leq \exp(-n^{2\theta}),
\end{equation}
It is useful to note at this point that the bound in \eqref{eq:MSTtr} is a property of the trees and, in particular, does not depend on the Gibbs distribution of the (random) graph $\overline{G}$. (Indeed, $\nu^i$ is specified by the Gibbs distribution on the graph $J$, which is a disjoint union of $(\Delta-1)$-ary trees, and the product measure $Q^{i}(\eta)$. Also, $\mathcal{B}_i$ is specified by the Gibbs distribution of the graph $J$, see \eqref{eq:realbi}.) Also, since $J$ is the disjoint union of a polynomial number of identical trees, by a union bound, it suffices to show \eqref{eq:MSTtr} when $J$ consists of a single $(\Delta-1)$-ary tree with height $\ell=\Theta(\log n)$ and $W$ denoting the leaves of the tree. In turn, this will follow from the following doubly exponential upper bound 
\begin{equation}\label{eq:MST}
\nu^i(\sigma_W\in\mathcal{B}_i)\leq \exp(-\exp(C\ell)), 
\end{equation} 
where $C>0$ is a constant. (To recover \eqref{eq:MSTtr} from \eqref{eq:MST}, we just need to tune the parameter $\theta$ of the gadget to ensure that the trees have sufficiently large height $\ell$ relative to $\theta$; recall that $\ell =  \lfloor \psi \log_{\Delta-1}n\rfloor$, so we can choose any constant $\theta$ so that $0<\theta<\frac{C\psi}{2\ln (\Delta-1)}$.)

We will conclude this proof by sketching the main idea behind the strong bound in \eqref{eq:MST}, a detailed proof of the bound with all the relevant connections can be found in Appendix~\ref{app:nonreconstruction}. The bound in \eqref{eq:MST} goes back to the works of Martinelli, Sinclair and Weitz \cite{MSTb,MST} who studied the mixing time of Glauber dynamics on trees with boundary conditions and was first used for the construction of gadgets by Sly \cite{Sly10}. A key idea, captured in \cite[Proof of Lemma 4.3]{Sly10}, is that the measures $\nu^{i}$ on configurations $\eta:W\rightarrow[q]$ can be viewed as projections of Gibbs measures corresponding to the ordered phases on the infinite $(\Delta-1)$-ary tree. The Gibbs measure corresponding to the ordered phase $i$ can be obtained by taking the weak limit of the Potts distribution of a finite tree with depth $\ell$ whose leaves are conditioned to have spin $i$ as $\ell\rightarrow\infty$. In Appendix~\ref{app:fixpoints}, we give an alternative Markov chain construction of these Gibss measures using a broadcasting process which is more convenient to work with. These Gibbs measures are well-known to be extremal; or, equivalently, that the broadcasting process has the non-reconstruction property, which roughly says that the spin of the root can not be reconstructed from a typical configuration on the leaves of the tree (asymptotically in $\ell$), see Appendix~\ref{app:broadcasting} for details that are relevant in our setting and the survey \cite{Mossel} for more details on broadcasting processes on trees. Then, the techniques of \cite{MSTb, MST} further show that a certain eigenvalue condition of the relevant broadcasting matrix allows to quantify the dependence on $\ell$ and thus obtain the bound in \eqref{eq:MST}. In Appendix~\ref{app:nonreconstruction}, we use a similar-flavored result from Sly and Zhang \cite{SlyZhang} which we can apply more directly in our setting.

This concludes the proof of Lemma~\ref{lem:gadgetferropotts}.
\end{proof}

\subsection{\#BIS-Hardness for Bipartite Colorings}\label{sec:biscolorings}
Using our \#BIS-hardness result for the ferromagnetic Potts model on bounded-degree graphs, we now prove our \#BIS-hardness result for colorings on bounded-degree bipartite graphs (Corollary~\ref{cor:colorings}). The reduction between these two problems was first observed in \cite{DGGJ}; here, we just have to work out the bound on $k$ that the application of Theorem~\ref{thm:BIS-potts} yields.
\begin{proof}[Proof of Corollary~\ref{cor:colorings}]
We will show that for all integer $k,\Delta\geq 3$, it holds that
\begin{equation}\label{eq:simpleredred}
\textsc{\#BipFerroPotts}\big(q=k,B=\frac{k-1}{k-2},\Delta\big)\leq_{\mathrm{AP}} \textsc{\#BipColorings}(k,\Delta),
\end{equation}
and that, 
\begin{equation}\label{eq:simpleredredred}
\mbox{whenever $k\leq \Delta/(2\ln \Delta)$, it holds that $B=\frac{k-1}{k-2}>\frac{k-2}{(k-1)^{1-2/\Delta}-1}=\Bp$}. 
\end{equation}
The corollary will then  follow from Theorem~\ref{thm:BIS-potts}. 

To prove \eqref{eq:simpleredred}, let $G=(V,E)$ be an input graph to the problem \textsc{\#BipFerroPotts}$(k,B,\Delta)$ with $B=(k-1)/(k-2)>1$. Construct an instance $G'$ of \textsc{\#BipColorings}$(k,\Delta)$ by subdividing each edge of $G$, i.e., $G'$ is a graph with vertex set $V'=V\cup E$ and edge set $E'=\bigcup_{e=(u,v)\in E}\{(u,e),(e,v)\}$. It is clear that $G'$ is bipartite and every vertex has degree at most $\Delta$.

We claim that the partition function for the $k$-state ferromagnetic Potts model on $G$ with $B=(k-1)/(k-2)$ is equal to the number of proper $k$-colorings on $G'$ (times an easily computable factor equal to $(k-2)^{|E|}$). 

To see this, for a $k$-coloring $\sigma'$ of $G'$, map $\sigma'$ to  a configuration $\sigma$ on $G$ given by the restriction of $\sigma'$ to vertices in $V$. Let $\sigma:V\rightarrow [k]$ be any configuration of the Potts model on $G$. The claim will follow by showing that the number of colorings of $G'$ which map to $\sigma$ is given by $(k-1)^{m(\sigma)}(k-2)^{|E|-m(\sigma)}$ where $m(\sigma)$ denotes the number of monochromatic edges in $G$ under $\sigma$. Indeed, for a monochromatic edge $e=(u,v)\in E$ under $\sigma$, there are $k-1$ ways to choose the color of the vertex $e\in V'$ in the graph $G'$. In contrast, if $e=(u,v)\in E$ is not monochromatic under $\sigma$, there are $k-2$ ways to choose the color of the vertex $e\in V'$ in the graph $G'$. This completes the proof of \eqref{eq:simpleredred}.  

We next show \eqref{eq:simpleredredred}. For $\Delta\leq 16$, we have $k\leq\frac{\Delta}{2\ln\Delta}<3$, so we may assume that $\Delta\geq 17$ (otherwise, there is nothing to prove).  We first reduce \eqref{eq:simpleredredred} to the case $k=\frac{\Delta}{2\ln\Delta}$. Let $d:=\Delta-1$, $z_o:=(k-1)^{1/(d+1)}$. The inequality $\frac{k-1}{k-2}>\frac{k-2}{(k-1)^{(d-1)/(d+1)}-1}$ is equivalent to
\[h_d(z_o)>0 \mbox{ where } h_d(z)=-z^{2(d+1)}_o+z^{2d}_o+z_o^{d+1}-1.\]
Fix $d\geq 16$. The polynomial $h_d(z)$ has two change of signs, so by the Descarte's rule of signs, it has at most two positive roots. Clearly, $z=1$ is a root of $h_d(z)$ and since $h'_d(1)>0$, there is one more root $z_d>1$. Thus, $h_d(z_o)>0$ iff $z_o<z_d$ (note that $z_o>1$). Thus, to show \eqref{eq:simpleredredred}, it suffices to consider the case where $k=\Delta/(2\ln \Delta)$.

Now, we prove the desired inequality for $k=\Delta/(2\ln \Delta)$. The inequality is equivalent to 
\[(k-1)^{-2/\Delta}>1-\frac{k-2}{(k-1)^2}\]
Now using the bound $x^{-2/\Delta}=\exp\big(-\frac{2\ln x}{\Delta}\big)\geq 1-\frac{2\ln x}{\Delta}$ for $x=k-1$, we only need to prove that 
\[\frac{2\ln(k-1)}{\Delta}<\frac{k-2}{(k-1)^2}. \] 
Now $2(k-1)\ln(k-1)\leq 2k\ln k=\Delta(1-\frac{\ln(2 \ln\Delta)}{\ln \Delta})$, so the inequality will follow from
\[k-1>\frac{\ln \Delta}{\ln(2 \ln\Delta)},\]
which holds for $k=\Delta/(2\ln \Delta)$ and all $\Delta\geq 17$, as wanted. This completes the proof of \eqref{eq:simpleredredred}.

This concludes the proof of Corollary~\ref{cor:colorings}.
\end{proof}

\section{Torpid mixing of Swendsen-Wang}\label{sec:swendsen-wang}


In this section, we prove Theorem \ref{thm:SW} about
torpid mixing of the Swendsen-Wang algorithm at the critical activity $B=\Bp$. More precisely, we will show that, with probability $1-o(1)$ over the choice of a random $\Delta$-regular graph with $n$ vertices, the  mixing time of the Swendsen-Wang algorithm is exponential in $n$. We will exploit Theorem \ref{thm:Potts-diagram-detailed} for $B=\Bp$, which in combination with Lemma \ref{lem:smallgraph}, essentially implies that for this value of $B$, we have \textit{coexistence} of the ordered and disordered phases in a random $\Delta$-regular graph (with probability $1-o(1)$).

Denote by $\u$ the disordered phase and by $\m_1,\hdots,\m_q$ the $q$ ordered phases of Theorem \ref{thm:Potts-diagram-detailed}. Note that the $q$ ordered phases are identical up to a permutations of the colors and for the purposes of this section we can treat them in a uniform manner. Thus, denote $\m = \{\m_1,\hdots,\m_q\}$. We will say that a configuration $\sigma$ is close to $\u$ (resp. $\m$) if the color frequencies in $\sigma$ are close to those prescribed by $\u$ (resp. one of $\m_1,\hdots,\m_q$).

Lemma \ref{lem:smallgraph} implies that, with probability $1-o(1)$ over the choice of the graph, the set of configurations near $\u$ and $\m$ dominate the Gibbs distribution, in the sense that these sets each have measure $\geq 1/poly(n)$ and the rest of the configurations have exponentially smaller mass.  To analyze the Swendsen-Wang algorithm we need a more refined picture which includes
the number of monochromatic edges in such configurations.  To this end, we define the following
quantities which roughly correspond to the expected number of monochromatic
edges for configurations in these two sets scaled by a factor of $n$ (see the upcoming equation \eqref{eq:optimalxs} and the remarks thereafter for the derivations).  Hence, let
\begin{equation}\label{eq:monochr}
E_{\m}:= \frac{\Delta}{2} \frac{B (x^2 + q - 1)}{(x + q - 1)^2 + (B - 1) (x^2 + q - 1)}, \mbox { and } E_{\u} := \frac{\Delta}{2} \frac{B}{q+B-1},
\end{equation}
where $x$, defined in Lemma~\ref{pppwq}, is a solution
of the normalized tree recursions.

Now we can define the set of configurations with vertex marginals close to $\u$ and $\m$ and
edge marginals close to $nE_{\u}$ and $nE_{\m}$, respectively.
For a configuration $\sigma\in \Omega$,
let $e_G(\sigma)$ denote the number of \textit{monochromatic} edges in $G$ under the spin configuration $\sigma$.
Recall $\sigma^{-1}(i)$ is the set of vertices with spin $i$ in $\sigma$. Let $\mathbf{c}(\sigma)$ denote the vector 
$(|\sigma^{-1}(1)|/n,\hdots,|\sigma^{-1}(q)|/n)$. For $\eps>0$, let
\begin{align}
 U=U(\eps) &:= \left\{ \sigma\in\Omega\, \big| \, \norm{\mathbf{c}(\sigma) - \u}_{\infty} \leq \eps \mbox{ and }
 |e_G(\sigma) - E_{\u} n| < \eps n \right\}.\label{eq:uconfigurations}\\
M=M(\eps) &:= \left\{ \sigma\in\Omega \,  \big|\ \mbox{ for some }\m_j,\, \norm{\mathbf{c}(\sigma) - \m_j}_{\infty} \leq \eps \mbox{ and }
 |e_G(\sigma) - E_{\m} n| < \eps n \right\},\label{eq:mconfigurations}\\
 T=T(\eps)&:= \Omega\setminus (U(\eps)\cup M(\eps)).
\end{align}

The following lemma is proved in Section~\ref{sec:phasecoexistence} and is the  main tool to obtain our torpid mixing results.
 \begin{lemma}
 \label{lem:main-SW}
Let $B=\Bp$. For all sufficiently small $\epsilon>0$,  there exists $C>0$, such that with probability $1-o(1)$ over the choice of the graph $G\sim \Gc(n,\Delta)$, it holds that
\begin{equation}
\label{eq:U-M-T}
\mu_G(U)  \geq  1/poly(n), \qquad
\mu_G(M)  \geq  1/poly(n), \qquad
\mu_G(T)  \leq  \exp(-Cn).
\end{equation}
\end{lemma}

For the rest of this section, we fix a graph $G$  whose Gibbs distribution satisfies \eqref{eq:U-M-T}. By Lemma~\ref{lem:main-SW}, this holds for asymptotically almost all $\Delta$-regular graphs.

Now to prove that the chain is torpidly mixing we will bound its conductance defined as
$\Phi_{SW} = \min_{S;\emptyset \subset S\subset\Omega} \Phi_{SW}(S)$ where
\[    \Phi_{SW}(S) = \frac{\sum_{\sigma \in S}\mu(\sigma)P(\sigma, \overline{S})}{\mu(S)\mu(\overline{S})},
\]
where $P$ denotes the transition matrix  for Swendsen-Wang.  To bound the conductance of the set $M$, we prove that
a configuration in $M$ is unlikely to transition to $U$ in one step.

\begin{lemma} \label{claim:M}
For $\sigma\in M$, \ \   $P\big(\sigma, U\big) < \exp(-c n)$ for some positive constant $c>0$.
\end{lemma}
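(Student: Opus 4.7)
The plan is to analyze one Swendsen-Wang step via its random-cluster representation: given $\sigma$, one first samples a bond configuration $\omega\subseteq E$ by independently retaining each monochromatic edge of $\sigma$ with probability $p:=1-1/B$, and then samples $\sigma'$ by assigning an independent uniformly random color from $[q]$ to each connected component of $(V,\omega)$. The main step is to show that with probability at least $1-\exp(-\Omega(n))$ over the choice of $\omega$, the graph $(V,\omega)$ contains a connected component $\Lambda$ with $|\Lambda|\geq(1/q+3\epsilon)n$. Conditional on this event, the color assigned to $\Lambda$ in the second half-step produces at least $(1/q+3\epsilon)n$ vertices of some single color in $\sigma'$, so $\norm{\mathbf{c}(\sigma')-\u}_{\infty}>\epsilon$ and hence $\sigma'\notin U$. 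Summing over the complementary failure event yields $P(\sigma,U)\leq\exp(-\Omega(n))$, as required.

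To establish the giant-component claim, note that $\sigma\in M$ is close to some ordered phase $\m_j$; without loss of generality take $j=1$, so the set $S$ of color-$1$ vertices in $\sigma$ has $|S|=(a\pm\epsilon)n$ with $a>1/q$ at $B=\Bp$ (by Theorem~\ref{thm:Potts-diagram} and the definition of the ordered phase). Because $G$ is a typical random $\Delta$-regular graph, its second-largest eigenvalue is $O(\sqrt{\Delta-1})$, and the expander mixing lemma pins the number of edges inside $S$ to $a^2\Delta n/2+o(n)$ for every configuration $\sigma\in M$, independently of the particular $\sigma$. Thus the induced subgraph $G[S]$ is a bounded-degree graph on $\approx an$ vertices with average degree $\approx a\Delta$, and at $B=\Bp$ its $p$-bond percolation is strictly supercritical. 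A standard exploration of the largest component combined with an exponential concentration bound (e.g., an edge-exposure martingale argument, or a direct large-deviation analysis of the natural branching process via stochastic domination) produces, with failure probability $\exp(-\Omega(n))$, a component of size $(1-\pi)an+o(n)$, where $\pi\in(0,1)$ is the extinction probability of the associated branching process at retention probability $p$.

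The main obstacle is verifying the quantitative inequality $(1-\pi)a>1/q+4\epsilon$ at $B=\Bp$, which is what forces the giant component to be large enough to violate membership in $U$. I would address this by identifying $1-\pi$ with the random-cluster order parameter on the infinite $\Delta$-regular tree at parameters $(\Bp,q)$, and then using the phase-coexistence structure at $\Bp$ established in Theorem~\ref{thm:Potts-diagram} together with the explicit formula~\eqref{eq:zmurko} to show that the ordered-phase giant-cluster fraction strictly exceeds $1/q$ by a constant depending only on $q$ and $\Delta$; one expects this slack to grow with $q/\Delta$, consistent with the hypothesis $q\geq 2\Delta/\log\Delta$ of Theorem~\ref{thm:SW}. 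A secondary concern, handled uniformly by the expander mixing lemma above, is that even atypical configurations in $M$ (not just the Gibbs-typical ones) have the correct edge structure inside the majority color class, so the percolation argument applies to every $\sigma\in M$ and not merely those reached from equilibrium.
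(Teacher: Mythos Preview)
Your approach is quite different from the paper's, and considerably harder. You overlook that the sets $M$ and $U$ are defined in the paper not only by the vertex-color profile but also by an \emph{edge} constraint: $\sigma\in M$ requires $|e_G(\sigma)-E_\m n|<\epsilon n$, and $\sigma'\in U$ requires $|e_G(\sigma')-E_\u n|<\epsilon n$. The paper exploits this directly. In the first half of the Swendsen--Wang step each of the $\approx E_\m n$ monochromatic edges is retained independently with probability $1-1/B$; every retained edge is automatically monochromatic in the new configuration $\sigma'$ (its endpoints lie in the same component), so $e_G(\sigma')\ge |M'|$. A single Chernoff bound gives $|M'|\ge (1-1/B)(E_\m-\epsilon)n-\epsilon n$ except with probability $\exp(-\Omega(n))$, and the only remaining ingredient is the scalar inequality $(1-1/B)E_\m>E_\u$, proved separately as Claim~\ref{cl:technicalineq} under the hypothesis $q\ge 2\Delta/\log\Delta$. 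Thus $e_G(\sigma')>(E_\u+\epsilon)n$, and $\sigma'\notin U$ by the edge constraint alone---no analysis of component sizes or vertex marginals is needed.

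Your route via giant components has real gaps. First, the expander mixing lemma gives an additive error of order $\lambda n=O(\sqrt{\Delta-1}\,n)$, not $o(n)$, for the edge count inside $S$; this matters since you need sharp constants. Second, and more seriously, ``supercritical $p$-bond percolation on $G[S]$'' does not follow from an average-degree calculation: $G[S]$ is a fixed (non-random) graph and your appeal to a branching-process extinction probability $\pi$ has no rigorous meaning here without first establishing that $G[S]$ is itself a sufficiently good expander, uniformly over every $S$ arising from a configuration in $M$. That is doable (again via expander mixing, applied to subsets of $S$), but it is a nontrivial extra argument, and the exponential concentration of the giant-component size then requires a further sprinkling or expansion-based proof. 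Third, the key quantitative inequality $(1-\pi)a>1/q$ is only sketched; even granting the identification with the tree random-cluster order parameter, you would still need an explicit computation at $B=\Bp$. In short, the paper's edge-counting argument replaces all of this with one Chernoff bound plus one elementary inequality.
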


\begin{proof}
We are going to argue that, with probability $1-\exp(-\Omega(n))$, the number of mononochromatic edges after one transition of Swendsen-Wang is too large to be in the set $U$. Note that the third step of Swendsen-Wang cannot decrease the number of monochromatic edges, so it suffices to analyze the first two steps.

Since $\sigma\in M$, by definition, the number of monochromatic edges under $\sigma$ is $(E_\m\pm \epsilon)n$. The expected number of edges left after the second step of Swendsen-Wang is thus $(1-1/B)(E_\m\pm \epsilon)n$. The following claim implies that for sufficiently small $\epsilon$, this is greater than $(E_\u\pm \epsilon)n$, the number of monochromatic edges in a configuration from $U$. The proof is given in Section~\ref{sec:phasecoexistence}.
\begin{claim}\label{cl:technicalineq}
Let $\Delta\geq 3$ and $q\geq 2\Delta/\log\Delta$. For $B=\Bp$, it holds that $E_{\m}/E_{\u}>1/(1-1/B)$.
\end{claim}
By standard Chernoff bounds, we can thus conclude that for sufficiently large $n$ the transition from $\sigma$ to $U$ happens with exponentially small probability.
\end{proof}
\begin{remark}\label{rem:lowerboundq}
We believe that the lower bound on $q$ in Theorem~\ref{thm:SW} arises from ignoring the effect of the third step of Swendsen-Wang.  
\end{remark}

We can now bound the conductance $\Phi_{SW}$ of Swendsen-Wang.
\begin{align*}
\Phi_{SW} & \leq
\Phi_{SW}(M)
 =  \frac{\sum_{\sigma \in M}\mu(\sigma)P(\sigma, \overline{M})}{\mu(M)\mu(\overline{M})}
\\
& \leq
poly(n)\Big(\sum_{\sigma \in M}\mu(\sigma)P(\sigma, U) + \sum_{\sigma \in M}\mu(\sigma)P(\sigma, T)   \Big)
\qquad \mbox{by \eqref{eq:U-M-T} } \\
& \leq
poly(n)\Big( \exp(-Cn)\mu(M) + \sum_{\tau\in T} \mu(\tau)P(\tau,M)  \Big)
\ \    \mbox{by Lemma \ref{claim:M} \& reversibility}
\\
& \leq
poly(n)\big( \exp(-Cn) + \mu(T) \big)
\\
& \leq
\exp(-C'n) \qquad \mbox{by \eqref{eq:U-M-T}}
\end{align*}
\begin{proof}[Proof of Theorem \ref{thm:SW}]
Standard conductance results imply that the mixing time is $\Omega(1/\Phi_{SW})$, which proves the theorem
based on the above bounds.
(See \cite[p. 255]{MT} for such a statement using the form of (normalized) conductance as used here.)
\end{proof}

\subsection{Phase Coexistence for Random $\Delta$-regular Graphs}\label{sec:phasecoexistence}
In this section, we prove Lemma~\ref{lem:main-SW}. The lemma will mostly follow from Lemma~\ref{lem:smallgraph}. We will need though a more refined analysis of the partition function conditioned on  configurations close to $\u$ and $\m$.

In analogy to \eqref{eq:uconfigurations}, \eqref{eq:mconfigurations}, define
\begin{align*}
\widehat{U}=\widehat{U}(\epsilon)&:=\Big\{\sigma\in \Omega\, \Big|\, \norm{\mathbf{c}(\sigma)-\u}_{\infty}\leq \epsilon\Big\},\\
\widehat{M}_j=\widehat{M}_j(\epsilon)&:=\Big\{\sigma\in \Omega\, \Big|\, \norm{\mathbf{c}(\sigma)-\m_j}_{\infty}\leq \epsilon\Big\}.\\
\widehat{T}=\widehat{T}(\epsilon)&:=\Omega\backslash \big(\widehat{U}\cup \widehat{M}_1\cup \hdots\cup \widehat{M}_q\big),
\end{align*}
and for a subset $\Sigma\subseteq \Omega$ of the configuration space, denote by $Z_G(\Sigma)$  the partition function restricted to configurations in $\Sigma$, that is,
\begin{equation*}
Z_G(\Sigma)=\sum_{\sigma\in\Sigma} w_G(\sigma), \mbox{ so that } \mu_G(\Sigma)=\frac{Z_G(\Sigma)}{Z_G}.
\end{equation*}
We first prove the following weaker version of Lemma~\ref{lem:main-SW}.
\begin{lemma}\label{lem:weak}
Let $B=\Bp$. Let $\Psi=\max_{\alphab\in\triangle_q}\Psi_1(\alphab)$, where $\Psi_1(\alphab)$ is given by \eqref{eq:limitfirst}. For all sufficiently small $\epsilon>0$,  there exists $C>0$, such that with probability $1-o(1)$ over the choice of the graph $G\sim \Gc(n,\Delta)$, it holds that
$Z_G\big(\widehat{T}\big)\leq \exp(-Cn)\exp(\Psi n)$ and
\begin{align*}
\frac{\exp(n\Psi)}{poly(n)} \leq \min \big\{&Z_G\big(\widehat{U}\big),Z_G\big(\widehat{M}_1\big),\hdots,Z_G\big(\widehat{M}_q\big)\big\}\\
&\leq\max\big\{Z_G\big(\widehat{U}\big),Z_G\big(\widehat{M}_1\big),\hdots,Z_G\big(\widehat{M}_q\big)\big\}\leq poly(n)\exp(n\Psi).
\end{align*}
\end{lemma}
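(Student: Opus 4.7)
The plan is to combine Lemma~\ref{lem:smallgraph} applied to each dominant phase with elementary first-moment/Markov bounds, exploiting Theorem~\ref{thm:Potts-diagram-detailed}, which guarantees that at $B=\Bp$ (where $\Bp<\Bh$) the set of Hessian dominant phases is exactly $\{\u,\m_1,\ldots,\m_q\}$ and that these are the \emph{only} global maxima of $\Psi_1$ on $\triangle_q$.

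First I would establish the lower bounds. For each $\alphab^\star\in\{\u,\m_1,\ldots,\m_q\}$, Lemma~\ref{lem:smallgraph} yields that with probability $1-o(1)$ over $G\sim\Gc(n,\Delta)$,
\[
Z_G^{\alphab^\star}\;\geq\;\tfrac{1}{n}\,\E_\Gc[Z_G^{\alphab^\star}]\;=\;\tfrac{1}{n}\exp\!\big(n\Psi_1(\alphab^\star)+o(n)\big)\;=\;\tfrac{1}{n}\exp(n\Psi+o(n)),
\]
using $\Psi_1(\alphab^\star)=\Psi$ for dominant $\alphab^\star$. A union bound over the $q+1$ events together with $Z_G^{\u}\leq Z_G(\widehat U)$ and $Z_G^{\m_j}\leq Z_G(\widehat M_j)$ delivers the asserted lower bounds.

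Next, for the upper bounds on $Z_G(\widehat U)$ and each $Z_G(\widehat M_j)$, I would decompose $Z_G(\widehat U)=\sum_{\alphab\in\widehat U}Z_G^{\alphab}$, note that there are at most $poly(n)$ admissible phase vectors in $\widehat U$, and bound
\[
\E_\Gc[Z_G(\widehat U)]\;\leq\;poly(n)\cdot\max_{\alphab\in\widehat U}\exp(n\Psi_1(\alphab)+o(n))\;\leq\;poly(n)\exp(n\Psi),
\]
using $\Psi_1\leq\Psi$ pointwise. Markov's inequality $\Pr[Z_G(\widehat U)>n\,\E[Z_G(\widehat U)]]\leq 1/n$, together with the analogous estimates for each $\widehat M_j$ and a union bound over the $q+1$ events, supplies the upper bounds.

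The exponential decay of $Z_G(\widehat T)$ is where the structural input from the phase diagram is essential. Since the only global maxima of the continuous function $\Psi_1$ on the compact simplex $\triangle_q$ are the finitely many isolated points $\u,\m_1,\ldots,\m_q$, compactness yields some $\delta=\delta(\epsilon)>0$ with $\sup_{\alphab\in\widehat T}\Psi_1(\alphab)\leq\Psi-\delta$. Hence
\[
\E_\Gc[Z_G(\widehat T)]\;\leq\;poly(n)\exp(n(\Psi-\delta)+o(n)),
\]
and a final Markov application produces $Z_G(\widehat T)\leq\exp(n\Psi-Cn)$ with probability $1-o(1)$ for any $C<\delta$. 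The subtle point I anticipate is ensuring uniformity of the $o(n)$ Stirling corrections in $\E_\Gc[Z_G^{\alphab}]=\exp(n\Psi_1(\alphab)+o(n))$ across all $\alphab\in\triangle_q$, particularly near the boundary of the simplex where some coordinate vanishes; this is routine but must be handled by separating phases with vanishingly small coordinates (whose first moment is trivially controlled) from an interior compact region on which Stirling's expansion is uniform.
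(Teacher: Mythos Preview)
Your proposal is correct and follows essentially the same route as the paper: lower bounds via Lemma~\ref{lem:smallgraph} at each Hessian dominant phase, upper bounds and the $\widehat T$ estimate via first-moment/Markov combined with the compactness gap $\sup_{\alphab\in\mathcal T}\Psi_1(\alphab)<\Psi$, and a union bound to assemble the $q+1$ events. One small imprecision: writing $\E_\Gc[Z_G^{\alphab^\star}]=\exp(n\Psi+o(n))$ only yields a lower bound of the form $\exp(n\Psi)/\exp(o(n))$, whereas the lemma asserts $\exp(n\Psi)/poly(n)$; to match the stated polynomial factors you need the sharper asymptotic $\E_\Gc[Z_G^{\alphab^\star}]\asymp poly(n)\exp(n\Psi)$ that comes from the full Stirling expansion (the paper invokes this as ``within a polynomial factor of $\exp(n\Psi)$''), but this is a refinement of the same computation rather than a different idea.
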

\begin{proof}[Proof of Lemma~\ref{lem:weak}]
Define the region $\mathcal{T}$ by
\[\mathcal{T}=\{\alphab\in \triangle_q\, |\, \norm{\alphab-\u}_{\infty}\geq \epsilon, \norm{\alphab-\m_1}_{\infty}\geq \epsilon,\hdots, \norm{\alphab-\m_q}_{\infty}\geq \epsilon\}.\]
Recall from Theorem~\ref{thm:Potts-diagram-detailed} that, for $B=\Bp$, the global maximum of $\Psi_1(\alphab)$ occurs exactly when $\alphab$ is equal to one of $\u,\m_1,\hdots,\m_q$. It follows that \[\max_{\alphab\in\mathcal{T}}\Psi_1(\alphab)<\Psi-C',\]
for some constant $C'=C'(\epsilon)>0$.  Note that for fixed $n$ the possible values of $\alphab\in\mathcal{T}$ are polynomially many. For all sufficiently large $n$, we thus have
\[\E_\Gc[Z_G\big(\widehat{T}\big)]\leq poly(n)\max_{\alphab\in \mathcal{T}}\E_\Gc[Z_G^{\alphab}]\leq poly(n)\exp\big((\Psi-C')n\big).\]
Let $C$ be such that $C'>C>0$.  By Markov's inequality, we obtain $Z_G\big(\widehat{T}\big)\leq \exp(-Cn)\exp(\Psi n)$ with probability $1-o(1)$. This establishes the first part. For the second part, the upper bounds follow from Markov's inequality by the same token. The lower bounds follow from Lemma~\ref{lem:smallgraph} and the observation that at the disordered/ordered critical activity $\Bp$ the expectations of $Z_G^{\u},Z_G^{\m_1},\hdots,Z_G^{\m_q}$ are within a polynomial factor of $\exp(n\Psi)$. By a simple union bound we can thus ensure all the properties stated with probability $1-o(1)$, as desired.
\end{proof}

Having established Lemma~\ref{lem:weak}, we are ready to start arguing about the empirical distribution of edges, more precisely, the fraction of edges in $G$ whose endpoints are assigned colors $i,j$ for configurations $\sigma\in \widehat{U},\widehat{M}$. The rough idea is as follows. Fix an arbitrary $\alphab\in \triangle_q$. In Section~\ref{sec:defer-prelim}, we established that $\Psi_1(\alphab)=\max_{\x}\Upsilon_1(\alphab,\x)=(\Delta-1)f_1(\alphab)+\Delta\max_{\x}g_1(\x)$, where the latter maximization is over $\x$ which satisfy the constraints \eqref{eq:constraintfirst}. Since $g_1(\x)$ is strictly concave in the convex region it is defined, this maximum is attained for a unique vector $\x$, which from here on we shall denote by $\x_{\alphab}$. Essentially by the same line of arguments as in the proof of  Lemma~\ref{lem:weak}, all the contribution to the first moment $\E_\Gc[Z^{\alphab}_G]$ comes from those $\x$ which are close to $\x_{\alphab}$. Thus, by Markov's inequality and the lower bounds of Lemma~\ref{lem:weak}, if $\x$ is sufficiently far away from $\x_{\alphab}$, with probability $1-o(1)$ over the choice of the graph $G$, the empirical edge distribution of a configuration $\sigma\sim \mu_G$ will equal $\x$ with exponentially small probability. We are thus left to argue that for those $\x$ close to $\x_{\alphab}$, the actual contribution to $Z^{\alphab}_G$ from configurations with edge empirical distribution $\x$ is close to its expectation. But for a graph satisfying Lemma~\ref{lem:weak} this is immediately guaranteed, since we know a lower bound on $Z^{\alphab}_G$ which matches its expectation up to a polynomial factor.

It is useful at this point to give the expressions for the optimal vector $\x_{\alphab}$, when $\alphab$ is a dominant phase. Note that a dominant phase $\alphab$ corresponds (via \eqref{jqwe}) to a fixpoint $(R_1,\hdots,R_q)$ of the tree recursions \eqref{kkrtko}. The entries of the optimal vector $\x^*=\x_{\alphab}$ are given by
\begin{equation}\label{eq:optimalxs}
x_{ij}^{*}=\frac{B_{ij}R_iR_j}{\sum_{i,j}B_{ij}R_iR_j}\mbox{ for } i,j\in[q].
\end{equation}
Indeed, using that $(R_1,\hdots,R_q)$ specify a fixpoint of the tree recursions \eqref{kkrtko}, it can  be checked that $\sum_{j}x_{ij}^*=\alpha_i$ for $i\in[q]$, and also that $\x^*$ is a critical point of $g(\x)$. By the strict concavity of the function $g_1$, it follows that $\x^{*}$ is the unique optimal vector, as claimed. Note that the expressions for $E_\m$ and $E_\u$ in \eqref{eq:monochr} can easily be derived from \eqref{eq:optimalxs} when adapted to the ferromagnetic Potts model. 

We next introduce some relevant notation.  We first want to capture the contribution of configurations with specific edge empirical distribution. To do this, we need the notation $\x_G(\sigma)$ of Section~\ref{sec:defer-prelim} introduced just before the expression for the first moment \eqref{eq:firstmoment}. For $\x_0\in \triangle_{q^2}$, $\Sigma\subseteq \Omega$,  define $Z_G(\Sigma,\x_0)$ as
\[Z_G(\Sigma,\x_0)=\sum_{\sigma\in \Sigma}w_G(\sigma)\mathbf{1}\{\x_G(\sigma)=\x_0\}, \mbox{ so that } Z_G(\Sigma)=\sum_{\x\in \triangle_{q^2}}Z_G(\Sigma,\x).\]
Thus, the definition of $Z_G(\Sigma,\x_0)$ restricts the partition function not only to configurations belonging to $\Sigma$, but also to those having edge empirical distribution equal to $\x_0$. Now, we further extend this definition to capture that the edge empirical distribution is far from a prescribed vector $\x_0$.  For $\epsilon>0$, define $Z_G(\Sigma,\x_0,\epsilon)$ as
\[Z_G(\Sigma,\x_0,\epsilon)=\sum_{\sigma\in \Sigma}\,\sum_{\x; \norm{\x-\x_0}_{\infty}\geq \epsilon}w_G(\sigma)\mathbf{1}\{\x_G(\sigma)=\x\}.\]
We are ready to prove the following.
\begin{lemma}\label{lem:pain}
Let $\Psi=\max_{\alphab\in\triangle_q}\Psi_1(\alphab)$, where $\Psi_1(\alphab)$ is given by \eqref{eq:limitfirst}. For all sufficiently small $\epsilon>0$, there exists $C>0$ such that with probability $1-o(1)$ over the choice of the graph $G\sim \Gc(n,\Delta)$, it holds that
\[\max\{Z_G(\widehat{U},\x_\u,\epsilon),Z_G(\widehat{M}_1,\x_{\m_1},\epsilon),\hdots, Z_G(\widehat{M}_q,\x_{\m_q},\epsilon)\}\leq \exp(-C n)\exp(\Psi n).\]
\end{lemma}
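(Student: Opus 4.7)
}
The plan is to use a first moment / Markov bound, exactly as in Lemma~\ref{lem:weak}, but now carried out at the finer level of edge empirical distributions. The key structural fact is that the function $g_1(\x)$ from \eqref{eq:limitfirst} is strictly concave on the convex polytope defined by the constraints \eqref{eq:constraintfirst} (because $-\tfrac12\sum_{i,j} x_{ij}\ln x_{ij}$ is strictly concave and the linear term $\tfrac12\sum_{i,j} x_{ij}\ln B_{ij}$ preserves strict concavity). Hence, for each fixed $\alphab$ in the (relative) interior of $\triangle_q$, the maximizer $\x_\alphab$ is unique, lies in the relative interior of the feasible polytope (since all $B_{ij}>0$ for ferromagnetic Potts), and depends continuously on $\alphab$ by the maximum theorem. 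In particular, $\x_\u$ and $\x_{\m_j}$ are well-defined and sit in the interior.

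Fix $\epsilon>0$ small enough that the closed balls $\{\alphab:\|\alphab-\u\|_\infty\le\epsilon\}$ and $\{\alphab:\|\alphab-\m_j\|_\infty\le\epsilon\}$ are disjoint subsets of the (relative) interior of $\triangle_q$ and such that, by continuity of $\alphab\mapsto\x_\alphab$, we have $\|\x_\alphab-\x_\u\|_\infty\le\epsilon/2$ whenever $\|\alphab-\u\|_\infty\le\epsilon$, and analogously for each $\m_j$. Then, for any $\sigma\in\widehat U(\epsilon)$ with $\|\x_G(\sigma)-\x_\u\|_\infty\ge\epsilon$, the triangle inequality gives $\|\x_G(\sigma)-\x_{\alphab(\sigma)}\|_\infty\ge\epsilon/2$, where $\alphab(\sigma):=\mathbf c(\sigma)$. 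Consider now the compact set
\[
K_\u:=\bigl\{(\alphab,\x):\|\alphab-\u\|_\infty\le\epsilon,\ \x\text{ feasible for }\alphab,\ \|\x-\x_\alphab\|_\infty\ge\epsilon/2\bigr\}.
\]
On $K_\u$ the function $\Phi_1(\alphab,\x)=(\Delta-1)f_1(\alphab)+\Delta g_1(\x)$ is continuous, and at every point of $K_\u$ strict concavity of $g_1$ forces $g_1(\x)<g_1(\x_\alphab)$, hence $\Phi_1(\alphab,\x)<\Psi_1(\alphab)\le\Psi$. Compactness therefore yields a uniform gap: there exists $\eta=\eta(\epsilon)>0$ with $\Phi_1(\alphab,\x)\le\Psi-\eta$ on $K_\u$. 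The same argument, applied to each ordered phase, gives an analogous uniform gap on the corresponding compact set $K_{\m_j}$; by the symmetry of the Potts model across the $q$ ordered phases, we may take a single $\eta>0$ that works for all $q+1$ cases simultaneously.

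With this exponent gap in hand, the first moment estimate for $\E_{\G}[Z^{\alphab}_G]$ from Section~\ref{sec:defer-prelim} gives, term by term, that each summand contributing to $\E_{\G}[Z_G(\widehat U,\x_\u,\epsilon)]$ is at most $\exp(n\Phi_1(\alphab,\x)+o(n))\le \exp((\Psi-\eta)n+o(n))$. Since the number of integer vectors $(\alphab,\x)/n$ in the feasible region is polynomial in $n$, we conclude
\[
\E_\G\bigl[Z_G(\widehat U,\x_\u,\epsilon)\bigr]\le\mathrm{poly}(n)\,\exp\bigl((\Psi-\eta)n\bigr),
\]
and similarly for $\E_\G[Z_G(\widehat M_j,\x_\m,\epsilon)]$ for every $j\in[q]$. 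Choose any $C\in(0,\eta)$. Markov's inequality then gives
\[
\ProbSub{\G}{Z_G(\widehat U,\x_\u,\epsilon)\ge e^{-Cn}e^{\Psi n}}\le \mathrm{poly}(n)\,\exp\bigl(-(\eta-C)n\bigr)=o(1),
\]
and likewise for each $\widehat M_j$. A union bound over the $q+1$ events (a constant number) finishes the proof. The main conceptual step is the compactness/strict-concavity argument that yields the uniform exponent gap $\eta>0$; once this is in place, everything else is a routine Markov/union-bound wrap-up, analogous to Lemma~\ref{lem:weak}.
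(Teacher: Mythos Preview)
Your proposal is correct and follows essentially the same approach as the paper's proof: both establish a uniform exponent gap via the strict concavity of $g_1$ (so $\x_\alphab$ is unique), the continuity of $\alphab\mapsto\x_\alphab$, and a triangle-inequality step converting $\|\x-\x_\u\|_\infty\ge\epsilon$ into $\|\x-\x_\alphab\|_\infty\ge\Omega(\epsilon)$, and then finish with a first-moment/Markov bound and a union bound. Your write-up is in fact a bit more explicit than the paper's (invoking the maximum theorem and a compactness argument to extract the uniform gap $\eta$), but the substance is the same.
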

\begin{proof}
We prove that the upper bound $Z_G(\widehat{U},\x_\u,\epsilon)\leq \exp(-C n)\exp(\Psi n)$ holds with probability $1-o(1)$ over the choice of the graph. The remaining random variables may be treated similarly and thus the claim follows by a union bound.

Observe that
\[Z_G\big(\widehat{U},\x_\u,\epsilon\big)=\sum_{\alphab; \norm{\alphab-\u}_{\infty}\leq \epsilon}Z_G(\Sigma^{\alphab},\x_\u,\epsilon).\]
It follows that 
\begin{equation}\label{eq:applymaxmax}
\E_\Gc\big[Z_G\big(\widehat{U},\x_\u,\epsilon\big)\big]=\sum_{\alphab; \norm{\alphab-\u}_{\infty}\leq \epsilon}\E_\Gc[Z_G(\Sigma^{\alphab},\x_\u,\epsilon)].
\end{equation}
Note that the sum in \eqref{eq:applymaxmax} is over polynomially many vectors $\alphab$ satisfying $\norm{\alphab-\u}_{\infty}\leq \epsilon$. Further, for fixed $\alphab$, $\E_\Gc[Z_G(\Sigma^{\alphab},\x_\u,\epsilon)]$ is also a sum over polynomially many $\x$ satisfying $\norm{\x-\x_\u}_\infty\geq \epsilon$. For a fixed $\x$, the exponential order of the term in the latter sum corresponding to $\x$ is given by the function $\Upsilon_1(\alphab,\x)$. By approximating the sums with their maximum terms (and using the continuity of the function $\Upsilon_1$), it is standard to conclude from here that
\begin{equation}\label{eq:applymaxmax2}
\E_\Gc\big[Z_G\big(\widehat{U},\x_\u,\epsilon\big)\big]=\exp(o(n))\exp\big(n\max_{\norm{\alphab-\u}_{\infty}\leq \epsilon}\max_{\norm{\x-\x^{*}_\u}_{\infty}\geq \epsilon}\Upsilon_1(\alphab,\x)\big).
\end{equation}
Note that the maximum in \eqref{eq:applymaxmax2} is justified by standard compactness arguments (which we give below since we will need it in the proof).

Consider the region \[\mathcal{T}(\epsilon):=\{(\alphab,\x)\mid \alphab\in \triangle_q, \x\in\triangle_{q^2},\norm{\alphab-\u}_\infty\leq \epsilon, \norm{\x-\x_{\u}}_\infty\geq \epsilon\},\] i.e., the region $T(\epsilon)$ consists of those pairs $(\alphab,\x)$ such that $\alphab$ is $\epsilon$-close to $\u$, but $\x$ is $\epsilon$-far from the optimal vector $\x_\u$. Let $\Psi'$ be the maximum of $\Upsilon_1(\alphab,\x)$ over the region $\mathcal{T}(\epsilon)$ (this exists since $\mathcal{T}(\epsilon)$ is compact and $\Upsilon_1(\alphab,\x)$ is continuous). Recall now that for fixed $\alphab$ the function $\Upsilon_1(\alphab,\x)$ is strictly concave in $\x$. Since the maximizers of $\Psi_1(\alphab)=\max_{\x\in \triangle_{q^2}}\Upsilon_1(\alphab,\x)$ are the vectors $\u,\m_1,\hdots,\m_q$, it follows that the maximizers of $\Upsilon_1(\alphab,\x)$ are  $(\u,\x_{\u}),(\m_1,\x_{\m_1}),\hdots,(\m_q,\x_{\m_q})$. Since for all sufficiently small $\epsilon>0$ none of these maximizers lies in the region $\mathcal{T}(\epsilon)$, we have that $\Psi'<\Psi$ and hence there exists $C'(\epsilon)>0$ such that
\[\max_{\norm{\alphab-\u}_{\infty}\leq \epsilon}\max_{\norm{\x-\x_\u}_{\infty}\geq \epsilon}\Upsilon_1(\alphab,\x)<\Upsilon_1(\u,\x_\u)-C'.\]
By choosing $C$ so that $0<C<C'$, the desired bound now follows from \eqref{eq:applymaxmax} by an application of  Markov's inequality.
\end{proof}

We are now ready to give the proof of Lemma~\ref{lem:main-SW}.
\begin{proof}[Proof of Lemma~\ref{lem:main-SW}]
By a union bound, a graph $G\sim\Gc(n,\Delta)$ satisfies with probability $1-o(1)$ both Lemmas~\ref{lem:weak} and~\ref{lem:pain}. We have $Z_G=Z_G\big(\widehat{U}\big)+Z_G\big(\widehat{M}_1\big)+\hdots+Z_G\big(\widehat{M}_q\big)+Z_G\big(\widehat{T}\big)$. By Lemma~\ref{lem:weak}, we thus have
\[\frac{\exp(n\Psi)}{poly(n)} \leq Z_G\leq poly(n)\exp(n\Psi).\]
Now observe that the sets $U$, $M$ defined in \eqref{eq:uconfigurations} and \eqref{eq:mconfigurations} satisfy
\[Z_G(U)\geq Z_G\big(\widehat{U}\big)-Z_G\big(\widehat{U},\x^*_{\u},\epsilon\big)\geq \frac{\exp(n\Psi)}{poly(n)},\]
\[Z_G(M)\geq Z_G\big(\widehat{M}\big)-Z_G\big(\widehat{M},\x^*_{\m},\epsilon\big)\geq \frac{\exp(n\Psi)}{poly(n)},\]
\[Z_G(T)\leq Z_G\big(\widehat{T}\big)\leq \exp(-Cn)\exp(n\Psi).\]
The conclusion follows.
\end{proof}

To complete the proofs for Section~\ref{sec:swendsen-wang}, we now give the proof of Claim~\ref{cl:technicalineq}.
\begin{proof}[Proof of Claim~\ref{cl:technicalineq}]
Using \eqref{eq:monochr}, we have that 
\[\frac{E_{\m}}{E_{\u}}=\frac{(q+B-1)(x^2 + q - 1)}{(x + q - 1)^2 + (B - 1) (x^2 + q - 1)}=1+\frac{(q-1)(x-1)^2}{{(x + q - 1)^2 + (B - 1) (x^2 + q - 1)}}.\]
It follows that 
\[\frac{E_{\m}}{E_{\u}}-\frac{1}{1-1/B}=\frac{(q-1)(x-1)^2}{{(x + q - 1)^2 + (B - 1) (x^2 + q - 1)}}-\frac{1}{B-1}.\]
Using the substitutions $d=\Delta-1$, $x=y^d$ and the second equation in \eqref{eq:bprecursion} (for $t=1$), the r.h.s. can be rewritten as
\begin{equation*}
\frac{E_{\m}}{E_{\u}}-\frac{1}{1-1/B}=\frac{y\left(y^d-y\right) \big((q-2) y^{d}-(q-1) y^{d-1}-(q-1)\big)}{(y-1) \big(y^d+q-1\big) \big(y^{d+1}+q-1\big)}.
\end{equation*}
Recall, from the proof of Lemma~\ref{pppwq}, that for $B=\Bp$, it holds that $y=y_o$, where $y_o=(q-1)^{2/(d+1)}$. Since $y_{o}>1$, to prove the claim we only need to show that $p(y_o)>0$, where $p(y):=(q-2) y^{d}-(q-1) y^{d-1}-(q-1)$. Massaging, we obtain the equivalent inequality
\begin{equation}\label{eq:qineq}
h_d((q-1)^{1/(d+1)})>0,\mbox{ where } h_d(z):=z^{2d}-z^{2(d-1)}-z^{d-1}-1.
\end{equation}
Fix $d\geq 2$. If $z_d$ is such that $h_d(z_d)>0$, then $h_d(z)>0$ for all $z>z_d$. This is again a consequence of the Descartes' rule of signs: the polynomial $h_d(z)$ has exactly one positive root, say $\rho_d$, and hence $h_d(z)>0$ for $z$ positive is equivalent to $z>\rho_d$. It follows that to prove \eqref{eq:qineq} for $q\geq 2(d+1)/\ln(d+1)$, we only need to argue for its validity when  $q=q_o:=2(d+1)/\ln(d+1)$. In other words, we need to show that $h_d(z_o)>0$ where $z_o:= (q_o-1)^{1/(d+1)}$. By direct calculations, it can be checked that the inequality is true for $d=2,\hdots,9$. We therefore assume that $d\geq 10$ in what follows. 

For all $w>1$ it holds that $(w+3/4)^2>w^2+w+1$, which for $w=z^{d-1}_o$ gives $(z^{d-1}_o+3/4)^2>z^{2(d-1)}_o+z^{d-1}_o+1$. Thus, we only need to show that $z^d_o>z^{d-1}_o+3/4$, or 
\begin{equation}\label{eq:zoineq}
z_o^{d-1}(z_o-1)>3/4.
\end{equation}  
To handle the two factors in \eqref{eq:zoineq}, we will use the following bounds on $z_o$.
\begin{equation}\label{eq:zobounds}
z_o\geq \left(C_1\frac{d+1}{\ln(d+1)}\right)^{1/(d+1)},\quad z_o\geq 1+C_2\frac{\ln(d+1)}{d+1},\mbox{ where } C_1:=\frac{3}{2}, \ C_2:=\frac{3}{4}.
\end{equation}
The first bound follows  from the inequality $q_o\geq 1+\frac{3}{2}\frac{d+1}{\ln(d+1)}$. The second bound follows from the first bound and $x^{1/(d+1)}=\exp\big(\frac{\ln x}{d+1}\big)\geq 1+\frac{\ln x}{d+1}$ together with $\ln(C_1\frac{d+1}{\ln(d+1)})\geq C_2\ln(d+1)$.

Plugging the bounds \eqref{eq:zobounds} in \eqref{eq:zoineq}, it suffices to check whether 
\begin{equation*}
C_2\frac{\ln(d+1)}{d+1}\left(C_1\frac{d+1}{\ln(d+1)}\right)^{(d-1)/(d+1)}\geq \frac{3}{4}, \mbox{ or } \left(\frac{4C_1C_2}{3}\right)^{d+1}\geq \left(C_1\frac{d+1}{\ln(d+1)}\right)^{2}.
\end{equation*}
This holds for all $d\geq 10$, completing the proof (for all $d\geq 2$).
\end{proof}

\section{Remaining Proofs}

\subsection{Small Subgraph Conditioning Method}\label{sec:small-graph}

In this section, we give the outline for the proof of Lemma~\ref{lem:smallgraph}. The proof is a minor modification of the arguments in \cite[Appendices A \& B]{GSV:colorings} which were carried out for random $\Delta$-regular bipartite graphs. Here, we just need to account for the non-bipartite case which turns out to be completely analogous. For completeness, we give the adaptation of the calculations therein to account for the slightly different setting.

 The main tool we are going to use is the following Theorem, which is due to \cite{RW}.  The notation $[X]_{m}$ refers to the $m$-th order falling factorial of the variable $X$.

\begin{theorem}\label{thm:smallgraphmethod}
For  $i=1,2,\hdots$, let $\lambda_i>0$ and $\delta_i>-1$ be constants  and assume that for each $n$ there are random variables $X_{in}$,  $i=1,2,\hdots,$ and $Y_n$, all defined on the same probability space $\G=\G_n$ such that $X_{in}$ is non-negative integer valued, $Y_n\geq 0$ and $\E\big[Y_n\big]>0$ (for $n$ sufficiently large). Furthermore,   the following hold:
\begin{enumerate}
\item [\emph{(A1)}]$X_{in}\stackrel{d}{\longrightarrow}Z_i$ as $n\rightarrow\infty$, jointly for all $i$, where $Z_i\sim\mathrm{Po}(\lambda_i)$ are independent Poisson random variables;\label{it:poissonconv}
\item [\emph{(A2)}] for every finite sequence $j_1,\hdots,j_m$ of non-negative integers,
\begin{equation}
\frac{\E_{\G}\big[Y_n[X_{1n}]_{j_1}\cdots [X_{mn}]_{j_m}\big]}{\E_{\G}\big[Y_n\big]}\rightarrow \prod^m_{i=1}\big(\lambda_i(1+\delta_i)\big)^{j_i} \quad \text{ as } n\rightarrow\infty;\label{eq:prodsmall}
\end{equation}
\item [\emph{(A3)}] $\sum_i \lambda_i\delta_i^2<\infty$;
\item [\emph{(A4)}] $\E_{\G}\big[Y_n^2\big]/\big(\E_{\G}[Y_n]\big)^2\leq\exp\big(\sum_i \lambda_i \delta_i^2\big)+o(1)$ as $n\rightarrow \infty$;
\end{enumerate}
Let $r(n)$ be a function such that $r(n)\rightarrow 0$ as $n\rightarrow \infty$. It holds that $Y_n>r(n)\E_{\G}\big[Y_n\big]$ asymptotically almost surely.
\end{theorem}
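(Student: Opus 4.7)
The plan is to reduce the a.a.s.\ statement to $L^2$ convergence of the rescaled variable $W_n := Y_n/\Exp{Y_n}$ to a strictly positive limit $W$. Once $W_n\to W$ in probability and $W>0$ almost surely, we automatically get $\ProbSub{\G}{W_n > r(n)}\to 1$ for any $r(n)\to 0$, which is what the theorem asserts.

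I would first construct $W$ explicitly from the Poisson vector $(Z_1,Z_2,\ldots)$ appearing in (A1). Define the truncated product
\[
W^{(L)} \,:=\, \prod_{i=1}^{L}(1+\delta_i)^{Z_i}\,\exp(-\lambda_i\delta_i).
\]
Using $\Exp{t^{Z_i}}=\exp(\lambda_i(t-1))$, a direct calculation yields $\Exp{W^{(L)}}=1$, $\Exp{(W^{(L)})^2}=\prod_{i=1}^{L}\exp(\lambda_i\delta_i^2)$, and for every tuple $(j_1,\ldots,j_L)$,
\[
\Exp{W^{(L)}\,[Z_1]_{j_1}\cdots[Z_L]_{j_L}} \,=\, \prod_{i=1}^{L}\bigl(\lambda_i(1+\delta_i)\bigr)^{j_i},
\]
which exactly matches the limit prescribed by (A2). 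By (A3) the sequence $W^{(L)}$ is Cauchy in $L^2$ and converges to a limit $W$ with $\Exp{W^2}=\exp\bigl(\sum_i\lambda_i\delta_i^2\bigr)$; since each $\delta_i>-1$, every factor is strictly positive, so $W>0$ almost surely.

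Next I would match the moments on the random-graph side via an orthogonal polynomial decomposition. Expand $W_n$ in the Charlier polynomial basis associated with independent Poissons of parameters $(\lambda_1,\ldots,\lambda_L)$; these are orthonormal in the limiting law and each basis element is a polynomial in factorial powers $[X_{1n}]_{j_1}\cdots[X_{Ln}]_{j_L}$. Hence each coefficient $\Exp{W_n \, C_{\mathbf{j}}(X_{1n},\ldots,X_{Ln})}$ is a finite linear combination of the quantities in (A2) and converges to the corresponding coefficient of $W^{(L)}$. Combined with (A4) (which guarantees $L^2$ boundedness, hence uniform integrability of $\{W_n\}$), this upgrades to $L^2$ convergence of $\Exp{W_n\mid X_{1n},\ldots,X_{Ln}}$ to $W^{(L)}$ for each fixed $L$.

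Closing the argument is a Parseval calculation: the $L^2$ norm of the projection $\Exp{W_n\mid X_{1n},\ldots,X_{Ln}}$ tends to $\Exp{(W^{(L)})^2}=\prod_{i=1}^L\exp(\lambda_i\delta_i^2)$, while (A4) gives $\limsup_n\Exp{W_n^2}\leq\exp\bigl(\sum_i\lambda_i\delta_i^2\bigr)$. Sending $L\to\infty$ via (A3) makes the two quantities coincide, so the orthogonal residual $W_n - \Exp{W_n\mid X_{1n},\ldots,X_{Ln}}$ has vanishing $L^2$ norm in the iterated limit $L\to\infty$ after $n\to\infty$. Therefore $W_n\to W$ in $L^2$, which yields the claimed a.a.s.\ lower bound. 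The \emph{main obstacle} is exactly this last step: one must use the conjunction of (A3) and (A4) --- with (A4) tight against the infinite-product variance --- to rule out any hidden $L^2$ mass of $W_n$ orthogonal to the $\sigma$-algebra generated by the short-cycle counts. This tightness is the technical heart of the Robinson--Wormald method and is the reason (A4) is stated as an \emph{upper} bound matching precisely the full Parseval sum rather than merely a polynomial bound.
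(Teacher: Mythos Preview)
The paper does not prove this theorem at all: it is quoted verbatim as a black-box tool and attributed to Robinson and Wormald \cite{RW} (see also Janson \cite{Janson}), so there is no ``paper's own proof'' to compare against. Your sketch is essentially the standard proof of the small subgraph conditioning method as presented in those references: build the limit $W=\prod_i (1+\delta_i)^{Z_i}e^{-\lambda_i\delta_i}$, use (A2) to match the Charlier/factorial-moment coefficients of $W_n$ with those of $W$, and use the tightness of (A4) against the Parseval sum from (A3) to kill the residual $L^2$ mass.

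One small point worth tightening: the claim $W>0$ a.s.\ does not follow merely from each truncation $W^{(L)}$ being positive and $L^2$-convergent (a positive $L^2$-bounded martingale can have zero limit with positive probability). The clean way is to show $\log W^{(L)}$ converges a.s.\ via a three-series argument, using that (A3) controls both $\sum_i \lambda_i(\log(1+\delta_i)-\delta_i)$ and $\sum_i \lambda_i(\log(1+\delta_i))^2$; this is where the hypothesis $\delta_i>-1$ is genuinely used. Apart from this, your outline is correct and matches the literature proof that the paper is citing.
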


To obtain Lemma~\ref{lem:smallgraph}, we verify the assumptions of Theorem~\ref{thm:smallgraphmethod} for the random variables $Z_G^{\alphab}$. Recall, we restrict our attention to $\alphab$ which are Hessian dominant. For $G\sim \Gc(n,\Delta)$, let $X_i=X_{in}$ be the number of cycles of length $i$ in $G$, $i=1,2,\hdots$.

The most technical part of this verification is assumption (A4) which requires computing the precise asymptotics of the moments. This in turn reduces to certain determinants which are not completely trivial. Nevertheless, the arguments have been carried out in full generality in \cite{GSV:colorings}. The only minor modification required in the present case is to account for random $\Delta$-regular graphs instead of the bipartite random $\Delta$-regular graphs studied in \cite{GSV:colorings}.

We obtain the following lemmas.

\begin{lemma}\label{lem:a1}
Assumption (A1) holds with $\lambda_i=\frac{(\Delta-1)^{i}}{2i}$.
\end{lemma}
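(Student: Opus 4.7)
The plan is to use the classical method of factorial moments: joint convergence $(X_{1n},\ldots,X_{mn})\xrightarrow{d}(Z_1,\ldots,Z_m)$ to independent Poissons with means $\lambda_i$ holds if and only if
\[
\E\bigl[[X_{1n}]_{j_1}\cdots[X_{mn}]_{j_m}\bigr]\longrightarrow\prod_{i=1}^m \lambda_i^{j_i}
\]
for every finite sequence $(j_1,\ldots,j_m)$ of non-negative integers. This reduces the lemma to a direct enumeration of cycle configurations in the pairing model $\Gc(n,\Delta)$.

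First I would compute $\E[X_{in}]$ by enumerating ordered vertex tuples $(v_1,\ldots,v_i)$, selecting at each $v_\ell$ two ordered points out of its $\Delta$ to serve as the endpoints of the incoming and outgoing cycle edges, and requiring that the uniformly random perfect matching pair these $2i$ points consecutively around the cycle. The vertex tuples number $n(n-1)\cdots(n-i+1)\sim n^i$, the choice of endpoints contributes $[\Delta(\Delta-1)]^i$, and the probability that the matching realizes the prescribed cyclic pattern on the $2i$ selected points equals $(\Delta n-2i-1)!!/(\Delta n-1)!!\sim (\Delta n)^{-i}$. Dividing by the overcounting factor $2i$ (cyclic rotations together with the reflection) yields $\E[X_{in}]\sim (\Delta-1)^i/(2i)=\lambda_i$, which is the first-moment half of the statement.

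For the joint factorial moment $\E\bigl[[X_{1n}]_{j_1}\cdots[X_{mn}]_{j_m}\bigr]$, which counts ordered tuples of $J:=\sum_i j_i$ pairwise distinct cycles with the prescribed lengths, I would split the sum according to whether the cycles are vertex-disjoint. In the disjoint case the enumeration factorizes across the cycles: the vertex tuples can be chosen independently at asymptotic cost $n^{\sum_i i j_i}$, and the matching probability for the union of the $2\sum_i i j_i$ involved points factorizes as a product of single-cycle probabilities up to a $(1+o(1))$ factor. This gives the main term $\prod_i \lambda_i^{j_i}$. In the overlapping case, every identification of two cycle vertices costs at least one factor of $n$ in the vertex enumeration while the matching probability can only gain bounded combinatorial factors, so the total overlapping contribution is $O(1/n)$.

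The main obstacle — really the only delicate step — is the bookkeeping confirming that the overlap contributions are negligible. This is a standard pairing-model calculation: one classifies overlap patterns into a number of types bounded by a constant depending only on $\{j_i\}$, and bounds each type's configuration count by $O(n^{V-1})$ where $V=\sum_i i j_i$ is the vertex count of the disjoint realization. Once this is in hand, the method of factorial moments delivers the joint Poisson convergence with the claimed parameters $\lambda_i=(\Delta-1)^i/(2i)$.
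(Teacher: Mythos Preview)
Your proposal is correct. The paper itself does not prove this lemma: Section~\ref{sec:small-graph} explicitly states that the proofs of Lemmas~\ref{lem:a1}--\ref{lem:a4} are omitted, treating (A1) as a classical fact about the pairing model (the joint Poisson convergence of short-cycle counts in $\Gc(n,\Delta)$ with means $(\Delta-1)^i/(2i)$ goes back to Bollob\'as and Wormald) and pointing to \cite{GSV:colorings} for the bipartite analogue. Your factorial-moment computation is precisely the standard proof of that classical result, so you are in effect supplying the argument the paper defers to the literature; there is no meaningful methodological difference to compare.
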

Lemma~\ref{lem:a1} is well-known, see for example \cite{Janson}.

\begin{lemma}\label{lem:a2}
Assumption (A2) holds with $\delta_i=\sum^{q-1}_{j=1}\mu_j^{i}$, where $\mu_1,\mu_2,\hdots,\mu_{q-1}$ are the eigenvalues different than 1 of the matrix $\M$ defined in Section~\ref{sec:Potts}. For Hessian dominant $\alphab$, it holds that the $\mu_i$ are positive and strictly smaller than $1/(\Delta-1)$.
\end{lemma}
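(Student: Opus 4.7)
The plan is to verify condition (A2) by the standard pairing-model computation adapted to spin systems, as pioneered by Janson and developed for spin models in \cite{MWW, Sly10, GSV:colorings}. Writing $Y_n = Z_G^{\alphab} = \sum_{\sigma\in\Sigma^\alphab} w_G(\sigma)$, the ratio $\E_\Gc[Y_n [X_{1n}]_{j_1}\cdots[X_{mn}]_{j_m}]/\E_\Gc[Y_n]$ is the expected value of $\prod_i [X_{in}]_{j_i}$ under the measure on $(G,\sigma)$ proportional to $w_G(\sigma)\mathbf{1}\{\sigma\in\Sigma^\alphab\}$. Since $[X_{in}]_{j_i}$ counts ordered $j_i$-tuples of distinct $i$-cycles in $G$, the product counts ordered collections of short cycles of prescribed lengths, and the usual argument shows that, up to $o(1)$, we may restrict to collections whose edges are pairwise disjoint (overlaps contribute $O(1/n)$).

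First I would reduce to a single cycle: fix such an ordered tuple of cycles $\mathcal{C}$ on vertex sets $V(\mathcal{C})$, and sum in the pairing model over all spin configurations $\sigma\in\Sigma^\alphab$ and all pairings having $\mathcal{C}$ as a subgraph. Because $\mathcal{C}$ uses $O(1)$ points, excising it from the pairing leaves a residual pairing on $\Delta n - 2|E(\mathcal C)|$ points; the residual first moment equals $(1+o(1))\E_\Gc[Y_n]$ after rescaling by the ratio of double factorials. Dividing by $\E_\Gc[Y_n]$ thus eliminates the dependence on the bulk, and the only surviving contribution from each cycle is the local sum over the spins assigned to its vertices weighted by the edge activities $B_{\sigma(u)\sigma(v)}$ and by the asymptotic vertex frequencies $\alpha_c$ that come from the saddle point of the first moment. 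Using that the saddle-point values are precisely the tree-recursion fixpoints $R_j$ with $\alpha_i=R_i\sum_j B_{ij} R_j$ (Lemma~\ref{new:zako1}), the per-cycle factor for a cycle of length $i$ reduces to
\[
\frac{(\Delta-1)^i}{2i}\sum_{c_1,\hdots,c_i\in[q]}\prod_{k=1}^{i}\frac{B_{c_k c_{k+1}}R_{c_k}R_{c_{k+1}}}{\sqrt{\alpha_{c_k}\alpha_{c_{k+1}}}}
=\frac{(\Delta-1)^i}{2i}\,\mathrm{tr}(\M^i),
\]
with indices taken cyclically. Because the cycles (after the edge-disjointness reduction) behave asymptotically independently in the pairing model, the full joint moment ratio factors as $\prod_i \big(\frac{(\Delta-1)^i}{2i}\mathrm{tr}(\M^i)\big)^{j_i}$.

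It remains to identify $\mathrm{tr}(\M^i)$ with $1+\delta_i$. The matrix $\M$ has the vector $(\sqrt{\alpha_1},\hdots,\sqrt{\alpha_q})$ as a right eigenvector with eigenvalue $1$: indeed $\sum_j \frac{B_{ij}R_iR_j}{\sqrt{\alpha_i\alpha_j}}\sqrt{\alpha_j}=\frac{R_i}{\sqrt{\alpha_i}}\sum_j B_{ij}R_j=\sqrt{\alpha_i}$. Denoting the other eigenvalues by $\mu_1,\hdots,\mu_{q-1}$, we obtain $\mathrm{tr}(\M^i)=1+\sum_{j=1}^{q-1}\mu_j^i$, and hence $\frac{(\Delta-1)^i}{2i}\mathrm{tr}(\M^i)=\lambda_i(1+\delta_i)$, which is exactly (A2).

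For the final assertion, observe that $\M = D\B D$ with $D=\mathrm{diag}(R_i/\sqrt{\alpha_i})$, so $\M$ is congruent to $\B$. Since $\B$ is positive definite by the ferromagnetic assumption (Definition~\ref{def:ferro}), Sylvester's law of inertia yields that $\M$ is positive definite, and in particular $\mu_j>0$ for all $j$. Moreover, the linear map governing the perturbation of tree recursions at the fixpoint is precisely $(\Delta-1)\M$ acting on the subspace \eqref{eq:sub} (see \eqref{qa3}), so when $\alphab$ is Hessian dominant, Theorem~\ref{thm:connection} gives that all $\mu_j$ are strictly less than $1/(\Delta-1)$. The main obstacle in executing this plan is the bookkeeping required to isolate the cycle contribution and to check that the residual pairing after edge removal indeed recovers $(1+o(1))\E_\Gc[Y_n]$ despite the spin-dependent weights; this is handled by a straightforward ratio-of-double-factorials calculation combined with the saddle-point representation of the first moment, entirely analogous to \cite[Section F.1]{GSV:colorings}.
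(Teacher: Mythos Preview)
Your proposal is correct and follows exactly the approach the paper intends: the paper itself omits the proof of Lemma~\ref{lem:a2}, stating only that it is ``a minor modification of the arguments in \cite[Section F.1]{GSV:colorings}'' adapted from the bipartite to the non-bipartite pairing model, and your outline (disjoint-cycle reduction, excision in the pairing model, identifying the per-cycle factor as $\tfrac{(\Delta-1)^i}{2i}\mathrm{tr}(\M^i)$, and reading off $\delta_i$ from the spectrum of $\M$) is precisely that computation. Your argument for the final sentence---positivity of the $\mu_j$ via the congruence $\M=D\B D$ and Sylvester's law, and the bound $\mu_j<1/(\Delta-1)$ via the equivalence of Jacobian attractiveness and the Hessian condition in Theorem~\ref{thm:connection}---is also the same as the reasoning in Section~\ref{sec:connection}.
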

\begin{proof}[Proof of Lemma~\ref{lem:a2}]
The proof is close to \cite[Proof of Lemma A.6]{GSV:colorings}, which is in turn close to \cite[Proof of Lemma 7.4]{MWW}. We just modify the approach to account for the distribution induced by the pairing model. We make the minor notation change from $X_i$ to $X_\ell$, i.e., for $\ell\geq 1$, $X_{\ell}$ denotes the number of cycles of length $\ell$ in $G$. We show that Assumption (A2) in Theorem~\ref{thm:smallgraphmethod} holds when $m=1$ and $j_1=1$, the extension to $m>1$ and arbitrary indices $j_1,\hdots,j_m$ follows by standard arguments, see for example \cite[Section 2]{Kemkes} for an exposition of  the argument in a very similar setting.

Let $\Sc=\{S_1,\hdots,S_q\}$ be a partition of $V$ such that
$|S_i|=\alpha_i n$ for all $i\in [q]$. Note that $\Sc$ induces a configuration $\sigma(\Sc)$ by setting, for every vertex $v\in V$, $\sigma(v)=i$
iff $v\in S_i$. Denote by $Y_{\Sc}$ the weight of the configuration $\sigma(\Sc)$. 

Fix a specific partition $\Sc$. By symmetry,
\begin{equation}
\frac{\E[Z^{\alphab}_GX_\ell]}{\E[Z^{\alphab}_G]}=\frac{\E[Y_{\Sc}
X_\ell]}{\E[Y_{\Sc}]}. \label{eq:smallratio}
\end{equation}
We decompose $X_\ell$ as follows:
\begin{itemize}
\item $\xi$ will denote a rooted and
oriented $\ell$-cycle, whose vertices are colored with $\{1,\hdots,q\}$ (note that the coloring is \emph{not} assumed to be proper). A vertex colored $i$ in $\xi$ signifies that it corresponds to a vertex in $S_i$.


\item Once we have specified $\xi$, we use $\zeta$ to denote the $\ell$ \emph{points} that the cycle  traverses in order, such that the prescription of the vertex colors of $\xi$ is satisfied. (Recall from Section~\ref{sec:defer-prelim} that points are elements of $[\Delta n]$; $\zeta$ specifies the pre-image of the cycle $\xi$ in the pairing model, i.e., the matched points that respect the colors $\xi$ of the cycle with respect to the partition $\Sc$).

\item $\mathbf{1}_{\xi,\zeta}$ is the indicator function whether the cycle
specified by $\xi,\zeta$ is present in the graph $G$ generated by the pairing model.
\end{itemize}
Once the vertex colors of a cycle have been specified, note that each possible cycle corresponds to exactly $2\ell$ different configurations $\xi$ (the number of ways to root and
orient the cycle). For each of those $\xi$, the respective sets of
configurations $\zeta$ are the same. Hence, we may write
\[X_\ell=\frac{1}{2\ell}\sum_\xi\sum_\zeta \mathbf{1}_{\xi,\zeta}.\]
Let $p_1:= \Pr[\mathbf{1}_{\xi,\zeta}=1]$. It follows that
\begin{align*}
\E[Y_{\Sc}X_\ell]&=\frac{1}{2\ell}\sum_\xi\sum_\zeta
p_1\cdot\E[Y_{\Sc}|\mathbf{1}_{\xi,\zeta}=1].
\end{align*}
In light of \eqref{eq:smallratio}, we need to study the ratio
$\E[Y_{\Sc}\mid\mathbf{1}_{\xi,\zeta}=1]/\E[Y_{\Sc}]$. At
this point, to simplify notation, we may assume that $\xi,\zeta$
are fixed.

We have shown in Section~\ref{sec:defer-prelim} that
\begin{equation}\label{eq:ysimple}
\begin{aligned}
&\E[Y_{\Sc}]=\\
&\sum_{\x}\prod_{i}\binom{\Delta\alpha_{i}n}{\Delta x_{i1}n,\hdots,\Delta x_{iq}n}\frac{\big[\prod_{i\neq j}(\Delta x_{ij}n)!\big]^{1/2}\prod_{i}(\Delta x_{ii}n-1)!!}{(\Delta n-1)!!}\prod_{i,j}B^{\Delta x_{ij}n/2}_{ij},
\end{aligned}
\end{equation}
where the variables $\x=(x_{11},\hdots,x_{qq})$ capture the number
of edges between the different color classes in $\Sc$. In particular, for $i\neq j$, 
$\Delta x_{ij}n$ is the number of edges between the sets $S_i$ and $S_j$, whereas $\Delta x_{ii}n/2$ is the number of edges within the set $S_i$ (cf. Section~\ref{sec:defer-prelim} for more details).

To calculate $\E[Y_{\Sc}\mid\mathbf{1}_{\xi,\zeta}=1]$, we need some
notation. For colors $i,j\in\{1,\hdots,q\}$, let $a'_{ij}$ be the number of edges in $\xi$ whose one endpoint has color $i$ and the other $j$. It will be convenient to denote $a_{ii}:=2a'_{ii}$ and $a_{ij}:=a'_{ij}$ whenever $i\neq j$. Finally, let $c_i$ denote the number of vertices in $\xi$ colored with $i$. The following equalities are immediate:
\begin{equation}\label{eq:degreeconsideration}
\mbox{$\sum_{j}$}\,a_{ij}=2c_i, \
\mbox{$\sum_{i,j}$}\, a_{ij}=2\ell.
\end{equation}
We are almost set to compute
$\E[Y_{\Sc}\mid\mathbf{1}_{\xi,\zeta}=1]$. We denote by $\x$ the
same set of variables as in \eqref{eq:ysimple}. This
number includes the $a_{ij}$ edges prescribed by $\xi,\zeta$.
To make the following formulas easier to digest let
$n\Delta x'_{ij}=n\Delta x_{ij}-a_{ij}$.
We have
\begin{equation*}
\begin{aligned}
&\E[Y_{\Sc}\mid \mathbf{1}_{\xi,\zeta}=1]=\\
&\sum_{\x}\prod_{i}\binom{\Delta\alpha_{i}n-2c_i}{\Delta x_{i1}'n,\hdots,\Delta x_{iq}'n}\frac{\big[\prod_{i\neq j}(\Delta x_{ij}'n)!\big]^{1/2}\prod_{i}(\Delta x_{ii}'n-1)!!}{(\Delta n-2\ell-1)!!}\prod_{i,j}B^{\Delta x_{ij}n/2}_{ij}.
\end{aligned}
\end{equation*}

Using that for constants $c_1,c_2>0$, it holds that $(c_1n-c_2)!/(c_1n)!=(1+o(1))(c_1n)^{-c_2}$, we obtain
\begin{equation*}
\frac{\binom{\Delta\alpha_{i}n-2c_i}{\Delta x_{i1}'n,\hdots,\Delta x_{iq}'n}}{\binom{\Delta\alpha_{i}n}{\Delta x_{i1}n,\hdots,\Delta x_{iq}n}} 
\sim \frac{\prod_j\big(x_{ij}\big)^{a_{ij}}}{\alpha_i^{2c_i}},
\quad \frac{\frac{\big[\prod_{i\neq j}(\Delta x_{ij}'n)!\big]^{1/2}\prod_{i}(\Delta x_{ii}'n-1)!!}{(\Delta n-2\ell-1)!!}}{\frac{\big[\prod_{i\neq j}(\Delta x_{ij}n)!\big]^{1/2}\prod_{i}(\Delta x_{ii}n-1)!!}{(\Delta n-1)!!}}\sim \frac{1}{\prod_{i,j}(x_{ij})^{a_{ij}/2}}.
\end{equation*}
The asymptotics of the ratio $\E[Y_{\Sc}\mid \mathbf{1}_{\xi,\zeta}=1]/\E[Y_{\Sc}]$ are determined from those $\x^*$ which maximize $\Upsilon_1(\alphab,\x)$. Thus, we obtain
\begin{equation*}
\frac{\E[Y_{\Sc}\mid \mathbf{1}_{\xi,\zeta}=1]}{\E[Y_{\Sc}]}
\sim \frac {\prod_{i,j}\big(x_{ij}^{*}\big)^{a_{ij}/2}}
{\prod_i\alpha_i^{2c_i}}.
\end{equation*}
 For given $\xi$, the number of
possible $\zeta$ in the pairing model is asymptotic to $\prod_i\big[\Delta(\Delta-1)\alpha_in\big]^{c_i}=\big[\Delta(\Delta-1)n\big]^\ell\prod_i\alpha_i^{c_i}$. Since $p_1=(\Delta n-2\ell-1)!!/(\Delta n-1)!!\sim (\Delta n)^{-\ell}$, we have
\begin{equation*}
\begin{aligned}
\frac{\sum_\zeta
p_1\E[Y_{\Sc}\mid \mathbf{1}_{\xi,\zeta}=1]}{\E[Y_{\Sc}]}\sim
\frac
{(\Delta-1)^{\ell}\prod_i\alpha^{c_i}_i\prod_{i,j}\big(x_{ij}^{*}\big)^{a_{ij}/2}}
{\prod_i\alpha_i^{2c_i}}
&=\frac
{(\Delta-1)^{\ell}\prod_{i,j}\big(x_{ij}^{*}\big)^{a_{ij}/2}}
{\prod_i\alpha_i^{\frac{1}{2}\sum_j a_{ij}}}\\
&=(\Delta-1)^{\ell}\prod_{i\leq j}\Big(\frac{x_{ij}^{*}}{\sqrt{\alpha_i\alpha_j}}\Big)^{a_{ij}'},
\end{aligned}
\end{equation*}
where in the last equality we used that $a_{ii}=2a_{ii}'$, $a_{ij}=a'_{ij}$ for $i\neq j$, $a_{ij}'=a_{ji}'$ and $x_{ij}^{*}=x_{ji}^{*}$. Note that the r.h.s. evaluates to 0 whenever there exist $i,j$ such that $B_{ij}=0$ but $a_{ij}\neq 0$, since then we have $x_{ij}^{*}=0$ (cf. \eqref{eq:optimalxs}. This is in complete accordance with the fact that the configuration induced by the partition $\Sc$ has zero weight. Thus, by \eqref{eq:smallratio}, we may write
\begin{equation*}
\frac{\E[Z^{\alphab}_GX_\ell]}{\E[Z^{\alphab}_G]}\sim\frac{(\Delta-1)^{\ell}}{2\ell}\cdot
\sum_{\a'} N_{\a'}
\prod_{i\leq j}\Big(\frac{x_{ij}^{*}}{\sqrt{\alpha_i\alpha_j}}\Big)^{a_{ij}'},
\end{equation*}
where $\a'=\{a_{11}',\hdots,a_{qq}'\}$ and $N_{\a'}$ is the number of
possible $\xi$ with $a_{ij}'$ edges of type $\{i,j\}$. Using \eqref{eq:optimalxs}, we have that $x^{*}_{ij}/\sqrt{\alpha_i\alpha_j}$ is equal to the $(i,j)$-entry of the matrix $\M$. Thus, the sum can be reformulated as the (multiplicative) weight of walks in a weighted multigraph whose (weighted) adjacency matrix is given by $\M$ (for more details on the technique see \cite{Janson}). It thus follows that the sum equals $\mathrm{Tr}(\M^\ell)=1+\sum^{q-1}_{j=1}\mu_j^\ell$. The fact that the $\mu_i$'s are positive follows from the fact that $\B$ is a positive definite matrix, while the fact that the $\mu_i$'s are less than $1/(\Delta-1)$ follows from the results of Section~\ref{sec:connection}.
\end{proof}

\begin{lemma}\label{lem:a3}
Assumption (A3) holds with
\[\exp(\sum_{i\geq 1} \lambda_i\delta_i^2)=\prod^{q-1}_{i=1}\prod^{q-1}_{j=1}\big(1-(\Delta-1)\mu_i\mu_j\big)^{-1/2},\] where the $\mu_i$'s are as in Lemma~\ref{lem:a2}.
\end{lemma}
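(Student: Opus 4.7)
The proof is a direct computation once Lemmas~\ref{lem:a1} and~\ref{lem:a2} are in hand. First I would substitute $\lambda_i = (\Delta-1)^i/(2i)$ and $\delta_i = \sum_{j=1}^{q-1} \mu_j^i$ into the sum $\sum_i \lambda_i \delta_i^2$ and expand the square, giving
\[
\sum_{i\geq 1} \lambda_i \delta_i^2 \;=\; \sum_{i\geq 1} \frac{(\Delta-1)^i}{2i}\sum_{j,k=1}^{q-1} (\mu_j\mu_k)^i \;=\; \sum_{j,k=1}^{q-1}\sum_{i\geq 1}\frac{\bigl((\Delta-1)\mu_j\mu_k\bigr)^i}{2i}.
\]
The interchange of summations is legitimate provided that the inner geometric-type series converges absolutely, which is where the Hessian dominance assumption enters.

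Next I would invoke the bound from Lemma~\ref{lem:a2}, which guarantees that for a Hessian dominant $\alphab$ we have $0 < \mu_j < 1/(\Delta-1)$ for every $j\in[q-1]$. In particular, $(\Delta-1)|\mu_j\mu_k| < 1/(\Delta-1) < 1$ for every pair $(j,k)$, so the series $\sum_{i\geq 1} x^i/i$ converges for each $x = (\Delta-1)\mu_j\mu_k$, with sum $-\ln(1-x)$. This both verifies assumption~(A3), i.e., $\sum_i \lambda_i \delta_i^2 < \infty$, and justifies the interchange of the double sum by absolute convergence.

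Applying the Taylor expansion termwise then gives
\[
\sum_{i\geq 1}\lambda_i\delta_i^2 \;=\; -\tfrac12 \sum_{j,k=1}^{q-1}\ln\bigl(1-(\Delta-1)\mu_j\mu_k\bigr) \;=\; \ln\prod_{j,k=1}^{q-1}\bigl(1-(\Delta-1)\mu_j\mu_k\bigr)^{-1/2},
\]
which is exactly the claimed identity (after exponentiating, as needed for comparison with assumption~(A4) in Theorem~\ref{thm:smallgraphmethod}). There is no real obstacle here; the only thing to be careful about is ensuring the strict inequality $(\Delta-1)\mu_j\mu_k < 1$ so that the logarithms are defined and the interchange of sums is valid. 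This is precisely the content of the positivity and upper-bound statements on the $\mu_j$'s in Lemma~\ref{lem:a2}, so the proof amounts to assembling the two previous lemmas and performing the geometric-series calculation above.
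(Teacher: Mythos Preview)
Your computation is correct and is the standard direct argument for this step of the small subgraph conditioning method: expand $\delta_i^2$ as a double sum, swap the order of summation using the bound $0<\mu_j<1/(\Delta-1)$ from Lemma~\ref{lem:a2}, and recognize the Taylor series of $-\ln(1-x)$. You also correctly observe that the stated identity should be read as $\exp\bigl(\sum_i\lambda_i\delta_i^2\bigr)=\prod_{j,k}(1-(\Delta-1)\mu_j\mu_k)^{-1/2}$, which is what is actually needed for the comparison with~(A4).

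The paper itself does not give a proof of this lemma; it states Lemmas~\ref{lem:a1}--\ref{lem:a4} and explicitly omits the proofs, pointing instead to the analogous computations in \cite{GSV:colorings} for the bipartite case. Your argument is precisely the one that appears there (and in essentially every application of the method going back to \cite{MWW}), so there is no meaningful difference in approach.
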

\begin{proof}[Proof of Lemma~\ref{lem:a3}]
We have
\begin{align*}
\sum_{i\geq 1} \lambda_i\delta_i^2=\sum_{i\geq 1}\frac{(\Delta-1)^{i}}{2i}\sum^{q-1}_{j=1}\sum^{q-1}_{k=1}\mu_j^{i}\mu_k^{i}=-\frac{1}{2}\sum_{j,k\in[q-1]}\ln\big(1-(\Delta-1)\mu_j\mu_k\big),
\end{align*}
where we used that $\sum_{i\geq 1} \frac{x^{i}}{i}=-\ln(1-x)$ for $|x|<1$.
\end{proof}

\begin{lemma}\label{lem:a4}
For a ferromagnetic model, for all Hessian dominant $\alphab$ it holds that
\[\frac{\E_{\G}\big[(Z^{\alphab}_G)^2\big]}{\big(\E_{\G}[Z^{\alphab}_G]\big)^2}\rightarrow\prod^{q-1}_{i=1}\prod^{q-1}_{j=1}\big(1-(\Delta-1)\mu_i\mu_j\big)^{-1/2},\]
where the $\mu_i$'s are as in Lemma~\ref{lem:a2}.
\end{lemma}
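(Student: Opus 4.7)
The plan is to obtain precise (not just exponential-order) asymptotics for both $\E_\G[Z_G^{\alphab}]$ and $\E_\G[(Z_G^{\alphab})^2]$ via a multidimensional Laplace's method applied to the sums in \eqref{eq:firstmoment} and \eqref{eq:secondmoment}, and then take the ratio so that all the Stirling polynomial prefactors cancel, leaving only a determinant ratio which can be evaluated using the tensor-product structure inherited from $\B\otimes\B$.

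First, starting from \eqref{eq:firstmoment} and applying Stirling's approximation in the refined form (keeping the $\sqrt{2\pi n}$ factors), one obtains
\[
\E_\G[Z_G^{\alphab}] \;=\; \frac{C_1(\alphab)}{n^{k_1/2}\,\sqrt{\det(-H_1)}}\,\exp\!\big(n\Psi_1(\alphab)\big)\,(1+o(1)),
\]
where $H_1$ is the Hessian (in the free coordinates of $\x$) of the function $\Phi_1(\alphab,\x)$ in \eqref{eq:limitfirst} at its unique interior maximum $\x^*_{\alphab}$, $k_1$ is the number of such free coordinates, and $C_1(\alphab)$ is an explicit constant coming from the Stirling prefactors. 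The analogous Laplace expansion of \eqref{eq:secondmoment} yields
\[
\E_\G[(Z_G^{\alphab})^2] \;=\; \frac{C_2(\alphab)}{n^{k_2/2}\,\sqrt{\det(-H_2)}}\,\exp\!\big(n\Psi_2(\alphab)\big)\,(1+o(1)),
\]
where $H_2$ is the corresponding Hessian of $\Phi_2(\gammab,\y)$ at its maximum $(\gammab^*,\y^*)$ subject to \eqref{eq:constraintsecond}. Here the Hessian dominance of $\alphab$ is crucial: it guarantees that $\x^*_{\alphab}$ is a nondegenerate strict maximum of $\Phi_1(\alphab,\cdot)$, which is exactly what the Laplace method needs, and (via Theorem~\ref{thm:connection}) that the corresponding fixpoint $\bR$ of the tree recursions has spectral radius strictly below $1/(\Delta-1)$, which will make all factors $1-(\Delta-1)\mu_i\mu_j$ strictly positive.

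Second, the key structural input is Theorem~\ref{thm:copo}: for Hessian dominant $\alphab$, the unique interior maximizer of $\Phi_2$ is the tensor product of the first-moment maximizer, i.e.\ $\gamma^*_{ik}=\alpha_i\alpha_k$ and $y^*_{ikjl}=x^*_{ij}\,x^*_{kl}/\alpha_i\alpha_k$ (up to the symmetrization forced by $y_{ikjl}=y_{jlik}$). At this tensor-product maximum, the Hessian $H_2$ inherits a tensor/block structure from the paired-spin interaction matrix $\B\otimes \B$ (Remark~\ref{rem:pair-spin}). Consequently $\Psi_2(\alphab)=2\Psi_1(\alphab)$ and the Stirling-prefactor constants satisfy $C_2(\alphab)=C_1(\alphab)^2$ and $k_2=2k_1$, so taking the ratio kills the exponential factor, the $n^{k/2}$ factors, and $C_1,C_2$, leaving
\[
\frac{\E_\G[(Z_G^{\alphab})^2]}{\big(\E_\G[Z_G^{\alphab}]\big)^2} \;=\; \frac{\det(-H_1)^{2}}{\det(-H_2)}\,(1+o(1)) \;=\; \frac{1}{\sqrt{\det(\I - (\Delta-1)\M\otimes \M)\big|_{S^\perp}}}\,(1+o(1)),
\]
after changing to the local coordinates introduced around \eqref{qa3}--\eqref{pako2}, in which the ``curvature'' operator of $\Phi_1$ at $\x^*_{\alphab}$ is precisely $\I-(\Delta-1)\M$ restricted to the codimension-$1$ subspace $S$ defined by \eqref{pako2}.

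Third, the determinant is evaluated by diagonalizing $\M$ on $S$: by Lemma~\ref{lem:a2} the restriction has eigenvalues $\mu_1,\dots,\mu_{q-1}$, so $\M\otimes\M$ on the relevant subspace has eigenvalues $\{\mu_i\mu_j\}_{i,j=1}^{q-1}$, and the determinant factors as $\prod_{i,j=1}^{q-1}\big(1-(\Delta-1)\mu_i\mu_j\big)$, yielding the claimed product $\prod_{i,j=1}^{q-1}\big(1-(\Delta-1)\mu_i\mu_j\big)^{-1/2}$. The main obstacle is the bookkeeping in the second step: carefully tracking which coordinates in $\y$ are free (after the constraints \eqref{eq:constraintsecond} are imposed), showing that the Hessian of $\Phi_2$ really does split as $(\I-(\Delta-1)\M)\otimes(\I-(\Delta-1)\M)$ on the appropriate quotient subspace, and that the Stirling prefactors in $C_2$ are exactly $C_1^2$ modulo the determinant term. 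This is the same determinantal computation carried out for the bipartite case in \cite[Section~F.1]{GSV:colorings}; only the role of the bipartite matching is replaced by a single pairing, so the calculation transcribes essentially verbatim after identifying $C_j\leftrightarrow R_j$, $\beta_j\leftrightarrow\alpha_j$ and dropping the duplication between the two sides.
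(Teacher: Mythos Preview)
Your approach is essentially the same as the paper's: the paper omits the proof of Lemma~\ref{lem:a4} entirely and simply refers to \cite[Section~F.1]{GSV:colorings}, noting that the bipartite computation there carries over to the $\Delta$-regular (non-bipartite) case after the obvious identifications. Your outline---Laplace asymptotics on \eqref{eq:firstmoment} and \eqref{eq:secondmoment}, tensor-product structure of the second-moment maximizer via Theorem~\ref{thm:copo} and Remark~\ref{rem:pair-spin}, and a determinantal computation reducing to the eigenvalues $\mu_i$ of $\M$---is exactly that computation, and you likewise defer the heavy bookkeeping to the same reference.

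One caution: several of your intermediate claims are not literally correct as written and would not survive a careful check. The second-moment sum \eqref{eq:secondmoment} has an extra layer of free variables (the overlap $\gammab$, with $(q-1)^2$ degrees of freedom after the constraints in \eqref{eq:constraintsecond}) that has no counterpart in the first moment, so the naive assertions $k_2=2k_1$ and $C_2=C_1^2$ do not hold; the powers of $n$ and Stirling prefactors only cancel after this additional Gaussian integration over $\gammab$ is performed and folded into the Hessian. Relatedly, the displayed ratio $\det(-H_1)^2/\det(-H_2)$ is not what the two Laplace formulas you wrote actually produce (that would be $\det(-H_1)/\sqrt{\det(-H_2)}$, modulo the prefactors), and the restriction should be to $S\otimes S$ (a $(q-1)^2$-dimensional subspace), not to $S^\perp$ (which is one-dimensional). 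None of this changes the final answer or the strategy, but if you were to fill in the argument rather than cite \cite{GSV:colorings}, these are precisely the places where the real work lies.
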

\begin{proof}
Let $\x^{*}=\arg\max_{\x}\Upsilon_1(\alphab,\x)$, $(\gammab^{*},\y^{*})=\arg\max_{\gammab,\y}\Upsilon_2(\gammab,\y)$. For $\alphab$ which is Hessian dominant, Theorem~\ref{thm:copo} yields  $\Upsilon_2(\gammab^{*},\y^{*})=2\Upsilon_1(\alphab,\x^{*})$. 

Using methods in \cite[Appendix B]{GSV:colorings}, we show in Section~\ref{sec:linearalgebra} that 
\begin{equation}\label{eq:asymptoticsfirst}
\lim_{n\rightarrow\infty}\frac{(2\pi n)^{(q-1)/2}\E_{\G}[Z^{\alphab}_G]}{e^{n\Upsilon_1(\alphab,\x^{*})}}=\Big(\prod_{i\in [q]}\alpha_i\prod_{i\in [q-1]}\big(1+\mu_i\big)\Big)^{-1/2},
\end{equation}
and
\begin{equation}\label{eq:asymptoticssecond}
\lim_{n\rightarrow\infty}\frac{(2\pi
n)^{q-1}\E_{\G}[(Z^{\alphab}_G)^2]}{e^{n\Upsilon_2(\gammab^{*},\y^{*})}}
=\Big(\prod_{i\in [q]}\alpha_i\prod_{i\in [q-1]}\big(1+\mu_i\big)\Big)^{-1}\prod^{q-1}_{i=1}\prod^{q-1}_{j=1}\big(1-(\Delta-1)\mu_i\mu_j\big)^{-1/2}.
\end{equation} 
The statement of the lemma follows.
\end{proof}

\begin{proof}[Proof of Lemma~\ref{lem:smallgraph}]
Lemmas~\ref{lem:a1}--\ref{lem:a4} verify the assumptions of Theorem~\ref{thm:smallgraphmethod}. The lemma thus follows by applying Theorem~\ref{thm:smallgraphmethod}, for $r(n)=1/n$.
\end{proof}

\subsection{The asymptotics of the moments}\label{sec:linearalgebra}
We follow closely the proof in \cite[Appendix B]{GSV:colorings}, where very similar asymptotics are computed in detail. We first overview the approach in \cite[Appendix B]{GSV:colorings} in our setting.

The asymptotics of $\E_{\G}[Z^{\alphab}_G]$ and $\E_{\G}[Z^{\alphab}_G]^2$ are derived by first rewriting the sums in \eqref{eq:firstmoment} and \eqref{eq:secondmoment} as integrals and approximating the latter with Gaussian integrals. The principle behind the technique is the negative-definiteness of the Hessian at the maximizers of the functions $\Upsilon_1$ and $\Upsilon_2$, which control the exponential order of the terms in the sums \eqref{eq:firstmoment} and \eqref{eq:secondmoment}, respectively. This allows to focus on terms within  $O(1/\sqrt{n})$ distance around the term with the maximum contribution. A thorough exposition of the technical details can be found in \cite[Section 9.4]{JLR}.

Carrying out the above scheme in our setting is impeded by the fact that the sums in \eqref{eq:firstmoment} and \eqref{eq:secondmoment} are over variables which are linearly dependent. We will get rid of this linear dependence in the simplest way: for each of the two sums, we pick a subset $S$ of the variables (with minimum cardinality) such that every variable is a (non-trivial) linear combination of variables in $S$. Variables in $S$ span a full-dimensional space, inducing what we call a ``full-dimensional representation"  of the functions $\Upsilon_1, \Upsilon_2$ when these are viewed as functions  of the variables in $S$.   The inconvenience that this procedure causes is, that in the calculation of the Gaussian integrals, (the determinant of) the Hessian matrices of the full-dimensional representations of $\Upsilon_1, \Upsilon_2$ come into play. 

In \cite[Appendix B.1.1]{GSV:colorings}, the above setting is abstracted as follows: given a linear subspace $\A \z=0$, compute (the determinant of) the Hessian matrix of the full-dimensional representation of a function $\Upsilon(\z)$. It is not hard to see that a full-dimensional representation of $\A \z=0$, assuming that $\A$ has row rank $r$, is obtained by first picking a submatrix $\A_f$ induced by $r$ linearly independent rows of $\A$, and then picking $r$ columns of $\A_f$ to obtain an $r\times r$ invertible submatrix $\A_{fs}$ (the variables corresponding to the columns of $\A_{fs}$ can be written as non-trivial linear combinations of the remaining variables; the latter yield the full-dimensional representation). We denote by $\H^{f}$ the Hessian matrix of the full-dimensional representation of $\Upsilon$ induced by the matrices $\A_f,\A_{fs}$. We further denote by $\H$ the Hessian matrix of $\Upsilon(z)$ (where $\z$ is now assumed to be unconstrained); note that $\H$ is diagonal. The following is proved in \cite{GSV:colorings}.

For a polynomial $p(s)$, $[s^t]p(s)$ denotes the coefficient of $s^t$ in $p(s)$.
\begin{lemma}{\cite[Lemma B.7]{GSV:colorings}}\label{lem:blackbox}
Suppose $\A$ consists of $m$ rows and has rank $r$. Let $\Tb$ be a positive semi-definite diagonal matrix  such that $[\Tb\ \A]$ has full row rank.  If $\H$ is invertible, then
\begin{equation}\label{eq:blackbox}
\Det\big(-\H^f\big)=\frac{L\big(\A_f,\A,\Tb\big)}{\Det\big(\A_{fs}\big)^2}\,\Det(-\H)\, [\epsilon^{m-r}]\,\Det\big(\epsilon\Tb-\A\H^{-1}\A^{\T}\big),
\end{equation}
where $L\big(\A_{f}, \A,\Tb\big)=(-1)^r\left.\Det\big(\A_{f}\A_{f}^\T\big)\middle/[\epsilon^{m-r}]\,\Det\big(\epsilon\Tb-\A\A^{\T}\big)\right.$.
\end{lemma}
\begin{remark}\label{rem:fullrowrank}
When $\A$ has full row rank, i.e., $r=m$, one can take $\Tb$ to be the identity matrix. Then, the r.h.s. in \eqref{eq:blackbox} simplifies into  \[\Det(-\H)\,\Det\big(-\A\H^{-1}\A^{\T}\big)/\Det\big(\A_{fs}\big)^2.\] 
\end{remark}
We will apply Lemma~\ref{lem:blackbox} in Sections~\ref{sec:momentone} and~\ref{sec:momenttwo} to calculate the asymptotics of the moments. To do this, we will need more information on the maximizers of the functions $\Upsilon_1$ and $\Upsilon_2$ for a Hessian dominant phase $\alphab$. In particular, let $\x^*$ be the maximizer of $\Upsilon_1(\alphab,\x)$ (cf. \eqref{eq:optimalxs}) and $(\gammab^*,\y^*)$ be the maximizer of $\Upsilon_2(\gammab,\y)$. Adapting the proof of \cite[Lemma 3.2]{GSV:colorings}, we have that $\gamma^{*}_{ij}=\alpha_i\alpha_j$ for all $i,j\in[q]$, from where it easily follows that $y^{*}_{ikjl}=x^{*}_{ij}x^{*}_{kl}$.

Following \cite[Appendix B.1.2]{GSV:colorings}, we use the following notation in Sections~\ref{sec:momentone} and~\ref{sec:momenttwo}. For a vector $\z\in \mathbb{R}^n$ we denote by $\z^D$ the $n\times n$ diagonal matrix whose $i$-entry on the diagonal equals $z_i$ for $i\in n$. For vectors $\z_j\in \mathbb{R}^{m_j}$, $j=1,\hdots,t$, we denote by $[\z_1,\hdots,\z_{t}]^{\T}$ the $\mathbb{R}^{\sum_j m_j}$ vector which is the concatenation of the vectors $\z_1,\hdots,\z_t$. $\I_{n}$ will denote  the identity matrix with dimensions $n\times n$ and $\mathbf{0}$ will denote the all-zeros matrix whose dimensions will be inferred from context.  For matrices $\M_1$ and $\M_2$, $\M_1\otimes \M_2$ will denote the Kronecker product of $\M_1,\M_2$, while  $\M_1\oplus\M_2$ denotes the direct sum of $\M_1,\M_2$, that is, the block matrix $[\begin{smallmatrix} \M_1& \mathbf{0}\\\mathbf{0}&\M_2 \end{smallmatrix}]$. 

\subsubsection{Proof of \eqref{eq:asymptoticsfirst}}\label{sec:momentone}
The first moment $\E_{\G}[Z^{\alphab}_G]$ is a sum over $\x$ (and $\alphab$ is fixed).  Note that if $B_{ij}=0$, then we may restrict the sum in \eqref{eq:firstmoment} to those $\x$ which satisfy $x_{ij}=0$ without changing the sum's value. Further, since $x_{ij}=x_{ji}$ and $B_{ij}=B_{ji}$, we may write the sum in \eqref{eq:firstmoment} in terms of those $x_{ij}$ with $i\leq j$. Let
\begin{equation}
P_1=\big\{(i,j)\in[q]^2\,\big|\, B_{ij}> 0, i\leq j\big\}.\label{eq:defpone}
\end{equation}
Henceforth $\x$ will denote $\{x_{ij}\}_{(i,j)\in P_1}$. The observations above imply that the sum in \eqref{eq:firstmoment} can be written over the possible values of the vector $\x$.  Note that for ferromagnetic models we have that $(i,i)\in P_1$ for every $i\in [q]$ (cf. the discussion after Theorem~\ref{thm:prediction}).   We are left to account for the linear dependencies induced by the $q$ constraints $a_i=\sum_{j}x_{ij}$. In matrix form, we can write those as
\begin{equation}\label{eq:fullfirst}
\A_{1}\x=\alphab,
\end{equation}
where $\A_{1}$ is a $\{0,1\}$-matrix with dimension $q\times |P_1|$. For a ferromagnetic model, as we shall display shortly, we have that the rank of $\A_1$ is $q$ (this holds more generally for matrices $\B$ which are irreducible and aperiodic,  see for example footnote~\ref{foot:invertible}). To get a full-dimensional representation of the space \eqref{eq:fullfirst}, we will eliminate $q$ variables from the vector $\x$. This corresponds to picking $q$ columns of $\A_1$ which induce a $q\times q$ invertible submatrix of $\A_1$. We will denote this submatrix by $\A_{1,s}$. For ferromagnetic models, we can choose the columns corresponding to the variables $x_{ii}$ for $i\in[q]$, in which case  $\A_{1,s}$ is simply the identity matrix.\footnote{\label{foot:invertible}In general, the $q\times q$ invertible submatrices $\A_{1,s}$ can be characterized as follows. First, view $\A_1$ as the (unsigned) incidence matrix of a graph $H$ with vertex set $[q]$ and edge set $P_1$, where vertex $i$ corresponds to the $i$-th row of $\A_1$ and an edge labelled $(i,j)$ corresponds to the column labelled $(i,j)$ in $\A_1$ (note that $H$ has a self-loop on vertex $i$ iff $B_{ii}>0$). Then $\A_{1,s}$ specifies a subgraph $H'$ of $H$ with exactly $q$ edges. It can be shown that $\A_{1,s}$ is invertible if $H'$ is spanning (i.e, every vertex in $H'$ has non-zero degree) and  all of the connected components of $H'$ are unicyclic and non-bipartite (i.e., every connected component of $H'$ has a unique cycle of odd length, where self-loops count as cycles of length 1).} 

Adapting the proof of \cite[Lemma B.3]{GSV:colorings}, yields the following asymptotics for the first moment $\E_{\G}[Z^{\alphab}_G]$. The details of the proof can be found in Appendix~\ref{app:A}.
\begin{lemma}\label{lem:firasympdet}
For a ferromagnetic model,\footnote{\label{foot:det}We briefly comment on how the choice of the full-dimensional representation (i.e, the choice of $\A_{1,s}$) has been  used in the derivation of \eqref{eq:asymfirst}. Relative to footnote \ref{foot:invertible}, if the invertible submatrix $\A_{1,s}$ corresponds to a subgraph with exactly $c$ components which contain a non-trivial odd cycle (i.e., an odd cycle  of length $\geq 3$), there is a correction factor $2^{-c}$ in the r.h.s. of \eqref{eq:hfone}. The factor comes from $(\mathrm{mod}\, 2)$ constraints imposed by considering the sum of constraints in \eqref{eq:fullfirst} corresponding to vertices in each such unicyclic component (in the derivation below, this factor  cancels with  the factor $|\mathrm{Det}(\A_{1,s})|$ coming from $\mathrm{Det}(-\H_{1,\x}^f)^{-1/2}$; it can be shown that $|\mathrm{Det}(\A_{1,s})|=2^c$). Note that for our choice of $\A_{1,s}$, $c$ equals zero, since the subgraph induced by the columns of $\A_{1,s}$ consists of $q$ components, each of which is a single vertex with a self-loop.} it holds that
\begin{equation}\label{eq:asymfirst}
\lim_{n\rightarrow\infty}\frac{(2\pi n)^{(q-1)/2}\E_{\G}[Z^{\alphab}_G]}{e^{n\Upsilon_1(\alphab,\x^{*})}}=\Big( 2^{q-1}  \prod_{(i,j)\in P_1} x_{ij}^{*}\Big)^{-1/2}\mathrm{Det}(-\H_{1,\x}^f)^{-1/2},
\end{equation} 
where $\H_{1,\x}^f$ is the Hessian of the full-dimensional representation of $g_1(\x)$ evaluated at $\x=\x^{*}$.
\end{lemma}

To expand the determinant in \eqref{eq:asymfirst}, we apply Lemma~\ref{lem:blackbox} and in particular Remark~\ref{rem:fullrowrank}. This yields
\begin{equation}\label{eq:hfone}
\begin{aligned}
\Det\big(-\H^f_{1,\x}\big)&=\frac{1}{\Det(\A_{1,s})^2}\,\Det(-\H_{1,\x})\,\Det\big(-\A_1\H_{1,\x}^{-1}\A^{\T}_1\big)\\
&=\Det(-\H_{1,\x})\,\Det\big(-\A_1\H_{1,\x}^{-1}\A^{\T}_1\big),
\end{aligned}
\end{equation}
where $\H_{1,\x}$ is the $|P_1|\times |P_1|$ diagonal matrix corresponding to the Hessian matrix of $g_1(\x)$ (when $\x$ is unconstrained) and in the second equality in \eqref{eq:hfone} we used that $\Det(\A_{1,s})=1$ (by our choice of $\A_{1,s}$). 

Since $\H_{1,\x}$ is diagonal, we obtain that
\begin{equation*}
\Det(-\H_{1,\X})^{-1}=2^{q}\prod_{(i,j)\in P_1}x_{ij}^{*},
\end{equation*}
so \eqref{eq:asymptoticsfirst} will follow from 
\begin{equation}\label{eq:firstXXX}
\Det\big(-\A_1(\H_{1,\X})^{-1}\A_1^{\T} \big)=2\prod_{i\in [q]}\alpha_i\prod_{i\in [q-1]}\big(1+\mu_i\big).
\end{equation}
To show \eqref{eq:firstXXX}, it can be checked that 
\begin{equation}\label{eq:weightfir}
-\A_1\H_{1,\x}^{-1}\A^{\T}=\alphab^{D}+\S_{\x},
\end{equation}
where $\alphab^{D}$ is the $q\times q$ diagonal matrix whose $i$-th diagonal entry is $\alpha_i$ and $\S_{\x}$ is the $q\times q$ symmetric matrix whose $(i,j)$ entry (when $i\leq j$) is $x_{ij}^{*}$ whenever $(i,j)\in P_1$ and 0 otherwise.

Observe that $(\alphab^{D})^{-1/2}\S_{\x}(\alphab^{D})^{-1/2}=\M$, where $\M$ is the matrix in Lemma~\ref{lem:a2}. From this, we obtain
\begin{equation*}
\Det(-\A_1\H_{1,\x}^{-1}\A^{\T}_1)=\Big(\prod_{i\in [q]}\alpha_i\Big)\, \Det\big(\I_q+\M\big).
\end{equation*}
Since the spectrum of $\M$ is $\{1,\mu_1,\hdots,\mu_q\}$, it follows that the spectrum of the matrix $\I_q+\M$ is $\{2,1+ \mu_1, \hdots,1+ \mu_{q-1}\}$. This yields \eqref{eq:firstXXX}, thus completing the proof of \eqref{eq:asymptoticsfirst}.

\subsubsection{Proof of \eqref{eq:asymptoticssecond}}\label{sec:momenttwo}
For the second moment, $\E_{\G}\big[(Z^{\alphab}_G)^2\big]$ is a sum over $\gammab,\y$ while $\alphab$ is fixed. Analogously to \eqref{eq:defpone}, let
\begin{equation}
P_2=\big\{(i,k,j,l)\in [q]^4\,\big|\, B_{ij}B_{kl}>0, i\leq j, k\leq l\big.\big\}.\label{eq:defptwo}
\end{equation}
Henceforth, $\y$ will denote $\{y_{ikjl}\}_{(i,k,j,l)\in P_2}$. The constraints in \eqref{eq:constraintsecond} can now be written as 
\begin{equation}\label{eq:fullsecond}
\A_2\,\big[\gammab,\, \Y\big]^{\T}=\big[\alphab,\,\alphab,\,\mathbf{0}_{q^2}\big]^{\T}, \mbox{ where } \A_2=\Big[\begin{array}{cc} \A_{2,\gammab}& \mathbf{0}\\ -\I_{q^2} & \A_{2,\Y}\end{array}\Big],
\end{equation}
and $\A_{2,\gammab}, \A_{2,\Y}$ are $\{0,1\}$-matrices with dimensions $2q\times q^2$ and $q^2\times |P_2|$, respectively. It is easy to see that $\A_{2,\Y}$ has full row rank $r_{\y}=q^2$, while $\A_{2,\gammab}$ has rank $r_{\gammab}=2q-1$, so that the rank of $\A_2$ is $r_2=r_{\y}+r_{\gammab}=q^2+2q-1$. Thus, to specify a full-dimensional representation of \eqref{eq:fullsecond}, we need to specify an  $r_2\times r_2$ invertible submatrix $\A_{2,s}$ of $\A_2$. It can be checked that any such submatrix $\A_{2,s}$ of $\A_2$ must have the form
\[\A_{2,s}=\Big[\begin{array}{cc} \A^s_{2,\gammab}& \mathbf{0}\\ -\I_{q^2} & \A^s_{2,\Y}\end{array}\Big],\]
where $\A^s_{2,\gammab},\A^s_{2,\Y}$ are $r_{\gammab}\times r_{\gammab}$ and $r_{\y}\times r_{\y}$  invertible submatrices of $\A^s_{2,\gammab},\A^s_{2,\y}$ respectively. Thus, we only need to specify the matrices $\A^s_{2,\gammab},\A^s_{2,\y}$. We will choose $\A_{2,\gammab}^s$ to be an arbitrary invertible submatrix of $\A_{2,\gammab}$; since $\A_{2,\gammab}$ is totally unimodular (it corresponds to the incidence matrix of the complete bipartite graph with $q$ vertices on each side), we have $\Det(\A^s_{2,\gammab})^2=1$. For ferromagnetic models, we can choose $\A^s_{2,\y}$ to be the identity matrix using the columns corresponding to  variables $y_{ikik}$ with $i,k\in[q]$, so $\Det(\A^s_{2,\y})^2=1$. It follows that $\Det(\A_{2,s})^2=1$. For future use (with the scope of applying Lemma~\ref{lem:blackbox}), let $\A_{2,f}$ be the submatrix of $\A_2$ induced by the rows corresponding to rows of $\A_{2,s}$.

We have the following analogue of Lemma~\ref{lem:firasympdet}.  The proof is given in Appendix~\ref{app:A}.
\begin{lemma}\label{lem:secasympdet}
For a ferromagnetic model, it holds that
\begin{equation}\label{eq:asymsecond}
\lim_{n\rightarrow\infty}
\frac{(2\pi
n)^{q-1}\E_{\G}[(Z^{\alphab}_G)^2]}{e^{n\Upsilon_2(\gammab,\y^{*})}}=
\Big(2^{q^2-1}  \prod_{(i,k,j,l)\in P_2} y_{ikjl}^{*}\Big)^{-1/2} \Delta^{-(q-1)^2/2}\mathrm{Det}(-\H_{2}^f)^{-1/2},
\end{equation}
where $\H_2^f$ is the Hessian of the full-dimensional representation of $\Upsilon_2(\gammab,\y)/\Delta$ evaluated at $(\gammab,\y)=(\gammab^*,\y^*)$.
\end{lemma}
Denote by $\H_2$ the diagonal matrix corresponding to the Hessian matrix of $\Upsilon_2(\gammab,\y)/\Delta$ (when $\gammab,\y$ are unconstrained). Note that we may decompose $\H_2$ as $[\begin{smallmatrix} \H_{2,\gammab}& \mathbf{0}\\ \mathbf{0}& \H_{2,\y}\end{smallmatrix}]$, where $\H_{2,\gammab}$ is the $q^2\times q^2$ diagonal matrix corresponding to the Hessian matrix of $(\Delta-1)f_2(\gammab)/\Delta$ and $\H_{2,\y}$ is the $|P_2|\times |P_2|$ diagonal matrix corresponding to the Hessian matrix of $g_2(\y)$ (see \eqref{eq:limitsecond} for the specification of  the functions $f_2$ and $g_2$).

We next apply Lemma \ref{lem:blackbox} with the matrix (recall that $\alphab^{D}$ is the $q\times q$ diagonal matrix whose $i$-th diagonal entry is $\alpha_i$ and $\mathbf{0}_{q^2}$ is the $q^2\times q^2$ all-zeros matrix):
\begin{equation}\label{eq:matricesT}
\Tb_{2}=\alphab^{D}\oplus \alphab^{D}\oplus \mathbf{0}_{q^2},
\end{equation} 
to obtain the following equality:
\begin{align*}
\Det\big(-\H^f_2\big)&=\frac{L(\A_2,\A_{2,f},\Tb_2)}{\mathrm{Det}(\A_{2,s})^2}\Det(-\H_{2})\, [\epsilon]\,\Det\big(\epsilon\Tb_{2}-\A_2\H_{2}^{-1}\A^{\T}_2\big)\\
&=\frac{1}{2}\Det(-\H_{2})\, [\epsilon]\,\Det\big(\epsilon\Tb_{2}-\A_2\H_{2}^{-1}\A^{\T}_2\big),
\end{align*}
where in the latter equality we used that $\mathrm{Det}(\A_{2,s})^2=1$ (which was proved earlier) and $L(\A_2,\A_{2,f},\Tb_2)=1/2$ (follows by \cite[Proof of Lemma B.8]{GSV:colorings}). We calculate
\begin{align*}
\Det(-\H_{2})^{-1}&=\Det(-\H_{2,\gammab})^{-1}\, \Det(-\H_{2,\y})^{-1}\\
&=\Big[(-1)^{q^2}\Big(\frac{\Delta}{\Delta-1}\Big)^{q^2}\prod_{i,k\in[q]}\gamma_{ik}^{*}\Big]\Big[2^{q^2}\prod_{(i,k,j,l)\in P_2}y_{ikjl}^{*}\Big],
\end{align*}
so that \eqref{eq:asymptoticssecond} will follow from
\begin{multline}\label{eq:detH2}
\frac{[\epsilon]\Det\big(\epsilon\Tb_2-\A_{2} (\H_{2})^{-1}\A_{2}^{\T}\big)}{\displaystyle\prod_{i\in[q]}\alpha_i^2\displaystyle\displaystyle\prod_{i,k\in[q]}\gamma_{ik}^{*}}
\\=(-1)^{q^2}\frac{4\Delta^{2q-1}}{(\Delta-1)^{q^2}}\prod_{i\in[q-1]}\big(1+\mu_i\big)^2\prod_{i,j\in[q-1]}\big(1-(\Delta-1)\mu_i\mu_j\big).
\end{multline}

We first write out the block structure of $\epsilon\Tb_2-\A_{2} (\H_{2})^{-1}\A_{2}^{\T}$. First, we have the following analogue of \eqref{eq:weightfir}:
\begin{gather}
-\A_{2,\Y}\H_{2,\y}^{-1}\A_{2,\Y}^{\T}=\gammab^{D}+ \Sb_{\Y},\quad \A_{2,\gammab}\gammab^D\A_{2,\gammab}^{\T}=\Big[\begin{array}{cc} \alphab^{D}& \Sb_{\gammab} \\ \Sb_{\gammab}& \alphab^{D}\end{array}\Big],\label{eq:almostlaplacians1}
\end{gather}
where $\Sb_{\gammab}$ is the $q\times q$ matrix whose $(i,j)$ entry is $\gamma_{ij}^{*}$ and $\Sb_{\Y}$ is the $q^2\times q^2$ matrix whose $((i,k),(j,l))$ entry is $y_{ikjl}^{*}$. From 
\begin{align*}
&\epsilon \Tb_2-\A_2(\H_2)^{-1}\A_2^{\T}\\
&=\epsilon \Tb_2+\Big[\begin{array}{cc} \A_{2,\gammab}& \mathbf{0}\\ -\I_{q^2} & \A_{2,\Y}\end{array}\Big]\Big[\begin{array}{cc} -\frac{\Delta}{\Delta-1}\gammab^{D}& \mathbf{0}\\ \mathbf{0} & -\H_{2,\y}^{-1}\end{array}\Big]\Big[\begin{array}{cc} \A_{2,\gammab}^{\T}& -\I_{q^2}\\ \mathbf{0} & \A_{2,\Y}^{\T}\end{array}\Big]\\
&=\frac{\Delta}{\Delta-1}\Big[\begin{array}{cc} \epsilon \frac{\Delta-1}{\Delta}(\alphab^D\oplus \alphab^D)-\A_{2,\gammab}\gammab^{D}\A_{2,\gammab}^{\T}& \A_{2,\Y}\gammab^{D}\\ \gammab^{D}\A_{2,\Y}^{\T} & -\gammab^{D}-\frac{\Delta-1}{\Delta}\A_{2,\Y}\H_{2,\y}^{-1}\A_{2,\Y}^{\T}\end{array}\Big]
\end{align*}
we obtain that:
\begin{equation}\label{eq:secondWHmoment}
\frac{[\epsilon]\Det\big(\epsilon\Tb_2-\A_{2} (\H_{2})^{-1}\A_{2}^{\T}\big)}{\displaystyle\prod_{i\in[q]}\alpha^2_i\displaystyle\displaystyle\prod_{i,k\in[q]}\gamma_{ik}^{*}}=[\epsilon]\Det\big(\H_2'\big),
\end{equation}
where
\begin{equation*}
\H_2'=\frac{\Delta}{\Delta-1}\Big[\begin{array}{cc}
\epsilon\frac{\Delta-1}{\Delta}\I_{2q}-\Vb \Vb^{\T} & \Vb \\ \Vb^{\T} & -\frac{\Delta-1}{\Delta}\Wb\end{array}\Big],
\end{equation*}
and the matrices $\Wb,\Vb$ are given by (recall that $\M$ is the matrix in Lemma~\ref{lem:a2} whose eigenvalues are $1,\mu_1,\hdots,\mu_{q-1}$)
\begin{equation*}
\Wb=\frac{1}{\Delta-1}\I_{q^2}-\M\otimes\M,\quad
\Vb=(\alphab^D\oplus \alphab^D)^{-1/2}\A_{2,\gammab}\,(\gammab^D)^{1/2}.
\end{equation*}

In light of~\eqref{eq:secondWHmoment}, it suffices to compute $\Det(\H_2')$. To do this, we proceed by taking the Schur complement of the matrix $\Wb$. It is easy to see that $\Wb$ is invertible, since its spectrum is given by
\begin{equation*}
\begin{gathered}
t- 1,t-\mu_1,t-\mu_1,\hdots,t-\mu_{q-1},t-\mu_{q-1},\\
t- \mu^2_1,t- \mu_1\mu_2,\hdots,t- \mu_1\mu_{q-1},t- \mu_2\mu_1,\hdots,t-\mu_{q-1}^2,
\end{gathered}
\end{equation*}
where $t:=1/(\Delta-1$). We also have
\begin{equation}\label{eq:puttogetherb}
\Det(\Wb)=-\frac{(\Delta-2)}{(\Delta-1)^{q^2}}\prod_{i\in[q-1]}\big(1-(\Delta-1)\mu_i\big)^2\prod_{i,j\in[q-1]}\big(1-(\Delta-1)\mu_i\mu_j\big).
\end{equation}

Considering the Schur complement of the matrix $\Wb$, we obtain
\begin{equation}\label{eq:simpleMformone}
\begin{gathered}
\Det(\H_2')=(-1)^{q^2}\Big(\frac{\Delta}{\Delta-1}\Big)^{2q}\Det\big(\Wb\big)\,\Det\Big(\epsilon\frac{\Delta-1}{\Delta}\I_{2q}+\Zb\Big),\\ \text{ with } \Zb=\frac{\Delta}{\Delta-1}\Vb\Wb^{-1} \Vb^{\T}-\Vb\Vb^{\T}.
\end{gathered}
\end{equation}

To compute $\Det\big(\epsilon\frac{\Delta-1}{\Delta}\I_{2q}+\Zb\big)$, we need to obtain a simpler form for $\Zb$. The following lemma, which is proved at the end of this section, will allow for such a simplication. 
\begin{lemma}\label{lem:inversesubs}
It holds that $\Vb \Wb=\displaystyle \Big(\frac{1}{\Delta-1}\I_{2q}-\M'\Big)\Vb$, where $\M':=\M\oplus \M$.
\end{lemma}
It is standard to express the eigenvalues of $\frac{1}{\Delta-1}\I_{2q}-\M'$ in terms of the eigenvalues of $\M$ and hence obtain that the former matrix is invertible (since the eigenvalues  of $\M$ other than 1 are less in absolute value than $1/(\Delta-1)$). Thus,  Lemma~\ref{lem:inversesubs} gives
\begin{align}
\Zb&=\Big[-\I_{2q}+\frac{\Delta}{\Delta-1}\Big(\frac{1}{\Delta-1}\I_{2q}-\M'\Big)^{-1}\Big]\Vb\Vb^{\T}\notag\\
&=\big(\I_{2q}+\M'\big)\big(\frac{1}{\Delta-1}\I_{2q}-\M'\big)^{-1}\Vb\Vb^{\T}.\label{eq:simpleMformtwo}
\end{align}
By \eqref{eq:simpleMformone}, $\Zb$ is trivially symmetric. Using~\eqref{eq:simpleMformtwo}, we obtain the eigenvalues of $\Zb$.
\begin{lemma}\label{lem:Meigenfin}
The spectrum of $\Zb$ is given by
\[0,2f(1),f(\mu_1),f(\mu_1),f(\mu_2),f(\mu_2),\hdots,f(\mu_{q-1}),f(\mu_{q-1}),\]
where $f(x)=(1+x)(\frac{1}{\Delta-1}-x)^{-1}$.
\end{lemma}
\begin{proof}[Proof of Lemma~\ref{lem:Meigenfin}] Let $\u_1=[\sqrt{\alphab},\sqrt{\alphab}]^{\T}$, $\u_2=[\sqrt{\alphab},-\sqrt{\alphab}]^{\T}$. Note that $\u_1,\u_2$ are linearly independent eigenvectors of $\M'$ corresponding to the eigenvalue 1. 

Using \eqref{eq:almostlaplacians1}, we have that $\Vb\Vb^{\T}=\big[\begin{smallmatrix} \I_q& \Sb_{\gammab}' \\ \Sb_{\gammab}'& \I_q\end{smallmatrix}\big]$,
where $\Sb_{\gammab}'$ is the $q\times q$ matrix whose $(i,j)$ entry is $\sqrt{\alpha_i \alpha_j}$. It follows that $\u_1$ and $\u_2$ are eigenvectors of $\Vb\Vb^{\T}$ with eigenvalues $2$ and $0$, respectively, and hence $\u_1$ and $\u_2$ are eigenvectors of $\Zb$ with eigenvalues $2f(1)$ and $0$, respectively.

Let $\u$ be an eigenvector of $\M'$ corresponding to an eigenvalue $\mu\neq 1$.  Note that $\u$ is perpendicular to both $\u_1$ and $\u_2$. It follows   that $\Vb\Vb^{\T}\u=\u$, so that $\Zb\u=f(\mu)\u$. Thus, $\u$ is also an eigenvector of $\Zb$ with eigenvalue $f(\mu)$.
\end{proof}

To simplify the expressions,  set $r=(\Delta-1)/\Delta$. The matrix $\epsilon r\I_{2q}$ shifts the eigenvalues of $\Zb$ by $\epsilon r$. Thus, Lemma~\ref{lem:Meigenfin} yields
\begin{equation*}
\Det\big(\epsilon r\I_{2q}+\Zb\big)=\epsilon r\big(\epsilon r+2f(1)\big)\prod_{i\in[q-1]}\big(\epsilon r+f(\mu_i)\big)^2.
\end{equation*}
We have $f(1),f(\mu_i)\neq 0$ for every $i\in[q-1]$, so that
\begin{multline}\label{eq:puttogethera}
[\epsilon]\Det\big(\epsilon r\I_{2q}+\Zb\big)=2rf(1)\prod_{i\in[q-1]}\big(f(\mu_i)\big)^2\\=-\frac{4(\Delta-1)^{2q}}{\Delta(\Delta-2)} \prod_{i\in[q-1]}\left(\frac{1+\mu_i}{1-(\Delta-1)\mu_i}\right)^2.
\end{multline}
Plugging \eqref{eq:puttogetherb}  and \eqref{eq:puttogethera} in \eqref{eq:simpleMformone}, we obtain
\begin{equation*}
[\epsilon]\Det(\H_2')=(-1)^{q^2}\frac{4\Delta^{2q-1}}{(\Delta-1)^{q^2}}\prod_{i,j\in[q-1]}\big(1-(\Delta-1)\mu_i\mu_j\big)\prod_{i\in[q-1]}\big(1+\mu_i\big)^2.
\end{equation*}
Using this and \eqref{eq:secondWHmoment}, we obtain \eqref{eq:detH2} as wanted. We conclude by giving the deferred proof of Lemma~\ref{lem:inversesubs}.
\begin{proof}[Proof of Lemma~\ref{lem:inversesubs}]
For notational convenience, let $\Nb:=\M\otimes \M$. The lemma clearly reduces to proving $\Vb\,\Nb=\M'\,\Vb$. Let $\Db:=\Vb\,\Nb$, $\Eb:=\M'\,\Vb$. 

The matrices $\Db,\Eb$ clearly have the same dimensions, since $\Vb$ has dimensions $2q\times q^2$, $\Nb$ has dimensions $q^2\times q^2$ and $\M'$ has dimensions $2q\times 2q$. It remains to check that the entries of $\Db,\Eb$ are equal. First, we give explicit expressions for the entries of  $\Vb,\Nb$. We have
\begin{equation*}
V_{t,(i,k)}=\begin{cases}
\sqrt{\frac{\gamma_{ik}^{*}}{\alpha_i}}\mathbf{1}\{i=t\}, & 1\leq t \leq q\\
\sqrt{\frac{\gamma_{ik}^{*}}{\alpha_{k}}}\mathbf{1}\{k=t-q\}, & q+1\leq t \leq 2q
\end{cases},  \quad N_{(i,k),(j,l)}=\frac{x_{ij}^{*}x_{kl}^{*}}{\sqrt{\gamma_{ik}^{*}\gamma_{jl}^{*}}}.
\end{equation*}
We next consider the $\big(i,(j,l)\big)$ entries of the matrices $\Db,\Eb$. Assume first that $i\leq q$. We have
\begin{align*}
D_{i,(j,l)}&=\sum_{i',k}V_{i,(i',k)}N_{(i',k),(j,l)}=\sum_{i',k}\sqrt{\frac{\gamma_{i'k}^{*}}{\alpha_{i'}}}\mathbf{1}\{i'=i\}\frac{x_{i'j}^{*}x_{kl}^{*}}{\sqrt{\gamma_{i'k}^{*}\gamma_{jl}^{*}}}\\
&=\frac{x_{ij}^{*}}{\sqrt{\alpha_i}\sqrt{\gamma_{jl}^{*}}}\sum_{k}x_{kl}^{*}= \frac{\alpha_lx_{ij}^{*}}{\sqrt{\alpha_i}\sqrt{\gamma_{jl}^{*}}},\\
E_{i,(j,l)}&=\sum_{j'}M'_{i,j'}V_{j',(j,l)}=\sum_{j'}M_{i,j'}\sqrt{\frac{\gamma_{jl}^{*}}{\alpha_j}}\mathbf{1}\{j=j'\}=M_{i,j}\sqrt{\frac{\gamma_{jl}^{*}}{\alpha_j}}
\\&=\frac{x_{ij}^{*}\sqrt{\gamma_{jl}^{*}}}{\alpha_j\sqrt{\alpha_i}}=\frac{\alpha_lx_{ij}^{*}}{\sqrt{\alpha_i}\sqrt{\gamma_{jl}^{*}}}.
\end{align*}
An analogous calculation for $q<i\leq 2q$ yields that $D_{i,(j,l)}=E_{i,(j,l)}$ for every $i,j,l$. 
\end{proof}

\subsection{Bethe Prediction for General Models on Random Regular Graphs}
\label{sec:bethe-uniqueness}

In this section, we show how to extend Theorem~\ref{thm:prediction} for general models on random regular graphs as discussed in Section~\ref{sec:general-results}. A more general result has been derived in \cite[Theorem 1.16]{DMS} for sequences of graphs converging locally to (random) trees, under the assumption of uniqueness of the Gibbs measure on the underlying tree. For the special case of random $\Delta$-regular graphs, we show how to extend Theorem~\ref{thm:prediction} when there is a unique {\em semi-translation invariant}  Gibbs measure. Our proof has a different perspective and yields a slightly simpler condition for random $\Delta$-regular graphs.

Semi-translation invariant  Gibbs measures on $\TreeD$ are Gibbs measures that are invariant under any parity-preserving
automorphisms of $\TreeD$ (c.f., \cite{BW}). They can be specified by a pair of probability vectors $(\alphab,\betab)$
for the even and odd, respectively, vertices. Note that if there is a unique semi-translation invariant
measure, then this measure is also translation invariant. Hence, it corresponds to a fixpoint of the tree recursions \eqref{kkrtko}.


\begin{theorem}\label{thm:predictionb}
Let $\B$ be a regular matrix which specifies a model such that for all $\Delta$-regular graphs $Z_G>0$.
If there is a unique semi-translation invariant Gibbs measure on $\TreeD$ and the corresponding fixpoint is Jacobian attractive, then:
\begin{equation*}
\lim_{n\rightarrow\infty}\frac{1}{n}\E_{\Gc}[\log Z_G]=\lim_{n\rightarrow\infty}\frac{1}{n}\log\E_{\Gc}[ Z_G],
\end{equation*}
where $\Gc$ is the uniform distribution on $\Delta$-regular graphs with $n$ vertices.
\end{theorem}

The first assumption in the theorem is mainly to avoid pathological cases where $\log Z_G\equiv-\infty$
in which case the quantities are not well-defined. It is satisfied by many classes of models, e.g., permissive models (\cite{DMS}) such as the hard-core and antiferromagnetic Potts model, or even non-permissive such as $q$-colorings when $q\geq \Delta+1$.

The proof of  Theorem~\ref{thm:predictionb} is analogous to that of Theorem~\ref{thm:prediction}, once we establish the analog of Theorem~\ref{thm:copo} for general models.  As we illustrated in Remark~\ref{rem:anti-fails}, this is hopeless to achieve in general and we must thus use the uniqueness assumption that Theorem~\ref{thm:prediction} requires.  Note that if there is a unique semi-translation invariant
measure (which is the assumption in Theorem \ref{thm:prediction}) then this measure is also translation invariant.

\begin{proof}[Proof of Theorem~\ref{thm:predictionb}]
Let $\alphab^*$ be a dominant phase. By semi-translational uniqueness we have that $\alphab^*$ is unique. We next describe how to obtain the analog of Theorem~\ref{thm:copo} under the assumptions of Theorem~\ref{thm:predictionb}. Let $p=\Delta/(\Delta-1)$. We show that whenever there is  a unique  semi-translation Gibbs measure on $\Tree_\Delta$, it holds that $\exp(2\Psi_1(\alphab^*)/\Delta)= \|\B\|_{p\ra\Delta}$.

From \eqref{zzzok} and \eqref{pwwww1}, we obtain:
\begin{equation}\label{eq:mmmmtr}
\exp(2\Psi_1(\alphab^*)/\Delta)=\max_{\alphab}\exp(2\Psi_1(\alphab)/\Delta)= \max_{\bR}\frac{ \bR^{\T} \B \bR}{\|\bR\|_{p}^2}\leq \max_{\bR,\Cb}\frac{ \bR^{\T}\B \Cb}{\|\bR\|_{p}\|\Cb\|_{p}}.
\end{equation}
Note that the last inequality is trivial; we just enlarged the maximization region we consider. It is proved in \cite{GSV:colorings} that the maximum of the r.h.s. is achieved at a semi-translation invariant fixpoint. If there is a unique  semi-translation invariant Gibbs measure on $\Tree_\Delta$, this must be translation invariant and hence the maximum in the r.h.s. of \eqref{eq:mmmmtr} must occur at $\bR=\Cb$. We thus obtain  that \eqref{eq:mmmmtr} is satisfied at equality. The r.h.s. in \eqref{eq:mmmmtr} is equal to $\|\B\|_{p\ra\Delta}$ (\cite[Section 3.1]{GSV:colorings}), proving the desired claim.

By the same token, one has  the bound
\begin{equation*}
\exp(2\Psi_2(\alphab^*)/\Delta)=\max_{\alphab}\exp(2\Psi_2(\alphab)/\Delta)\leq \|\B\otimes \B\|_{p\ra\Delta},
\end{equation*}
and since $\|\B\otimes \B\|_{p\ra\Delta}=\|\B\|_{p\ra\Delta}^2$ (\cite[Proposition 10.3]{MR0493490}), we obtain that  $\Psi_2(\alphab^*)=2\Psi_1(\alphab^*)$, as wanted.

Since the dominant phase $\alphab^{*}$ corresponds to a Jacobian attractive fixpoint (by assumption), it is also Hessian dominant (see Remark~\ref{rema33}). With minor modifications (see footnotes~\ref{foot:invertible} and~\ref{foot:det}), the results of Section~\ref{sec:small-graph} can  be adapted to obtain a lower bound on $Z_G^{\alphab}$ as in  Lemma~\ref{lem:smallgraph}. Thus the proof of Theorem~\ref{thm:prediction} in Section~\ref{sec:small-graph-overview} extends to the present setting as well.
\end{proof}

\bibliographystyle{siamplain}

\appendix
\section{ Non-Reconstruction for the Ordered Phases on the Tree}\label{app:nonreconstruction}
In this appendix, we give in detail the proof of the doubly exponential upper bound in \eqref{eq:MST}. This appendix is organized as follows. In Appendix~\ref{app:broadcasting}, we review broadcasting processes on trees, the non-reconstruction property and a concentration result from \cite{SlyZhang}. In Appendix~\ref{app:fixpoints}, we review relevant connections between broadcasting processes and Gibbs measures defined by fixpoints of the tree recursions \eqref{kkrtko}, which will allow us to apply the result of \cite{SlyZhang}. Finally, in Appendix~\ref{app:finalproof}, we apply these results to the ferromagnetic Potts model and obtain the bound in \eqref{eq:MST}.

Let us fix some notation that will be used throughout this section. We will denote by $T=(V,E)$  the infinite $(\Delta-1)$-ary tree. The root of $T$ will be denoted by $\rho$. Also, for an integer $\ell\geq 0$, $T_\ell$ will denote the subtree of $T$ consisting of the first $\ell$ levels of $T$ and $W_\ell$ will denote the set of the leaves of $T_\ell$. Further, for a configuration $\sigma:V\rightarrow[q]$, we denote by $\sigma_{W_\ell}$ the restriction of $\sigma$ on $W_\ell$.

\subsection{Non-Reconstruction in Broadcasting Processes on Trees}\label{app:broadcasting}
Let $q\geq 2$ be an integer and $\M=(M_{ij})_{i,j\in[q]}$ be a $q\times q$ stochastic matrix (i.e., the entries are non-negative and the entries in each row have sum equal to 1). We will further assume that $\M$ is irreducible and aperiodic, so that there exists a unique $q$-dimensional probability vector $\pib^*=(\pi^*_i)_{i\in[q]}$ so that $\pib^* \M=\pib^*$. Note that the entries of $\pib^*$ are all positive.  We will refer to $\pib^*$ as the stationary distribution of $\M$. We will also assume that $\M$ is reversible with respect to $\pib^*$, i.e., $\pi^*_i M_{ij}=\pi^*_j M_{ji}$ for all $i,j\in[q]$ (every such matrix is similar to a symmetric matrix and thus has real eigenvalues).

Let $\pib=(\pi_i)_{i\in[q]}$ be a $q$-dimensional probability vector with positive entries (note that it may hold  that $\pib\neq \pib^*$). The broadcasting process $\M$ on the tree $T=(V,E)$ with root $\rho$ is a probability distribution $\nu$ on the set of assignments $\sigma:V\rightarrow [q]$ such that 
\begin{equation}\label{eq:broadcastingmeasure}
\nu(\sigma)=\pi_{\sigma(\rho)} \prod_{(u,v)\in E}M_{\sigma(u),\sigma(v)}.
\end{equation}
To generate $\sigma$ with distribution $\nu$, first pick randomly the spin of the root from the distribution $\pib$ and then broadcast the spin down the tree, where each edge of the tree acts as a noisy channel. In particular, for an edge $(u,v)$ of the tree where $u$ is the parent of $v$, conditioned on the spin $\sigma(u)$, the spin $\sigma(v)$ is picked randomly from the distribution $(M_{\sigma(u),1},\hdots,M_{\sigma(u),q})$.

We next define the non-reconstruction property, which roughly captures whether, as we go deeper into the tree, the information about the spin of the root vanishes.   (For distributions $\mu_1,\mu_2$ defined on the same space $\Omega$, we denote by $d_{TV}(\mu_1,\mu_2)$ the total variation distance between $\mu_1,\mu_2$.)
\begin{definition}[Non-Reconstruction]
A broadcasting process $\M$ has the non-reconstruction property on the tree $T$ if
\begin{equation}\label{eq:forget}
\lim_{\ell\rightarrow\infty}\max_{i,j\in[q]}d_{TV}\big(\nu(\sigma_{W_\ell}=\cdot\mid \sigma_\rho=i),\nu(\sigma_{W_\ell}=\cdot\mid \sigma_\rho=j)\big)=0.
\end{equation}
\end{definition}

Non-reconstruction is often closely connected to the second largest eigenvalue of $\M$. We will use the following concentration result of \cite{SlyZhang}, which can be interpreted as quantifying the rate of convergence to 0, when the second largest eigenvalue of $\M$ is small with respect to the branching factor of the tree. 

\begin{theorem}[{\cite[Theorem 2.3]{SlyZhang}}, see also \cite{MSTb, MST}]\label{thm:importtool}
Consider a broadcasting process $\M$ on the infinite $(\Delta-1)$-ary tree with no hard constraints (i.e., all entries of $\M$ are positive), whose spin at the root is chosen according to some distribution $\pib$ with positive entries. Let $\lambda$ be the second largest eigenvalue of $\M$ in absolute value. Then, if $\M$ has non-reconstruction and $(\Delta-1)\lambda^2< 1$, there exist constants  $C>0$ and $\ell_0\geq 1$ such that the following holds.

Let $\mathcal{B}_\ell:=\Big\{\eta: W_\ell\rightarrow [q]\,\Big|\, \big\|\nu(\sigma_\rho=\cdot\mid \sigma_{W_\ell}=\eta)-\pib\big\|_{\infty}\geq \exp(-C\ell)\Big\}$. Then, for all $\ell\geq \ell_0$,
\[\nu(\sigma_{W_\ell}\in \mathcal{B}_\ell)\leq\exp(-\exp(C\ell)).\]
\end{theorem}
We remark here that the restriction in Theorem~\ref{thm:importtool} that $\M$ has no hard constraints is not needed and, in fact, in \cite{SlyZhang}, the analogous statement is proved for general models $\M$  whose state space satisfies a general connectivity condition. Since we will only apply the result of \cite{SlyZhang} to the ferromagnetic Potts model (which has no hard constraints), such connectivity issues are not present in our setting and thus out of our scope. In particular, in the language/notation of \cite{SlyZhang}, all colors $c,c'$ will be trivially compatible in our setting and thus the measure $\mu^c(\cdot)$ in \cite[Theorem 2.3]{SlyZhang}, which conditions the broadcasting process in the space of configurations where the ``parent of the root" has color $c$, is identical to the unconditioned broadcasting process (denoted by $\nu$ in our setting).  Further, \cite[Theorem 2.3]{SlyZhang} is stated for the case where $\pib=\pib^*$, i.e., when the distribution of the spin of the root $\rho$ is chosen according to the stationary distribution of $\M$. We next display how to derive from this the slightly more general version stated in Theorem~\ref{thm:importtool}.

In particular, suppose that Theorem~\ref{thm:importtool} is true for some distribution $\pib$. Our goal is to show that it also holds for some other distribution $\pib'$ (we assume that both $\pib$ and $\pib'$ have positive entries) for some constants $C',\ell_0'>0$. We will denote by $\nu$ the broadcasting process when the initial distribution is $\pib$ and by $\nu'$ when the initial distribution is $\pib'$. We will also use $\mathcal{B}_\ell$ and $\mathcal{B}_\ell'$ to denote the set of ``bad" configurations on $W_\ell$ for the two processes $\nu,\nu'$, respectively (see Theorem~\ref{thm:importtool}).  Let $\eta\notin\mathcal{B}_\ell$ and, for $i\in[q]$, set $z_i(\eta):=\nu(\sigma_\rho=i\mid \sigma_{W_\ell}=\eta)/\nu(\sigma_\rho=i)$. Denote also by $z_i'(\eta)$ the respective quantity for the measure $\nu'$. Since both processes have the same broadcasting matrix, observe that for any colors $i,j\in [q]$ it holds that
\[\frac{z_i(\eta)}{z_j(\eta)}=\frac{\nu(\sigma_{W_\ell}=\eta\mid \sigma_\rho=i)}{\nu(\sigma_{W_\ell}=\eta\mid \sigma_\rho=j)}=\frac{\nu'(\sigma_{W_\ell}=\eta\mid \sigma_\rho=i)}{\nu'(\sigma_{W_\ell}=\eta\mid \sigma_\rho=j)}=\frac{z_i'(\eta)}{z_j'(\eta)}.\]
Since $\eta\notin \mathcal{B}_\ell$, the ratio  $z_i(\eta)/z_j(\eta)$ is bounded by $1\pm O(\exp(-C\ell))$ and thus the same is true for $z_i'(\eta)/z_j'(\eta)$. This gives that $\eta\notin \mathcal{B}_\ell'$ (for any constant $0<C'<C$ and sufficiently large $\ell'_0$), i.e., $\mathcal{B}_\ell'\subseteq \mathcal{B}_\ell$. To obtain that $\nu'(\sigma_{W_\ell}\in\mathcal{B}_\ell')\leq \exp(-\exp(C'\ell))$ for all sufficiently large $\ell$, observe that for any $\eta:W_\ell\rightarrow [q]$ and $i\in[q]$ it holds that $\nu(\sigma_{W_\ell}=\eta\mid \sigma_\rho=i)=\nu'(\sigma_{W_\ell}=\eta\mid \sigma_\rho=i)$, so that
\[\frac{\nu'(\sigma_{W_\ell}=\eta)}{\nu(\sigma_{W_\ell}=\eta)}=\frac{\sum_{i\in[q]}\pi_i'\,\nu'(\sigma_{W_\ell}=\eta\mid \sigma_\rho=i)}{\sum_{i\in[q]}\pi_i\nu(\sigma_{W_\ell}=\eta\mid \sigma_\rho=i)}\leq \max_{i\in[q]}\frac{\pi_i'}{\pi_i}.\]
Thus the desired bound on $\nu'(\sigma_{W_\ell}\in\mathcal{B}_\ell')$ follows from the bound on $\nu(\sigma_{W_\ell}\in\mathcal{B}_\ell)$.

\subsection{Broadcasting Processes and fixpoints of the tree recursions}\label{app:fixpoints}
In light of Theorem~\ref{thm:importtool}, our strategy for proving the bound in \eqref{eq:MST} will be to show that the measure $\nu^i$ (corresponding to the $i$-th ordered phase in the Potts model) corresponds to a broadcasting process on the $(\Delta-1)$-ary tree (and then simply verify the assumptions of the theorem). The purpose of this section is to make this correspondence explicit. In fact, we will workout the relevant connections for general spin models.

Let $\B$ be the interaction matrix of a $q$-spin system. As in Section~\ref{sec:bits}, we assume that $\B$ is symmetric, irreducible and aperiodic. For an integer $\Delta\geq 3$, recall that a fixpoint of the tree recursions is a vector $\Rb=(R_1,\hdots,R_q)$ with positive entries such that
\begin{equation*}\tag{\ref{kkrtko}}
R_i\propto \bigg(\sum_{j} B_{ij} R_j\bigg)^{\Delta-1} \mbox{ for all } i\in[q].
\end{equation*}
For the purpose of this section, we assume that the normalization in \eqref{kkrtko} is such that $\sum_{i}R_i=1$, i.e., $\Rb$ is a $q$-dimensional probability vector. 

We next define the broadcasting process corresponding to the fixpoint $\Rb$ by first specifying an appropriate broadcasting matrix. In particular, let $\M$ be the $q\times q$ matrix whose $(i,j)$-entry is given by
\begin{equation}\label{eq:entriesofM}
M_{ij}=\frac{B_{ij}R_j}{\sum_{j'}B_{ij'}R_{j'}} \mbox{ for } i,j\in[q].
\end{equation}
We remark here that the normalization of the $R_i$'s in \eqref{kkrtko} is not important for defining the matrix $\M$ (the entries remain unchanged if we scale the $R_i$'s); we normalize $\Rb$ to be a probability vector so that we can use it  as the initial distribution $\pib$ of the spin of the root in the broadcasting process. In particular, in the notation of Appendix~\ref{app:broadcasting}, we will set $\pib=\Rb$. This completes the specification of the broadcasting process (cf. \eqref{eq:broadcastingmeasure}). Note that $\M$ is stochastic, irreducible and aperiodic. Further, its stationary distribution $\pib^*$ is given by the probability vector whose entries satisfy  $\pi_i^*\propto R_i\sum_{j}B_{ij}R_j$ for all $i\in[q]$. Finally, we have that $\M$ is reversible with respect to $\pib^*$.

In the rest of this section, we state several results that eventually will allow us to apply Theorem~\ref{thm:importtool}. First, we connect the spectral properties of $\M$ with the attractiveness of the fixpoint $\Rb$ of the tree recursions (see Section~\ref{sec:conn2tree} for the relevant definitions).
\begin{lemma}\label{lem:broadcasteigen}
Let $\Rb$ be a Jacobian attractive fixpoint of the tree recursions and let $\M$ be the broadcasting matrix corresponding to $\Rb$. Let $\lambda$ be the second largest eigenvalue of $\M$ in absolute value. Then $(\Delta-1)\lambda<1$.
\end{lemma}
\begin{proof}
Recall from Section~\ref{sec:connection} (see also the beginning of Section~\ref{sec:Potts}) that $\Rb$ is a Jacobian attractive fixpoint of the tree recursions if every eigenvalue $x\neq 1$ of the matrix
\begin{equation*}
\widetilde{\M}=\left\{\frac{B_{ij}R_iR_j}{\sqrt{\alpha_i\alpha_j}}\right\}_{i,j=1}^q \mbox{ with } \alpha_i=R_i\sum_{j}B_{ij}R_j \mbox{ for $i\in[q]$}
\end{equation*}
satisfies $(\Delta-1)|x|<1$. The result will thus follow by showing that the eigenvalues of $\M$ are identical to those of $\widetilde{\M}$. 

We will show that $\M$ and $\widetilde{\M}$ are similar matrices, thus showing the result. Let $\A$ be the diagonal matrix whose $i$-th diagonal entry is given by $\sqrt{\alpha_i}$. Note that $\A$ is invertible (since the $R_i$'s are positive). By a direct calculation, it also holds that ${\A} \M \A^{-1}=\widetilde{\M}$, thus proving that $\M$ and $\widetilde{\M}$ are similar. This concludes the proof.
\end{proof}

We now focus on connecting the Gibbs distribution of the spin model with interaction matrix $\B$ and the broadcasting process $\M$. As before, let $T$  be the infinite $(\Delta-1)$-ary tree with root $\rho$ and denote by $T_\ell$ the subtree of $T$ consisting of the first $\ell$ levels of $T$ and by $W_\ell$ the set of leaves of $T_{\ell}$. We will denote by $\mu_\ell$ the Gibbs distribution on $T_\ell$ corresponding to the spin system with interaction matrix $\B$. We will use $\sigma$ to denote configurations on $T_\ell$ and by $\sigma_{W_\ell}$ the restriction of $\sigma$ to the leaves $W_\ell$. 

To connect $\mu_\ell$ to the broadcasting process $\M$ on $T$, we will need just a few more definitions. Let $Q_{W_\ell}(\cdot)$ be the following product distribution on configurations on the leaves $W_\ell$. For a configuration $\eta:W_{\ell}\rightarrow[q]$, 
\begin{equation}
Q_{W_\ell}(\eta):=\prod_{i\in [q]}(R_i)^{|\eta^{-1}(i)\cap W_{\ell}|}.
\end{equation}
Finally, consider the following distribution $\widehat{\nu}_\ell$, which is also defined on configurations on the leaves $W_\ell$, given by
\begin{equation}\label{eq:conditionedmeasures}
\widehat{\nu}_{\ell}(\eta)\propto \mu_\ell(\sigma_{W_\ell}=\eta)\, Q_{W_\ell}(\eta)\mbox{ for all }\eta:W_{\ell}\rightarrow[q].
\end{equation}
It is instructive at this point to spell out the interplay of these definitions with the bound in \eqref{eq:MST}. Namely, the product distribution $Q_{W_\ell}(\eta)$ is the generalization of the product distribution $Q^i_W(\eta)$ (defined just after \eqref{def:proddistribution}) and $\widehat{\nu}_{\ell}(\cdot)$ is the generalization of the distribution $\nu^i(\cdot)$ (defined in \eqref{eq:nuidef}). 

We are now ready to state the desired connection.
\begin{lemma}\label{lem:Markovconstruction}
Let $\B,\Rb,\M$ be as above. Let $\nu$ denote the broadcasting measure $\M$ on $T$ (defined in \eqref{eq:broadcastingmeasure}) and, for integer $\ell\geq 0$, let $\mu_\ell$ be the Gibbs distribution on $T_\ell$ corresponding to the spin system with interaction matrix $\B$, and $\widehat{\nu}_\ell(\cdot)$ be the distribution in \eqref{eq:conditionedmeasures} corresponding to the fixpoint $\Rb$ of the tree recursions. 

Then, for all $\ell\geq 0$, for all $\eta:W_\ell\rightarrow [q]$ and $i\in[q]$, it holds that 
\[\widehat{\nu}_\ell(\eta)=\nu(\sigma_{W_\ell}=\eta) \mbox{ and }  \mu_\ell(\sigma_\rho=i\mid \sigma_{W_\ell}=\eta)=\nu(\sigma_\rho=i\mid \sigma_{W_\ell}=\eta).\]
\end{lemma}
\begin{proof}
Let $d:=\Delta-1$. The proof is by induction on $\ell$. For $\ell=0$, the lemma is trivial. Let us assume that the lemma holds for $\ell$, we will prove it for $\ell+1$. For a vertex $v\in W_{\ell}$ denote by $v_1,\hdots,v_d$ the children of $v$ in $T$ (note that $v_1,\hdots,v_d\in W_{\ell+1}$). 

We prove first that $\widehat{\nu}_{\ell+1}(\eta)=\nu(\sigma_{W_{\ell+1}}=\eta)$ for all $\eta:W_{\ell+1}\rightarrow [q]$. By definition of the broadcasting process, conditioned on the configuration $\tau$ on $W_\ell$, the spins of the vertices in $W_{\ell+1}$ are independent. We thus have that, for all $\eta:W_{\ell+1}\rightarrow [q]$,
\[\nu(\sigma_{W_{\ell+1}}=\eta)=\sum_{\tau:W_{\ell}\rightarrow[q]}\nu(\sigma_{W_\ell}=\tau)\prod_{v\in W_{\ell}}\prod^{d}_{j=1}M_{\tau_v,\eta_{v_j}}.\]
Using the induction hypothesis we have that $\nu(\sigma_{W_\ell}=\tau)=\widehat{\nu}_\ell(\tau)\propto \mu_\ell(\sigma_{W_\ell}=\tau)Q_\ell(\tau)$ for all $\tau: W_{\ell}\rightarrow[q]$ and, substituting the value of the product measure $Q_{W_\ell}(\tau)$, we obtain
\begin{equation}\label{eq:qry3}
\nu(\sigma_{W_{\ell+1}}=\eta)\propto \sum_{\tau:W_{\ell}\rightarrow[q]}\mu_\ell(\sigma_{W_\ell}=\tau)\prod_{v\in W_{\ell}}R_{\tau_v}\prod^{d}_{j=1}M_{\tau_v,\eta_{v_j}}\mbox{ for all $\eta:W_{\ell+1}\rightarrow [q]$.}
\end{equation}
We also have that $\mu_{\ell+1}(\sigma_{W_{\ell+1}}=\eta)\propto \sum_{\tau:W_{\ell}\rightarrow[q]}\mu_\ell(\sigma_{W_\ell}=\tau)\prod_{v\in W_{\ell}}\prod^{d}_{j=1}B_{\tau_v,\eta_{v_j}}$ for all $\eta:W_{\ell+1}\rightarrow [q]$, so substituting the value of the product measure $Q_{W_{\ell+1}}(\eta)$ we obtain
\begin{equation}\label{eq:qry4}
\widehat{\nu}_{\ell+1}(\eta)\propto \sum_{\tau:W_{\ell}\rightarrow[q]}\mu_\ell(\sigma_{W_\ell}=\tau)\prod_{v\in W_{\ell}}\prod^{d}_{j=1}R_{\eta_{v_j}} B_{\tau_v,\eta_{v_j}}\mbox{ for all $\eta:W_{\ell+1}\rightarrow [q]$.}
\end{equation}
To complete the induction step, it thus remains to show that the r.h.s. in \eqref{eq:qry3} and \eqref{eq:qry4} are proportional by a factor that does not depend on $\eta$. This will follow from 
\begin{equation}\label{eq:proppropprop}
R_{i}M_{i,j_1}\cdots M_{i,j_d}\propto (R_{j_1} B_{i,j_1})\cdots (R_{j_d}B_{i,j_d}) \mbox{ for all } i,j_1,\hdots,j_d\in [q].
\end{equation}
(Set $i=\tau_v$, $j_1=\eta_{v_1},\hdots,j_d=\eta_{v_d}$ and multiply over $v\in W_{\ell}$.) To see \eqref{eq:proppropprop}, note that by \eqref{eq:entriesofM}, we have that, for every $i,j\in [q]$, 
\begin{equation*}\tag{\ref{eq:entriesofM}}
M_{ij}=\frac{B_{ij}R_j}{\sum_{j'}B_{ij'}R_{j'}},
\end{equation*}
so, to prove \eqref{eq:proppropprop}, it suffices to show that the quantities $\frac{R_i}{\big(\sum_{j'\in[q]}B_{ij'}R_{j'}\big)^{d}}$ do not depend on $i\in[q]$. This is a consequence of the fact that $\Rb$ is a fixpoint of the tree recursions, i.e., $R_1,\hdots, R_q$ satisfy \eqref{kkrtko}. This completes the induction step for the first equality in the lemma.

We next show the induction step for the second equality in the lemma. Fix $\eta:W_{\ell+1}\rightarrow [q]$. Denote by $\rho_1,\hdots,\rho_d$ the children of the root $\rho$. For $k\in[d]$, denote by $W_{\ell+1,k}$ the set of vertices in $W_{\ell+1}$ which are in the subtree of $T$ rooted at $\rho_k$ and by $\eta_{k}$ the restriction of $\eta$ on $W_{\ell+1,k}$. By the induction hypothesis, we have, for every $k\in[d]$ and $i\in[q]$,  
\[X_{k}(i):=\mu_{\ell}(\sigma_{\rho_k}=i\mid \sigma_{W_{\ell+1,k}}=\eta_k)=\nu(\sigma_{\rho}=i\mid \sigma_{W_{\ell}}=\eta_k).\]
Note that $\nu(\sigma_{W_{\ell+1,k}}=\eta_k\mid \sigma_{\rho_k}=i)=\nu(\sigma_{W_{\ell}}=\eta_k\mid \sigma_{\rho}=i)$ from where we obtain that 
\[\frac{\nu(\sigma_{\rho_k}=i\mid \sigma_{W_{\ell+1,k}}=\eta_k)}{\nu(\sigma_{\rho_k}=i)}\propto\frac{\nu(\sigma_{\rho}=i\mid \sigma_{W_{\ell}}=\eta_k)}{\nu(\sigma_{\rho}=i)}=\frac{X_k(i)}{R_i} \mbox{ for all } i\in[q].\]
(Note that the normalizing factor depends on $\eta_k$.)

Using that $T_{\ell+1}$ is a tree, we then calculate that 
\begin{equation}\label{eq:qry1}
\mu_{\ell+1}(\sigma_\rho=i\mid \sigma_{W_{\ell+1}}=\eta)\propto\prod_{k\in [d]}\bigg(\sum_{j\in[q]} B_{ij}X_{k}(j)\bigg) \mbox{ for } i\in[q],
\end{equation}
and (see \cite[Lemma 3.1]{SlyZhang} for a thorough derivation)
\begin{equation}\label{eq:qry2}
\nu(\sigma_\rho=i\mid \sigma_{W_{\ell+1}}=\eta)\propto R_i\prod_{k\in [d]}\bigg(\sum_{j\in[q]} M_{ij}\frac{X_{k}(j)}{R_j}\bigg) \mbox{ for } i\in[q].
\end{equation}
By a completely analogous argument to the one we used for \eqref{eq:qry3} and \eqref{eq:qry4} (i.e., using \eqref{eq:entriesofM} and the fact that  $\Rb$ is a fixpoint of the tree recursions \eqref{kkrtko}),  we obtain that the r.h.s. in \eqref{eq:qry1} and \eqref{eq:qry2} are proportional by a factor that does not depend on $i$, thus completing the induction step for the second equality in the lemma.

This concludes the proof of Lemma~\ref{lem:Markovconstruction}.
\end{proof}

\subsection{Application to the Ferromagnetic Potts model -- Proof of \eqref{eq:MST}}\label{app:finalproof}
We are now able to apply the results of Appendices~\ref{app:broadcasting} and~\ref{app:fixpoints} to the ferromagnetic Potts model and prove the bound \eqref{eq:MST} for the ordered phases on the tree.

Recall that the interaction matrix $\B$ of the $q$-state ferromagnetic Potts model has diagonal entries equal to $B>1$ and off-diagonal entries equal to 1. An ordered phase corresponds to a fixpoint $\Rb=(R_1,\hdots,R_q)$ of the tree recursions. Thus, the $R_i$'s satisfy
\begin{equation}\label{eq:potts4rfv}
R_i\propto \bigg(B_{ii}R_i+\sum_{j\neq i} R_j\bigg)^{\Delta-1} \mbox{ for all } i\in[q].
\end{equation}
Recall that there are $q$ ordered phases which are symmetric, each corresponding to a color $i\in[q]$. W.l.o.g., we will focus on the ordered phase corresponding to the color $i=1$. As we showed in Section~\ref{sec:Potts}, the solution of \eqref{eq:potts4rfv} corresponding to the ordered phase $i=1$ is given by the vector $\Rb$ which satisfies \eqref{eq:potts4rfv}, $R_1>R_2=\hdots=R_q$ and  $R_1/R_q$ is maximum (see Remark~\ref{rem:majorityordered}). Such a solution exists in the non-uniqueness region, i.e., when $B>\Bu$.  The broadcasting matrix $\M$ corresponding to the ordered phase $i=1$ is given by \eqref{eq:entriesofM}:
\begin{equation}\label{eq:numericspotts}
\begin{aligned}
M_{11}&=\frac{B R_{1}}{BR_1+(q-1)R_q},&& M_{1j}=\frac{R_q}{BR_1+(q-1)R_q}\mbox{ for $j\neq 1$}\\
M_{ii}&=\frac{B R_{q}}{R_1+(q-2+B)R_q} \mbox{ for $i\neq 1$},&& M_{ij}=\frac{ R_{q}}{R_1+(q-2+B)R_q} \mbox{ for $i\neq 1,\,i\neq j$}.
\end{aligned}
\end{equation}
We need the following lemma, which can be inferred from \cite[Proof of Theorem 1.4]{MST}. For completeness, we give  the proof. 
\begin{lemma}\label{lem:pottsnon345}
Let $\Delta\geq 3$ be an integer and $B>\Bu$. Then, the broadcasting process $\M$ defined by \eqref{eq:numericspotts} is non-reconstructible on the $(\Delta-1)$-ary tree.
\end{lemma}
\begin{proof}
Let $i,j\in[q]$ be two arbitrary colors with $i\neq j$ and consider two copies $X,Y$ of the broadcasting process on the $(\Delta-1)$-ary tree where the spins of the root $\rho$ are conditioned to be $i$ and $j$ respectively. To show that the total variation distance between the distributions $\nu(\sigma_{W_{\ell}}=\cdot\mid \sigma_\rho=i)$ and $\nu(\sigma_{W_{\ell}}=\cdot\mid \sigma_\rho=j)$ goes to 0 as $\ell\rightarrow \infty$, it suffices to couple $X,Y$ so that the expected number of disagreements, i.e., vertices in $W_\ell$ whose spins are different, goes to 0 as $\ell\rightarrow \infty$. In turn, it suffices to couple one step of the broadcasting process so that the expected number of disagreements is bounded by some constant $\kappa<1/(\Delta-1)$, since this yields  that the expected number of disagreements at level $\ell$ decays exponentially with $\ell$, at least as fast as $((\Delta-1)\kappa)^\ell$. 

In particular, let $(u,v)$ be an arbitrary edge in the tree, with $u$ being the parent of $v$. By the Coupling Lemma, conditioned on the spin of $u$ in $X$ and $Y$, we can couple the spins of $v$ in $X$ and $Y$ so that the probability that they are different is bounded by $\kappa$, where  
\[\kappa:=\max_{i,j\in[q]}d_{TV}\big(\nu(\sigma_{v}=\cdot\mid \sigma_u=i),\nu(\sigma_{v}=\cdot\mid \sigma_u=j)\big)=\max_{i,j}\frac{1}{2}\sum_{k\in [q]}|M_{ik}-M_{jk}|.\]
In the following, we justify that $\kappa<1/(\Delta-1)$. We will see that, in the case of the ferromagnetic Potts model, $\kappa$ is related to the  eigenvalues of the Jacobian matrix of the tree recursions (evaluated at the fixpoint), which we have already studied in Section~\ref{sec:Potts}. First, we find a simpler expression for $\kappa$. Consider  colors $i,j\neq 1$. Then
\begin{equation*}
\frac{1}{2}\sum_{k\in [q]}|M_{ik}-M_{jk}|=M_{ii}-M_{ij}=\frac{(B-1) R_{q}}{R_1+(q-2+B)R_q}=\lambda_1,
\end{equation*}
where $\lambda_1$ is as in \eqref{eq:lambda1lambda2}. Consider now the case that $i=1$ and $j\neq 1$. Using that $B>1$ and $R_1>R_q$, we have $M_{11}>M_{j1}$ and $M_{1k}<M_{jk}$ for $k\neq 1$. It follows that 
\begin{equation*}
\frac{1}{2}\sum_{k\in [q]}|M_{1k}-M_{jk}|=M_{11}-M_{j1}=\frac{B R_{1}}{BR_1+(q-1)R_q}-\frac{R_q}{R_1+(q-2+B)R_q}=\lambda_2,
\end{equation*}
where $\lambda_2$ is as in \eqref{eq:lambda1lambda2}. In the proof of Lemma~\ref{lem:exactone}, we showed that $\lambda_1,\lambda_2<1/(\Delta-1)$ for all $B>\Bu$, which shows that $\kappa<1/(\Delta-1)$, thus completing the proof of the lemma.
\end{proof}

We conclude this appendix by giving the proof of \eqref{eq:MST}.
\begin{proof}[Proof of \eqref{eq:MST}] Recall that we only need to consider the case where $J$ consists of a single $(\Delta-1)$-ary tree of height $\ell=\left\lfloor \psi \log_{\Delta-1} n\right\rfloor$. Using the second equality in Lemma~\ref{lem:Markovconstruction}, we have that the set of ``bad" configurations $\mathcal{B}_i$ defined in \eqref{eq:realbi} is a subset of the set $\mathcal{B}_\ell$ defined in Theorem~\ref{thm:importtool} (for all $0<\theta<\frac{C \psi}{3\ln (\Delta-1)}$, it holds that $n^{-3\theta}>\exp(-C\ell)$). Further, using Lemmas~\ref{lem:broadcasteigen} and~\ref{lem:pottsnon345}, the assumptions of Theorem~\ref{thm:importtool} are all satisfied for the broadcasting process $\M$ defined by \eqref{eq:numericspotts}. As we observed just after \eqref{eq:conditionedmeasures}, $\widehat{\nu}_{\ell}(\cdot)$ is identical to the distribution $\nu^i(\cdot)$ (defined in \eqref{eq:nuidef}). Thus, the conclusion of Theorem~\ref{thm:importtool} and the first equality in Lemma~\ref{lem:Markovconstruction} yield
\begin{equation*}\tag{\ref{eq:MST}}
\nu^i(\sigma_W\in\mathcal{B}_i)\leq \exp(-\exp(C\ell)),
\end{equation*}
as wanted.
\end{proof}

\section{Moment Asymptotics --- Proof of Lemmas~\ref{lem:firasympdet} and~\ref{lem:secasympdet}}\label{app:A}
\label{sec:momentasymptotics}
In this appendix, we give the proofs of Lemmas~\ref{lem:firasympdet} and~\ref{lem:secasympdet} which express the asymptotics of the moments in terms of certain determinants. The proofs of these lemmas are similar and closely follow \cite[Lemma B.3]{GSV:colorings}.

\begin{proof}[Proof of Lemma~\ref{lem:firasympdet}]
We have already seen that the ferromagnetism of the model implies a full-dimensional representation of $\x$ which consists of the variables $x_{ij}$ with $(i,j)\in P_1^{*}$ where $P_1^*=P_1\backslash \big\{(i,i)\mid i\in[q]\big\}$. For $(i,j)\notin P_1^{*}$, we will still use $x_{ij}$ as a shorthand for the appropriate linear combination of the variables inside the full-dimensional representation.

Recall that
\begin{multline}\tag{\ref{eq:firstmoment}}
\E_{\G}[Z^{\alphab}_G]=\binom{n}{\alpha_{1}n,\hdots,\alpha_{q}n}\sum_{\x}\left\{\prod_{i}\binom{\Delta\alpha_{i}n}{\Delta x_{i1}n,\hdots,\Delta x_{iq}n}\right.\\ \left.\times\frac{\big[\prod_{i\neq j}(\Delta x_{ij}n)!\big]^{1/2}\prod_{i}(\Delta x_{ii}n-1)!!}{(\Delta n-1)!!}\prod_{i,j}B^{\Delta x_{ij}n/2}_{ij}\right\},
\end{multline}
Note, the range of $\x$ in the summation is over those vectors $\x$ such that, for each $i\in [q]$, $\Delta x_{ii}n=2e_{ii}n$ is an even integer, which yields the constraint that $\sum_{j\neq i}\Delta x_{ij}n \equiv \Delta \alpha_i n (\mathrm{mod}\, 2)$. Since $\Delta n$ is even and $\sum_i\alpha_i=1$, only $q-1$ of these constraints are linearly independent $(\mathrm{mod}\, 2)$.

Since $\alphab$ is a dominant phase, we have $\alpha_{i}>0$ for all $i\in[q]$ (from the last part of Lemma~\ref{new:zako1}). Also, for the maximizer $\x^*$ of $g_1(\x)$, it holds that $x^*_{ij}>0$ for all $(i,j)\in P_1$ (see \eqref{eq:optimalxs}). Pick $\delta$ sufficiently small such that:
\begin{equation*}
\norm{\X-\X^*}_2\leq \delta\text{ implies } x_{ij}>0\text{ for all }(i,j)\in P_1.
\end{equation*}
Since $g_1(\x)$ has the unique global maximum $\X^*$, standard compactness arguments imply that there exists $\epsilon(\delta)>0$ such that $\norm{\X-\X^*}_2\nobreak\geq \delta$ implies $g(\X^*)-g(\X)\geq\epsilon$. It follows that the contribution of terms with $\norm{\x-\X^*}_2\geq \delta$ to $\E_{\G}[Z^{\alphab}_G]$ is exponentially small and may be ignored. Hence we may restrict our attention to $\X$ satisfying $\norm{\X-\X^*}_2< \delta$. 

We now approximate the terms in \eqref{eq:firstmoment} with $\norm{\X-\X^*}_2< \delta$ using Stirling's approximation. The only difference with the asymptotics in Section~\ref{sec:defer-prelim} is that now the relative error of the approximation will be asymptotically $O\big(n^{-1}\big)$.  In particular, we will use the following asymptotics for factorials (which are a refinement of \eqref{eq:simpleasymp}). For any constant $c>0$, it holds that
\begin{equation}\label{eq:simpleasymp22}
\begin{aligned}
(cn)!&=\Big(1+O\big(n^{-1}\big)\Big) \sqrt{2\pi c n}\exp(cn\ln n+cn\ln c-cn),\\
(cn-1)!!&=\Big(1+O\big(n^{-1}\big)\Big) \sqrt{2}\exp\big(\frac{cn}{2}\ln n+\frac{cn}{2}\ln c-\frac{cn}{2}\big).
\end{aligned}
\end{equation}
Using these asymptotics to expand the terms in \eqref{eq:firstmoment} (together with $\sum_{j}x_{ij}=\alpha_i$ and $\sum_{i}\alpha_i=1$), we  obtain

\begin{multline}\label{eq:mgtyuib123}
\frac{(2\pi n)^{(q-1)/2}\E_{\G}[Z^{\alphab}_G]}{e^{n\Upsilon_1(\alphab,\X^{*})}}=
\Big(1+O\big(n^{-1}\big)\Big)2^{(q-1)/2}\\\times\sum_\X\big(\sqrt{2\pi\Delta n}\big)^{q}
\Big(\prod_{(i,j) \in P_1}  \frac{1}{\sqrt{2\pi\Delta n x_{ij}}}     \Big)
e^{n   \Delta \big(    g_1(\X)-g_1(\X^*)     \big)}.
\end{multline}
In the l.h.s., the factor $(2\pi n)^{(q-1)/2}$ comes from the expansion of $\binom{n}{\alpha_{1}n,\hdots,\alpha_{q}n}$; in the r.h.s., the factor $\big(\sqrt{2\pi\Delta n}\big)^{q}$ comes from the expansion of $(\Delta \alpha_i n)!$ for $i\in[q]$, the factor $1/\sqrt{2\pi\Delta n x_{ij}}$ comes from the expansion of $(\Delta x_{ij} n)!$ for $(i,j)\in P_1$ and the factor $2^{(q-1)/2}$ comes from the expansion of $(\Delta n-1)!!$ and $(\Delta x_{ii}n-1)!!$ for $i\in [q]$. 

We are now ready to compute
\[L:=\lim_{n\rightarrow\infty}
\frac{(2\pi n)^{(q-1)/2}\E_{\G}[Z^{\alphab}_G]}{e^{n\Upsilon_1(\alphab,\X^{*})}}.
\]
Since $\x^*$ is a critical point of $g_1(\x)$, for all sufficiently small $\delta>0$, we have the expansion
\[g_1(\X)-g_1(\X^*)=\frac{1}{2}(\x-\x^*)^{\T}\mathbf{H}(\x-\x^*)+O(\delta^{3}),\]
where $\mathbf{H}=\mathbf{H}^{f}_{1,\x}$ is the Hessian of the full-dimensional representation of $g_1(\x)$ evaluated at $\x=\x^{*}$ (the matrix $\mathbf{H}$ has dimension $(|P_1|-q)\times(|P_1|-q)$). Note that $g_1$ is a strictly concave function, so $\mathbf{H}$ is negative definite. Using standard techniques of rewriting sums as integrals and the dominated convergence theorem (see \cite[Section 9.4]{JLR}), we obtain
\begin{multline}\label{eq:valueofintegral}
\lim_{n\rightarrow \infty}\sum_\X\big(\sqrt{2\pi\Delta n}\big)^{q}
\Big(\prod_{(i,j) \in P_1}  \frac{1}{\sqrt{2\pi\Delta n x_{ij}}}     \Big)
e^{n   \Delta \big(    g_1(\X)-g_1(\X^*)     \big)}\\=  \frac{\big( \sqrt{\Delta}     \big)^{|P_1|-q}}{2^{q-1}\big(\prod_{(i,j)\in P_1}x^*_{ij}\big)^{1/2} }
\bigg(\frac{1}{(\sqrt{2\pi})^{|P_1|-q}}\int^{\infty}_{-\infty}   \cdots \int^{\infty}_{-\infty}
\exp\Big(\frac{\Delta}{2}   \X^{\T}\,   \mathbf{H} \X \Big)
d\X\bigg).
\end{multline}
Note, in the r.h.s. of \eqref{eq:valueofintegral}, the factor $1/2^{q-1}$ comes from the $q-1$ constraints $(\mathrm{mod}\, 2)$  restricting the range of $\x$ (discussed just after the expression \eqref{eq:firstmoment} for $\E_{\G}[Z^{\alphab}_G]$ in the beginning of the proof).

Plugging in \eqref{eq:valueofintegral} the value of the Gaussian integral and substituting back in \eqref{eq:mgtyuib123} yields that
\begin{equation*}
\begin{aligned}
L&=\Big(   2^{q-1}  \prod_{(i,j)\in P_1}x^*_{ij}       \Big)^{-1/2}
\big(   \sqrt{\Delta}     \big)^{|P_1|-q}
\mathrm{Det}(-\Delta \H)^{-1/2}\\
&=\Big(   2^{q-1}  \prod_{(i,j)\in P_1}x^*_{ij}       \Big)^{-1/2}\mathrm{Det}(- \H)^{-1/2},
\end{aligned}
\end{equation*}
as wanted. This concludes the proof of Lemma~\ref{lem:firasympdet}.
\end{proof}

\begin{proof}[Proof of Lemma~\ref{lem:secasympdet}]
We proceed analogously to the proof of Lemma~\ref{lem:firasympdet}. In particular, we will assume a full-dimensional representation of $(\gammab,\Y)$ (see the beginning of Section~\ref{sec:momenttwo} for details on the choice of the representation). Note that $\gammab,\Y$ have $(q-1)^2$ and $|P_2|-q^2$ variables, respectively.

Recall that 
\begin{multline}\tag{\ref{eq:secondmoment}}
\E_{\G}[(Z^{\alphab}_G)^2] = \sum_{\gammab}\binom{n}{\gamma_{11}n,\hdots,\gamma_{qq}n}\sum_{\y}\left\{\prod_{i,k}\binom{\Delta\gamma_{ik}n}{\Delta y_{ik11}n,\hdots,\Delta y_{ikqq}n}\right.\\
\left.\times\frac{\big[\prod_{(i, k)\neq (j,l)}(\Delta y_{ikjl}n)!\big]^{1/2}\prod_{i,k}(\Delta y_{ikik}n-1)!!}{(\Delta n-1)!!}\prod_{i,j,k,l}\big(B_{ij}B_{kl}\big)^{\Delta y_{ikjl}n/2}\right\},
\end{multline}
 We have that $\gamma^*_{ik}>0$ for all $i,k\in [q]$ and $y^*_{ikjl}>0$ for $(i,k,j,l)\in P_2$ (see the paragraph following Remark~\ref{rem:fullrowrank}). Pick $\delta$ sufficiently small such that:
\begin{equation*}
\norm{(\gammab,\Y)-(\gammab^*,\Y^*)}_2\leq \delta\text{ implies }\gamma_{ik}>0\text{ for }i,k\in[q] \text{ \& }y_{ikjl}>0\text{ for }(i,k,j,l)\in P_2.
\end{equation*}
Using that $\Upsilon_2(\gammab,\Y)$ is maximized (uniquely) at $(\gammab,\Y)=(\gammab^*,\Y^*)$, the same line of arguments as in the proof of Lemma~\ref{lem:firasympdet} yield
\begin{multline*}
\frac{\E_{\G}[(Z^{\alphab}_G)^2]}{e^{n\Upsilon_2(\gammab^*,\Y^{*})}}
=\Big(1+O\big(n^{-1}\big)\Big)
\sum_{\gammab,\Y}\left\{
(\sqrt{2\pi n})2^{(q^2-1)/2} \Delta^{q^2/2}\vphantom{.\Big(   \prod_{(i,k,j,l) \in P_2}  \frac{1}{\sqrt{2\pi \Delta y_{ikjl}n}}     \Big)}\right.\\\left.\Big(   \prod_{(i,k,j,l) \in P_2}  \frac{1}{\sqrt{2\pi \Delta y_{ikjl}n}}     \Big)
e^{n    \big(    \Upsilon_2(\gammab,\Y)-\Upsilon_2(\gammab^*,\Y^*)     \big)}\right\}.
\end{multline*}
In the r.h.s., the factor $\sqrt{2\pi n}$ comes from the expansion of $n!$, the factor $2^{(q^2-1)/2}$ comes from the expansion of $(\Delta n-1)!!$ and $(\Delta y_{ikik}n-1)!!$ for $i,k\in [q]$, the factor $\Delta^{q^2/2}$ from the expansion of $(\Delta \gamma_{ik}n)!$ for $i,k\in [q]$ and the factor $1/\sqrt{2\pi \Delta y_{ikjl}n}$ from the expansion of $(\Delta y_{ikjl}n)!$ for $(i,k,j,l)\in P_2$.  

We now compute
\[L=\lim\limits_{n\rightarrow\infty}
\frac{(2\pi n)^{(q-1)}\E_{\G}[(Z^{\alphab}_G)^2]}{e^{n\Upsilon_2(\gammab^*,\Y^{*})}}.
\]
Since $\gammab^*,\Y^*$ is a critical point of $\Upsilon_2(\gammab,\Y)$, for all sufficiently small $\delta>0$, we have the expansion
\[\Upsilon_2(\gammab,\Y)-\Upsilon_2(\gammab^*,\Y^*)=\frac{\Delta}{2}\big([\gammab,\Y]-[\gammab^*,\Y^*]\big)^{\T}\mathbf{H}\big([\gammab,\Y]-[\gammab^*,\Y^*]\big)+O(\delta^{3}),\]
where $\H=\H_2^{f}$ is the Hessian matrix of $\Upsilon_2$ evaluated at $(\gammab^*,\Y^*)$ scaled by 1/$\Delta$. Now, we may proceed analogously to the proof of Lemma~\ref{lem:firasympdet} and obtain
\begin{align*}
L&=\Big(    2^{q^2-1} \prod_{(i,k,j,l)\in P_2}y^*_{ikjl}       \Big)^{-1/2}
(\sqrt{\Delta})^{|P_2|-q^2}\\
&\qquad \qquad \qquad \quad
\Big(   \frac{1}{\sqrt{2\pi}}     \Big)^{|P_2|-2q+1}\int^{\infty}_{-\infty}   \cdots \int^{\infty}_{-\infty}
\exp\Big(   \frac{\Delta}{2}   [\gammab,\Y]^{\T}  \mathbf{H}[\gammab,\Y] \Big)
d\Y
d\gammab,\\
&=\Big(    2^{q^2-1} \prod_{(i,k,j,l)\in P_2}y^*_{ikjl}       \Big)^{-1/2}\Delta^{-(q-1)^{2}/2}\,\mathrm{Det}(-\H)^{-1/2},
\end{align*}
where in the last equality we substituted the value of the Gaussian integral. 

This concludes the proof of Lemma~\ref{lem:secasympdet}.
\end{proof}

\end{document}